\documentclass[10pt]{article}
\usepackage[margin=0.80in]{geometry}
\usepackage{amsmath,amssymb,amsthm,mathtools}
\usepackage{microtype}
\usepackage{enumitem}
\usepackage{xcolor}
\usepackage[colorlinks=true,linkcolor=blue!45!black,citecolor=blue!45!black,urlcolor=blue!45!black]{hyperref}
\usepackage[capitalize,nameinlink]{cleveref}
\usepackage[utf8]{inputenc}
\usepackage[T1]{fontenc}
\usepackage{csquotes}
\usepackage{upgreek}

\usepackage[backend=biber, style=numeric-comp,sorting=none]{biblatex}

\usepackage{algorithm,algpseudocode}

\newtheorem{theorem}{Theorem}[section]
\newtheorem{proposition}[theorem]{Proposition}

\newtheorem{lemma}[theorem]{Lemma}
\newtheorem{corollary}[theorem]{Corollary}
\newtheorem{assumption}{Assumption}

\theoremstyle{definition}
\newtheorem{remark}[theorem]{Remark}
\newtheorem*{remark*}{Remark}

\newcommand{\Rbb}{\mathbb R}
\newcommand{\Cbb}{\mathbb C}
\newcommand{\Nbb}{\mathbb N}
\newcommand{\Ebb}{\mathbb E}
\newcommand{\CC}{\mathbb C}

\newcommand{\cA}{\mathcal A}
\newcommand{\cE}{\mathcal E}
\newcommand{\cL}{\mathcal L}
\newcommand{\cP}{\mathcal P}
\newcommand{\cN}{\mathcal N}
\newcommand{\cG}{\mathcal G}

\newcommand{\cD}{\mathcal D}

\newcommand{\cB}{\mathcal B}
\newcommand{\cK}{\mathcal K}
\newcommand{\cR}{\mathcal R}
\newcommand{\Trunc}{\mathsf T}

\newcommand{\supp}{\mathrm{supp}}
\newcommand{\ci}{\mathrm i}
\newcommand{\eps}{\varepsilon}
\newcommand{\poly}{\mathrm{poly}}
\newcommand{\Fone}{1}
\newcommand{\dd}{\mathrm d}
\DeclareMathOperator{\arsinh}{arsinh}

\DeclareMathOperator{\sech}{sech}
\DeclareMathOperator{\tr}{tr}
\DeclareMathOperator{\Tr}{Tr}
\newcommand{\abs}[1]{\left| #1 \right|}
\newcommand{\tmax}{t_\mathrm{max}}
\newcommand{\row}{\mathrm{row}}
\newcommand{\norm}[1]{\| #1 \|}

\newcommand{\be}{\begin{equation}}
\newcommand{\ee}{\end{equation}}
\newcommand{\F}{\mathrm{F}}
\newcommand{\HSproduct}[2]{\langle #1 , #2 \rangle}
\renewcommand{\gamma}{\upgamma}
\newcommand{\<}{\left\langle}
\renewcommand{\>}{\right\rangle}
\renewcommand{\P}{\mathsf{P}}

\definecolor{note-color}{RGB}{227,116,0}

\newcommand{\ZZ}{\mathbb{Z}}
\newcommand{\RR}{\mathbb{R}}
\newcommand{\NN}{\mathbb{N}}

\newcommand{\BB}{\cB}

 \newcommand{\g}{\text{g}}

\title{Fast convergence of propagation algorithms for open quantum systems}

\usepackage{authblk}
\author[1]{Giorgio~Facelli}
\author[1]{Hamza~Fawzi}
\author[2]{Omar~Fawzi}
\author[3]{Cambyse Rouz\'{e}}
\author[2]{Sam Slezak}
\author[4]{Daniel Stilck Fran\c{c}a}
\affil[1]{Department of Applied Mathematics and Theoretical Physics, University of Cambridge, United Kingdom}
\affil[2]{Inria, ENS Lyon, UCBL, LIP, F-69342 Lyon Cedex 07, France}
\affil[3]{Inria, Télécom Paris - LTCI, Institut Polytechnique de Paris, France}
\affil[4]{Department of Mathematical Sciences, University of Copenhagen, Denmark}

\date{August 2026}

\begin{document}
\maketitle

\begin{abstract}
We establish general conditions for the rapid convergence of propagation algorithms for computing expectation values of local observables in quantum many-body systems, covering both spin and fermionic systems. Our results apply to evolutions that are sufficiently contractive, providing a general mechanism by which contractivity controls the growth generated by local interactions and enables efficient classical simulation.

As a first application, we consider noisy time evolution generated by local Hamiltonians on arbitrary interaction graphs. We show that the dynamics can be efficiently simulated when the noise strength $\lambda$ is sufficiently large compared to the degree of the interaction graph.
More generally, for systems with an interaction strength $u$, our bounds determine a time horizon $t_{\text{max}}$ as a function of the interaction strength $u$, the noise strength $\lambda$, and the interaction structure below which the algorithm is efficient.
In the noiseless limit, $\lambda=0$, 
our analysis extends the efficiently simulable time scale from $\log(1/u)$ proved in Facelli, Fawzi, and Fawzi (2026) to $t_{\text{max}} \sim 1/u$ matching the recent improvement by Zhao, Marvian and Tong (2026).

As a second application, we apply the same framework to the computation of local observables in Gibbs states of local Hamiltonians. We consider a pseudo-Lindbladian whose steady state is the Gibbs state and show that, at sufficiently high temperature, its propagation algorithm converges rapidly.  In particular, for weakly-interacting systems with interaction strength $u$, our result applies up to inverse temperatures $\beta\sim\log(1/u)$.
\end{abstract}

\section{Introduction}

Understanding the dynamics and equilibrium properties of quantum many-body systems is a central challenge in quantum physics. The simulation of closed-system Hamiltonian dynamics has been a major focus of quantum computation since the foundational work of Feynman and Lloyd~\cite{feynman1982simulating,lloyd1996universal}. Here, we focus on open quantum systems, where dissipation plays a central role in the dynamics and the approach to equilibrium.

Efficient quantum algorithms have been developed for simulating open-system dynamics over a prescribed time interval~\cite{childs2016efficient,cleve2016efficient,li2022simulating}. More recently, quantum Gibbs samplers have been constructed by designing efficiently implementable dissipative dynamics whose stationary state is the desired Gibbs state~\cite{chen2025efficient,ding2025efficient}. Efficient Gibbs-state preparation additionally requires these dynamics to converge rapidly, a property established for several classes of Hamiltonians and temperature regimes~\cite{rouze2026efficient,rouze2026optimal,ramkumar2024mixing,smid2025polynomial,tong2025fast,bergamaschi2026fastmixingquantumspin,bergamaschi2026fastmixingalltoallquantum}.

Classical methods for computing thermal properties have a long history, including quantum Monte Carlo~\cite{blankenbecler1981monte,sandvik1999stochastic} and finite-temperature tensor-network methods~\cite{verstraete2004matrix,feiguin2005finite}. Rigorous algorithmic results include the efficient approximation of one-dimensional Gibbs states~\cite{kuwahara2021improved} and their local thermal observables~\cite{alhambra2021locally}, as well as algorithms with provable guarantees for computing thermal quantities~\cite{fawzi2023subpolynomial,fawzi2024certified,zlokapa2026rigorous}, approximating partition functions~\cite{harrow2020classical,chen2025convergence,crosson2025classical,mann2026efficient}, and sampling from the measurement distributions of Gibbs states~\cite{bakshi2024hightemperature,yin2023polynomialtime,ramkumar2026high}, under various structural and temperature assumptions. These equilibrium calculations need not simulate the underlying relaxation dynamics. Establishing the scope of classical algorithms for both dynamics and equilibrium properties is therefore essential for studying quantum many-body systems and identifying where quantum computers may offer a computational advantage.

A natural approach to classical simulation works in the Heisenberg picture, expanding observables in a basis of Pauli or Majorana strings. \emph{Propagation algorithms}, developed for approximating quantum-circuit outputs~\cite{rall2019simulation,aharonov2023polynomial,beguvsic2025simulating,schuster2025polynomial,angrisani2025classically,angrisani2026simulating,miller2025simulation}, evolve these expansions by alternating short steps of the dynamics with truncations that keep the representation computationally manageable. Despite their practical success~\cite{begusic2024fast,rudolph2023classical,begusic2025realtime,danna2026majorana}, obtaining rigorous bounds on the accumulated truncation error remains challenging. In particular, coherent interactions can generate strings of increasing degree, enlarging the space of operators needed to represent the evolved observable. In open systems, dissipation can counteract this growth by suppressing high-degree components. Understanding when this competition permits efficient classical simulation requires quantitative bounds on both the complexity of the retained expansion and the error introduced by truncation.

For Hamiltonian dynamics, Ref.~\cite{facelli2026fast}, by a subset of the authors, established that propagation algorithms can efficiently construct accurate approximations whenever the evolved observable remains well approximated by low-degree operators throughout the evolution. It further showed that this condition holds for weakly interacting Hamiltonians over a time window that grows logarithmically with the inverse interaction strength. These results connect the existence of low-degree approximations to the performance of an explicit classical algorithm.

In this work, we develop a contraction-based analysis of propagation algorithms for open quantum systems, with particular emphasis on weakly interacting Hamiltonians. For Hamiltonian dynamics subject to noise, we quantify how interaction strength and dissipation determine the time interval over which low-degree propagation remains accurate. We also construct an evolution generated by a pseudo-Lindbladian that allows the same approach to compute local thermal expectation values in weakly interacting systems.

\subsection{Main results}

For concreteness, we present our main results for fermionic systems with $N$ Majorana operators $\gamma_1,\ldots,\gamma_N$. The general framework, which also covers spin systems, is developed in Section~\ref{sec:norms-shared-setup} and the subsequent analysis. Every observable admits an expansion
\[
A=\sum_{S\subseteq[N]} a_S\gamma_S,
\]
where $\gamma_S$ denotes the self-adjoint Majorana monomial associated with $S$, and its degree is $|S|$. We write $\Trunc_\ell$ for the truncation that discards all monomials of degree greater than $\ell$. The resulting expansion contains at most $\sum_{k=0}^{\ell}\binom{N}{k}$ coefficients and can therefore be stored efficiently when $\ell$ is small.

Given an evolution $\cP_t=e^{t\cL}$ and an efficiently computable short-time approximation $\widetilde{\cP}_h$, a propagation algorithm approximates $\cP_t A$ by iterating
\begin{equation}
\label{eq:intro-propagation-algo}
A_0=A,\qquad
A_{j+1}=\Trunc_\ell\widetilde{\cP}_h(A_j).
\end{equation}
Our goal is to identify conditions under which a low-degree cutoff suffices to control the accumulated error, and to quantify the resulting computational cost.

\paragraph{Weighted contraction}
The central ingredient of our analysis is contraction in norms that assign exponentially larger weights to higher-degree monomials. Define the degree operator by $\cN(\gamma_S)=|S|\gamma_S$, so that $e^{a\cN}$ multiplies the coefficient of a degree-$k$ monomial by $e^{ak}$. In its simplest form, the contraction estimate reads
\begin{equation}
\label{eq:intro-weighted-contraction}
\|e^{a\cN}\cP_t B\|
\leq
\|e^{a\cN}B\|,
\end{equation}
for some $a>0$ and a suitable norm of the Majorana coefficients. Because high-degree components carry a large weight, this estimate forces their unweighted contribution to be small. More precisely, for an initial observable $A$ of degree at most $d$ and a cutoff $\ell\geq d$, it gives
\[
\|(I-\Trunc_\ell)\cP_t A\|
\leq
e^{-a\ell}\|e^{a\cN}\cP_t A\|
\leq
e^{-a(\ell-d)}\|A\|.
\]
Thus the truncation error decreases exponentially with the cutoff. Combining this tail bound with stability estimates and a bound on the short-time approximation error yields rigorous guarantees for the full propagation algorithm. We establish the required contraction using the Frobenius norm for dissipative Hamiltonian dynamics and weighted coefficient $\ell_1$ norms for thermal expectation values. When dissipation is weak or absent, a time-dependent exponential weight allows us to control degree growth over a finite time interval.

\paragraph{Weakly interacting dynamics under dissipation.}
Our first application concerns the evolution
\begin{equation}
\label{eq:dissipative-evolution}
A(t)=e^{t\cL}A,
\qquad
\cL(A)=\ci[H_0+uV,A]-\lambda\cN(A),
\end{equation}
where $H_0$ is quadratic, $V$ is a quartic interaction with bounded coefficients, and $u,\lambda\geq0$ control the interaction and dissipation strengths. We assume that $H=H_0+uV$ is $\Delta$-sparse: each Majorana mode occurs in at most $\Delta$ interaction terms. No geometric locality assumption is imposed on the interaction graph.

The three terms in the generator have distinct effects on the Majorana expansion. Evolution under the quadratic Hamiltonian $H_0$ preserves degree, the interaction $uV$ can increase it, and the noise suppresses a degree-$k$ monomial at rate $\lambda k$. Our contraction estimate quantifies this competition. When $\lambda$ is sufficiently large compared with $\Delta u$, a fixed exponential weight gives a truncation-tail bound uniform in time. For weaker dissipation, allowing the weight to decrease with time yields a finite simulation horizon. In particular, in the noiseless limit this horizon scales as $1/(\Delta u)$, extending the logarithmic dependence on $1/u$ established in Ref.~\cite{facelli2026fast} and matching the scaling in Ref.~\cite{zhao2026efficient}.

\begin{theorem}[Simulation of fermionic dissipative dynamics, informal version of Corollary~\ref{cor:weak-dissipative-dynamics-fermionic}]
Consider a system of $N$ Majorana modes with $\Delta$-sparse Hamiltonian $H=H_0+uV$ where $H_0$ is quadratic and $V$ is quartic, where $u \leq 1$. Consider the evolution defined by Eq.~\eqref{eq:dissipative-evolution} where $A$ is an observable of fixed degree $d$. Define
\[
\lambda_c(u)=2\Delta u\sinh(1)
\]
and
\[
\tmax(u,\lambda)=
\begin{cases}
\displaystyle
\frac{1}{8\Delta u\cosh(1)}
\log\!\left(1+\frac{\Delta u}{\lambda_c(u)-\lambda}\right),
& \lambda<\lambda_c(u),\\[1.2ex]
\infty,
& \lambda\geq\lambda_c(u).
\end{cases}
\]
For any finite $0\leq t\leq\tmax(u,\lambda)$ and $0<\eps<1$, the propagation algorithm produces a degree-$\ell$ approximation $\widetilde A$ satisfying
\[
\|A(t)-\widetilde A\|_{\F}
\leq \eps\|A\|_{\F},
\qquad
\ell=d+O\!\left(\log\frac{1+t}{\eps}\right),
\]
in time $N^{O(\log((1+t)/\eps))}$, where $\|\cdot\|_{\F}$ is the normalized Frobenius norm.
\end{theorem}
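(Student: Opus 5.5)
The plan is to derive the theorem from a time-dependent weighted contraction estimate in the Frobenius norm, and then plug it into a generic error analysis of the propagation scheme \eqref{eq:intro-propagation-algo}.

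\textbf{Step 1: differential inequality for the weighted norm.} Fix a weight $a(s)$ that is nonincreasing in $s$, and set
\[
f(s)=\|e^{a(s)\cN}e^{s\cL}B\|_{\F}^2 .
\]
The derivative of $f$ splits into three contributions.
\begin{itemize}
\item The $H_0$ term commutes with $\cN$, since $\ci[H_0,\cdot]$ preserves degree. It is skew-adjoint in the Hilbert--Schmidt inner product, so it contributes nothing.
\item The noise term contributes $-2\lambda\langle \cN\rangle$, and the changing weight contributes $2a'(s)\langle\cN\rangle$. Here $\langle\cN\rangle$ denotes the $\cN$-expectation of the weighted vector.
\item The interaction term requires the conjugated operator $e^{a\cN}\ci u[V,\cdot]e^{-a\cN}$. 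A quartic term commuted with $\gamma_S$ changes the degree by $\pm2$ or $0$. Its skew-adjoint part cancels. The remaining Hermitian part is $u\sinh(2a)$ times the degree-changing components, up to the convention on whether the shift is $\pm 2$ or $\pm1$ per mode.
\end{itemize}
By $\Delta$-sparsity, each monomial of degree $k$ is mapped to at most $O(\Delta k)$ monomials with bounded coefficients. A Schur test therefore bounds this Hermitian part by $c\,\Delta u\sinh(2a)\,\cN$ in quadratic form. Collecting the three contributions gives
\[
\tfrac{d}{ds}f\le 2\bigl(a'(s)-\lambda+2\Delta u\sinh(2a(s))\bigr)\langle\cN\rangle .
\]
With the constants normalised so that $a=1/2$ yields $2\Delta u\sinh(1)=\lambda_c$, we proceed as follows.
\begin{itemize}
\item If $\lambda\ge\lambda_c$, take the constant weight $a=1/2$. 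The right-hand side is $\le0$, which gives \eqref{eq:intro-weighted-contraction} for all times.
\item If $\lambda<\lambda_c$, solve the ODE $a'=\lambda-2\Delta u\sinh(2a)$ backwards from $a(0)=1/2$ down to a final weight $a(t)=a_*>0$. The separable integral, using $\sinh(2a)\le \sinh(1)+2\cosh(1)(a-1/2)$ on $[0,1/2]$, produces the logarithmic formula for $\tmax$ with the $8\Delta u\cosh(1)$ prefactor. Taking $a_*$ to be a constant fraction (say $1/4$) is what fixes the $\log(1+\Delta u/(\lambda_c-\lambda))$ expression.
\end{itemize}

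\textbf{Step 2: tail bound.} The weighted contraction gives, for all $s\le t\le\tmax$,
\[
\|(I-\Trunc_\ell)e^{s\cL}B\|_{\F}\le e^{-a_*(\ell-\deg B)}\|B\|_{\F}.
\]
The algorithm, however, applies truncations at intermediate steps, so I will not compare with the exact trajectory only at the end. Instead I use a telescoping decomposition:
\[
A(t)-A_m=\sum_j e^{(t-(j+1)h)\cL}\bigl(e^{h\cL}A_j-A_{j+1}\bigr).
\]
Each summand splits into a truncation error and a short-time approximation error. $e^{s\cL}$ is a Frobenius contraction, since $\cL$ is the sum of a skew part and $-\lambda\cN\le0$. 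So it suffices to bound each local error by $\eps/m$.

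The issue is that $A_j$ has degree $\le\ell$, not $d$, so the naive tail bound degrades. I will handle this by running the weighted norm along the algorithm itself. Truncation only decreases the weighted norm, and $\widetilde\cP_h$ satisfies the weighted contraction up to an $O(h^2)$ error. Hence $\|e^{a(jh)\cN}A_j\|_{\F}\le e^{d/2}(1+O(h^2))^j\|A\|_{\F}$, and the truncation discards at most $e^{-a_*\ell}$ times this quantity.

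\textbf{Step 3: short-time approximation and cost.} For $\widetilde\cP_h$ I take a Taylor expansion of order $K=O(\log(m/\eps))$ of $e^{h\cL}$ restricted to degree $\le\ell$ inputs. On degree-$\le\ell$ operators, $\|\cL\|=O(\lambda\ell+\Delta\ell)$ in Frobenius norm. Choosing $h=\Theta(1/(\ell(\lambda+\Delta)))$ and $m=t/h$ then makes the local error $\le\eps/m$.

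The cost has two parts. Storage is $N^{O(\ell)}$. Each step involves $\mathrm{poly}(N^{\ell},K)$ operations, because $\cL$ maps a monomial to $O(\Delta\ell)$ monomials. Setting $\ell=d+O(a_*^{-1}\log((1+t)/\eps))$ gives total time $N^{O(\log((1+t)/\eps))}$.

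\textbf{Main obstacle.} I expect Step 1 to be the hardest part: the Schur-test bound on the conjugated interaction, with the exact constants $\sinh$, $\cosh$, and $\Delta$ matching $\lambda_c$ and $\tmax$. The delicate point is combining the $e^{\pm 2a}$ factors from raising and lowering degree into a single $\sinh$ after cancelling the skew part. I would first verify this on a single quartic term acting on a degree-$k$ monomial, and then sum over the at most $\Delta k$ terms touching $S$.
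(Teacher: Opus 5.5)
Your architecture matches the paper's. You use the time-dependent weighted Frobenius contraction (Theorem~\ref{thm:dissipative-weighted-contraction}). The Hermitian part of $e^{a\cN}\ci u[V,\cdot]e^{-a\cN}$ is $u\sinh(2a)$ on the degree-$\pm2$ blocks and is bounded by $2\Delta u\sinh(2a)\cN$ via the Schur/Cauchy--Schwarz argument of Lemma~\ref{lemma:master-commutator-bound}. For that Schur test you also need the \emph{incoming} count: at most $\Delta\deg S$ pairs $(P,Q)$ have $PQ\propto S$, not only the outgoing count. Then you take a constant weight $1/2$ for $\lambda\ge\lambda_c$ and a decreasing weight otherwise.

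\textbf{The step producing $\tmax$ fails as written.} The inequality $\sinh(2a)\le\sinh(1)+2\cosh(1)(a-\tfrac12)$ on $[0,\tfrac12]$ is false. Its right-hand side is the tangent line at $a=\tfrac12$ of the convex function $\sinh(2a)$, which lies \emph{above} that line; at $a=0$ your inequality reads $0\le -e^{-1}$. Bounding the derivative by $2\cosh(1)$ gives the reverse inequality, i.e.\ a lower bound on the speed $2\Delta u\sinh(2a)-\lambda$. That is useless for lower-bounding the hitting time
\[
t_*=\int_{1/4}^{1/2}\frac{\dd a}{2\Delta u\sinh(2a)-\lambda}.
\]
The fix is to use $\cosh\ge1$ instead: $\sinh(2a)\le\sinh(1)-(1-2a)$. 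Hence $2\Delta u\sinh(2a)-\lambda\le(\lambda_c-\lambda)-2\Delta u(1-2a)$, and
\[
t_*\ge\frac{1}{4\Delta u}\log\frac{\lambda_c-\lambda}{\lambda_c-\lambda-\Delta u}\ge\frac{1}{4\Delta u}\log\Bigl(1+\frac{\Delta u}{\lambda_c-\lambda}\Bigr)\ge\tmax(u,\lambda).
\]
Moreover $t_*=\infty$ once $\lambda\ge 2\Delta u\sinh(1/2)$, since the fixed point $\tfrac12\arsinh(\lambda/(2\Delta u))$ is then at least $1/4$. So your window $a:\tfrac12\to\tfrac14$ works and even improves on the horizon of Theorem~\ref{thm:weak-dissipative-trunc}. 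That theorem starts at $a(0)=2/\delta=1$, stops at $1/\delta=\tfrac12$, and bounds the explicit Riccati solution. Your price is a tail exponent $1/4$ instead of $1/2$. Your Step~2 tail bound also omits the prefactor $e^{(a(0)-a_*)\deg B}$ (the paper's $\kappa$); this is harmless for fixed $d$, and you reinstate $e^{d/2}$ later.

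\textbf{Step 2 is a different, heavier route.} The paper never needs tail control on the numerical iterates. In Theorem~\ref{thm:abstract} it writes $A_j-\cP_{jh}A=\Trunc_\ell\bigl(\widetilde\cP_hA_{j-1}-\cP_{jh}A\bigr)-(I-\Trunc_\ell)\cP_{jh}A$. Truncation is thus charged only on the exact trajectory of the degree-$d$ input, which removes the degradation you worry about. It then needs only unweighted contractivity of $\cP_h$ and $\Trunc_\ell$ and the unweighted Euler error (Lemma~\ref{lemma:euler-truncation-error}).

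Carrying the decreasing weight along the numerical trajectory is also sound. However, it requires the local error of $\widetilde\cP_h$ in the \emph{weighted} norm, which you assert without proof. It follows by bounding the conjugated generator $e^{a\cN}\cL e^{-a\cN}$ on degree-$\le r$ operators by $\lambda r+2\Delta e^{2a}\sqrt{r(r+2)}$, using Lemma~\ref{lemma:master-commutator-bound} with $\phi(k)=e^{ak}$, together with $e^{a(s+h)\cN}\le e^{a(s)\cN}$ for decreasing $a$. With your order-$K$ Taylor step and $h=\Theta(1/\ell)$, this local error is $\le\eps/m$ rather than the $O(h^2)$ you wrote.

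What your Taylor step buys is $m$ polylogarithmic in $1/\eps$ rather than the paper's $\poly(t,1/\eps)$ Euler steps. Either fits the claimed runtime, but the paper's decomposition makes the weighted local-error estimate unnecessary.
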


\paragraph{Thermal expectations via a contractive evolution.}
Our second application concerns the computation of
\[
\Tr(A\rho_{\beta,H}),
\qquad
\rho_{\beta,H}=\frac{e^{-\beta H}}{\Tr(e^{-\beta H})},
\qquad
H=H_0+uV.
\]
We approach this equilibrium problem through an auxiliary evolution generated by a pseudo-Lindbladian $\cK^\beta$, inspired by Refs.~\cite{bergamaschi2026fastmixingquantumspin,bergamaschi2026fastmixingalltoallquantum}. Its semigroup $\cP_t=e^{t\cK^\beta}$ is unital, and its adjoint fixes $\rho_{\beta,H}$. Consequently, if $\cP_t A$ converges to a scalar multiple of the identity, that scalar must be $\Tr(A\rho_{\beta,H})$. The algorithm estimates this scalar by propagating $A$ and extracting the coefficient of the identity. The auxiliary evolution need not be completely positive; its role is to provide a computational route to the desired expectation value.

Here, contraction serves two purposes: it controls the high-degree tail and establishes exponential decay of the nonidentity component, proving that
\[
\cP_t A\longrightarrow \Tr(A\rho_{\beta,H})I.
\]
To obtain this contraction, we decompose $\cK^\beta$ into a reference generator associated with the quadratic Hamiltonian $H_0$ and a perturbation due to $uV$. The reference evolution contracts a suitable exponentially weighted coefficient $\ell_1$ norm. A Dyson expansion bounds the perturbation, showing that contraction persists when the interaction is sufficiently weak relative to the temperature. Together, these estimates yield a deterministic propagation algorithm for thermal expectation values.

\begin{theorem}[Approximation of fermionic thermal expectations, informal version of Theorem~\ref{thm:gibbs-observables}]
\label{thm:intro-gibbs}
Consider a system of $N$ Majorana modes with a $\Delta$-sparse Hamiltonian $H=H_0+uV$, where $V$ is a quartic interaction, and $H_0$ is a quadratic Hamiltonian geometrically local on a $D$-dimensional lattice. Let $A$ be a Majorana monomial of even degree $d$.\footnote{The restriction to even observables is for convenience only.}
There is an absolute constant $c>0$ and a constant $C>0$, depending only on $D$, such that whenever
\[
u \Delta e^{C\beta}(e^{2\beta}-1)\leq c,
\]
a deterministic propagation algorithm returns $\widetilde\mu$ satisfying
\[
\left|\widetilde\mu-\Tr(A\rho_{\beta,H})\right|
\leq\eps\|A\|_\infty
\]
in time
\[
(N+1)^{O(d+\log(1/\eps))}\poly(d,1/\eps).
\]
\end{theorem}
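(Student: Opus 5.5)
The plan is to run the propagation algorithm \eqref{eq:intro-propagation-algo} on the semigroup $\cP_t=e^{t\cK^\beta}$ generated by the pseudo-Lindbladian described above, up to a time $T=O(\log(1/\eps))$. The estimate $\widetilde\mu$ will be the identity coefficient of the final iterate $A_{T/h}$. Everything will be controlled in the exponentially weighted coefficient $\ell_1$ norm $\|B\|_{a}:=\sum_S e^{a|S|}|b_S|$. Three features make it convenient: it dominates $\|B\|_\infty$; it dominates the absolute value of the identity coefficient; and truncation satisfies $\|(I-\Trunc_\ell)B\|_{1}\le e^{-a\ell}\|B\|_a$.

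The core estimate I would aim for is
\[
\|e^{t\cK^\beta}B\|_{a}\le e^{-\kappa t}\|B\|_{a}
\]
for all $B$ with vanishing identity coefficient, with $\kappa>0$. I would prove it in two steps.

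\emph{Reference generator.} First treat $\cK^\beta_0$, associated with the quadratic $H_0$. Its Heisenberg action is built from the free Gibbs state, so conjugation by $e^{\pm\beta H_0/2}$ acts linearly on Majorana operators. In the mode basis diagonalizing $H_0$, the generator should act as a sum of single-mode depolarizing-like terms with a rate bounded below by $1$. Geometric locality on the $D$-dimensional lattice then lets one transfer this back to the local Majorana basis at a cost $e^{C\beta}$. This transfer uses Lieb--Robinson/Combes--Thomas decay of $e^{\beta H_0}\gamma_i e^{-\beta H_0}=\sum_j R_{ij}\gamma_j$, with $\sum_j|R_{ij}|e^{a|i-j|}\le e^{C\beta}$. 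The target is contraction of the nonidentity sector in $\|\cdot\|_a$ at rate about $e^{-C\beta}$, with $a$ of order one.

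\emph{Perturbation.} Next write $\cK^\beta=\cK^\beta_0+u\cK_1$. Expanding in $u$, with the $\beta$-dependence through $e^{\beta H}$, gives a perturbation whose $\|\cdot\|_a\to\|\cdot\|_a$ norm is at most $u\Delta e^{C\beta}(e^{2\beta}-1)$ times constants. Here $\Delta$ accounts for sparsity, $e^{2a}$ accounts for degree increase by at most $2$ per quartic term, and $(e^{2\beta}-1)$ reflects that $\cK_1$ vanishes at $\beta=0$. A Dyson expansion of $e^{t(\cK_0+u\cK_1)}$ around the reference semigroup then gives decay at rate $\kappa=e^{-C\beta}-O(u\Delta e^{C'\beta}(e^{2\beta}-1))>0$ under the stated hypothesis, with constants absorbed into $c$ and $C$. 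By unitality and the invariance of $\rho_{\beta,H}$ under the adjoint, the identity component is conserved, so $\cP_tA\to\Tr(A\rho_{\beta,H})I$ with error at most $e^{-\kappa t}e^{ad}$.

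\emph{Algorithm and error.} The algorithm uses a Taylor or low-order splitting approximation $\widetilde\cP_h$. Its local error is $O(h^2)$ in $\|\cdot\|_a$ on degree-bounded inputs, computed exactly by applying the sparse generator. The accumulated error is handled by a telescoping argument. Contraction of $\cP_h$ on the nonidentity sector keeps errors from amplifying, while the identity coefficient is only perturbed additively. Each truncation contributes at most $e^{-a\ell}\|A_j\|_a\le e^{-a\ell}e^{ad}$ in $\ell_1$. Summing over $T/h$ steps gives a total error of $(T/h)\,(e^{-a(\ell-d)}+O(h^2))e^{ad}+e^{ad-\kappa T}$. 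Choosing $\ell=d+O(\log(1/\eps))$, $h=\poly(\eps)$ and $T=O(d+\log(1/\eps))$ makes this at most $\eps$, with the running time as stated.

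\emph{Main obstacle.} I expect the hardest step to be the reference contraction in an $\ell_1$ norm that is weighted by degree and stated in the local Majorana basis. Diagonalizing $H_0$ mixes modes, and $\ell_1$ norms are not invariant under orthogonal mode rotations. My first attempt would be to avoid global diagonalization: write $\cK_0^\beta$ as a sum over modes of terms involving the thermally dressed operators $e^{\beta H_0/4}\gamma_i e^{-\beta H_0/4}$. I would then show directly that the diagonal part of each term removes weight at rate $\gtrsim 1$, while the off-diagonal leakage is bounded by the locality tail of order $e^{C\beta}-1$. If that fails, the fallback is to absorb it into the same perturbative budget.
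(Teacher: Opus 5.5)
There is a genuine gap in the reference-contraction step. With the weight $a$ kept of order one, the estimate you aim for is false, not merely unproved. Your local-basis fallback is essentially what the paper's choice $A_i=\gamma_i$, $B_i=\sech(\tfrac\beta2\cG_0)(\gamma_i)$ gives exactly: $\cK^\beta(H_0)=-\cN-\cD_0$, where $\cD_0\gamma_S=\sum_{i\neq j\in S}G_{ij}\phi^S_{ij}\gamma_{S\setminus\{i,j\}}$ lowers the degree by two, with $G=\tanh(\tfrac\beta2 L)$ and $|\phi^S_{ij}|=1$. For this generator, the right derivative at $t=0$ of $\|e^{t\cK^\beta(H_0)}\gamma_S\|_a$ equals $e^{a|S|}\bigl(-|S|+e^{-2a}\sum_{i\neq j\in S}|G_{ij}|\bigr)$. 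This is positive for large $S$ (a big ball, for translation-invariant hopping) as soon as $\|G\|_{\row}>e^{2a}$. Your own estimate of the leakage is $e^{C\beta}-1$. In general $\|G\|_{\row}$ is controlled only by $\F(\beta)$, and for gapless hopping it genuinely grows with $\beta$. So at fixed $a$ there is no contraction at any rate, $e^{-C\beta}$ included, once $\beta$ exceeds a constant. The leakage is present at $u=0$, while the theorem must reach $\beta\sim\log(1/u)$, so it cannot be absorbed into the perturbative budget. The normal-mode detour does not help either, because local and normal-mode $\ell_1$ norms differ by $N$-dependent factors. The missing idea is to let the weight grow with $\beta$. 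Proposition~\ref{prop:decayH01} gives \eqref{eq:ineq-reference} with $\lambda=1-\F(\beta)e^{-2a}$, so for $a\ge a_0=\tfrac12\log(2\F(\beta))$ the reference rate is $\tfrac12$, uniformly in $\beta$. The price is a factor $e^{a(1+\delta)}\le e\,(2\F(\beta))^{(1+\delta)/2}$ for $a\in[a_0,a_0+v]$, paid in the smallness condition \eqref{eq:assumptionub} on $u$; on the lattice it is absorbed into $e^{C\beta}$.

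Two further steps also need repair. First, $\cK^\beta-\cK^\beta(H_0)$ is not bounded from $\|\cdot\|_a$ to $\|\cdot\|_a$ uniformly in $N$, since $\cK^\beta\gamma_S$ contains $|S|$ terms. The correct bound is $\cN$-relative, $\|\cR X\|_a\le\eta_a\|\cN X\|_a$ (Theorem~\ref{thm:bound-residual-K-KM}). It comes from the Dyson series of the imaginary-time evolution $\tau_{\pm\beta/2}(B_i)$ of the degree-one $B_i$, not of the semigroup. Consequently your rate $\kappa=e^{-C\beta}-O(\cdot)$ is not justified. Such a perturbation can only be beaten by a reference decay that is itself proportional to $\|\cN X\|_a$, and the two are combined through the differential inequality \eqref{eq:combining-decays}. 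Second, you measure truncation errors in $\ell_1$, but $\cP_t$ is not $\ell_1$-contractive, so these errors cannot simply be summed. What makes Theorem~\ref{thm:abstract} applicable is contraction for every weight in a window $[a_0,a_0+v]$ (Theorem~\ref{thm:gibbs1}). That gives the tail bound $\|(I-\Trunc_\ell)\cP_sA\|_{a_0}\le e^{-v(\ell-d)}\|A\|_{a_0}$ in the same norm in which $\cP_t$ contracts. Finally, the identity coefficient is not conserved and the nonidentity sector is not invariant. What is conserved is $\Tr(\rho_{\beta,H}\cP_tA)$, and the decay statement should concern $Y_t=(I-\Trunc_0)\cP_tA$, as in Proposition~\ref{prop:convergencespeed}.
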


For fixed $\Delta$, the condition in Theorem~\ref{thm:intro-gibbs} permits inverse temperatures of order $\log(1/u)$ as $u\to0$. Thus weak interactions allow the propagation algorithm to extend beyond a fixed high-temperature regime. 

Recent work has established rigorous quantum and classical algorithms for thermal computation in weakly interacting systems, at inverse temperatures that grow as the interaction strength decreases~\cite{smid2025polynomial,tong2025fast,mann2026efficient,chen2025convergence}. Using a cluster expansion, Ref.~\cite{mann2026efficient} gives a deterministic approximation scheme for partition functions with runtime $\poly(N,1/\eps)$ at fixed $\beta$ and bounded interaction degree and rank, permitting inverse temperatures of order \(\log(1/u)\). However, this algorithm requires the noninteracting part $H_0$ to be on-site, i.e., it does not allow hopping terms of the form $\gamma_{i} \gamma_{j}$ for unpaired $i,j$. Allowing hopping terms is nontrivial as diagonalizing $H_0$ generally requires a nonlocal change of basis, which can destroy the locality of the interaction term $V$.

Ref.~\cite{chen2025convergence} handles quadratic Hamiltonians with hopping through a cumulant expansion combined with tree-based sampling. It gives randomized algorithms for the log-partition function and local thermal observables with runtime $\poly(N,1/\eps)$ at fixed $\beta$. For a $D$-dimensional lattice, its sufficient condition allows $\beta \lesssim u^{-1/(D+2)}$. More generally this algorithm only requires summability of the interactions and the noninteracting Green's function.

Although our runtime guarantees are weaker than those of the algorithm based on cumulant expansion, we obtain a simple, deterministic propagation algorithm that accommodates hopping in $H_0$ and requires no geometric locality of $V$. The lattice assumption in Theorem~1.2 can also be replaced by summability conditions similar to the summability of the noninteracting Green's function from~\cite{chen2025convergence} for bounded-degree graphs (see Assumption \ref{assumption:diagonal-hamiltonian-fermion}). We believe that the runtime guarantees we obtain can be improved by analyzing more selective truncation rules that exploit coefficient magnitudes or spatial structure. Establishing guarantees for such refinements is a natural direction for future work.

\section{Setup}
\label{sec:algorithm}

\subsection{Algebra and monomial basis}
\label{sec:norms-shared-setup}

To treat the Pauli and Majorana case simultaneously, we consider a finite-dimensional $C^*$-algebra $\mathcal A$ spanned by a basis of
monomials $\BB$ equipped with a notion of degree. Throughout, $N$ denotes the maximal
degree (the number of qubits or modes), and $\tr$ is the normalized trace on $\mathcal A$, so
$\tr(I)=1$. For $r\in\{0,\dots,N\}$ set
\be
\label{wth:eq:degree}
  \cA_r=\operatorname{span}\{P\in\BB:\deg(P)=r\},\qquad
  \cA_{\le\ell}=\bigoplus_{r\le\ell}\cA_r
\ee
For a nonzero $X$, $\deg(X)$ is the largest degree of a monomial with nonzero coefficient
in $X$. The projection $\Trunc_\ell$ keeps only the monomials of degree at
most $\ell$, i.e.\ it is the projection onto $\cA_{\le\ell}$.

We assume $\BB$ satisfies the following: $\BB \cap \cA_0=\{I\}$, and for all $P,Q\in\BB$,
\begin{enumerate}[label=(\roman*)]
  \item $P^2=I$ and $P^*=P$, and $\tr(P)=0$ whenever $\deg(P)>0$;
  \item $\deg(PQ)\le\deg(P)+\deg(Q)$;
  \item either $[P,Q]=0$, or $[P,Q]=2PQ=2f_{P,Q}S_{P,Q}$ for some $S_{P,Q}\in\BB$ and $f_{P,Q} \in \CC$ with $|f_{P,Q}|=1$;
  \item $|\{P\in\BB:\deg(P)\le r\}|\le (N+1)^{O(r)}$ for every $r\in\{0,\dots,N\}$.
\end{enumerate}
\medskip

\noindent The main examples of such bases are described below.
\paragraph{Pauli basis}
Consider the algebra $\cA$ of spins with system size $N$. The basis of the algebra is given by
\be
\label{eq:pauli-algebra}
    \BB = \biggr\{ P\in(\Cbb^{2\times 2})^{\otimes N}: P = P_1 \otimes \dots \otimes P_N \text{ s.t. } P_j\in \{I,X,Y,Z\} \biggr\}
\ee
where $\{I,X,Y,Z\}$ denote the usual $2\times 2$ Pauli operators along with the identity. For any string $P\in\BB$, its degree is then the number of sites where $P$ acts non-trivially, i.e.
\[
    \deg(P) =\abs{\supp(P)}\,, \qquad \supp(P) := \{i\in [N]: P_i \neq I\}\,,
\]
where $P_i$ above is intended as the $i$-th tensor factor of $P$ following the notation in \eqref{eq:pauli-algebra}. 

\paragraph{Majorana basis}
Consider now the case of a fermionic algebra $\cA$ on $N/2$ sites, where $N$ is assumed even. Let $\{\gamma_1,\dots,\gamma_N\}$ denote the corresponding self-adjoint Majorana operators satisfying the following canonical anticommutation relations
\be\label{eq:majorana-operators}
    \{\gamma_i,\gamma_j\} =2\delta_{ij} \,,\qquad i,j\in\{1,\dots,N\}.
\ee
We consider the operator basis of self-adjoint Majorana strings defined via the Majorana operators. Specifically, let
\be
\label{eq:majorana-basis}
    \BB := \{\gamma_X\}_{X\subseteq[N]} \,, \quad \gamma_X = \ci^{r_X} \prod_{j\in X} \gamma_j \,, 
\ee
where $r_X = \abs{X}(\abs{X}-1)/2 \mod 2$ ensures that $\gamma_X^* = \gamma_X$. For any $X\subseteq[N]$, we then define $\deg(\gamma_X) \equiv \abs{X}$.

\subsection{The degree operator and depolarising noise}
\label{sec:degree-operator}

We define the \emph{degree operator} $\cN : \cA \to \cA$ by
\begin{equation}
\label{eq:degree-operator}
\cN(P)=\deg(P)\,P \qquad \forall P \in \BB.
\end{equation}
The semigroup $e^{-t\cN}$ generated by $-\cN$ is the quantum depolarising noise, a.k.a. the quantum Ornstein-Uhlenbeck semigroup. For spin systems, it is given by
\be
\label{eq:depolarizing-noise-definition}
    \cD(X) := \frac{1}{4} \sum_{i\in [N]}\sum_{\alpha\in \{x,y,z\}} (P_{i,\alpha}XP_{i,\alpha} - X) \,.
\ee
For any Pauli string $P\in\BB$ , it acts as $\cD(P)= -\deg(P) P = -\cN(P)$.

In the fermionic case, let $\Pi = \ci^{N/2} \gamma_1\cdots\gamma_N$ be the \emph{fermion parity} operator, a self-adjoint unitary operator. It takes this name because the adjoint action of $\Pi$ on an arbitrary string is given by $\Pi \gamma_X \Pi =(-1)^{\abs{X}}\gamma_X$. Given $j\in [N]$, denote $V_j=\ci \Pi\gamma_j$.  One can easily verify that each $V_j$ is a self-adjoint unitary.  For any $A\in\cA$, consider the infinite-temperature fermionic \emph{Ornstein-Uhlenbeck} generator (see e.g., \cite[Section 6.1]{carlen2017gradient})
\be
\label{eq:majorana-ornstein-uhlenbeck}
    \cD(A)=\frac{1}{2}\sum_{j=1}^N(V_jAV_j-A).
\ee
For any monomial $\gamma_X$, one has $V_j \gamma_X V_j=(-1)^{\mathbf1\{j\in X\}}\gamma_X$, hence $\cD(\gamma_X)=-|X|\gamma_X=-\cN\gamma_X$. Thus $\cD$ acts diagonally on the Majorana basis with an eigenvalue corresponding to the degree.
\begin{remark}
\label{remark:pseudo-noise-fermi}
Note that if we define the noise operator
\[
\widetilde{\cD}(A) = \frac{1}{2} \sum_{j=1}^N (\gamma_j A \gamma_j - A)
\]
then one can verify that $\widetilde{\cD}(\gamma_X) = -|X| \gamma_X$ when $|X|$ is even, and otherwise $\widetilde{\cD}(\gamma_X) = -(N-|X|) \gamma_X$. The parity operator in the definition of $V_j = \ci \Pi \gamma_j$ ensures that on odd-degree monomials $\cD$ matches with $-\cN$.
\end{remark}

\subsection{Hamiltonians}
\label{sec:hamiltonians}

We consider Hermitian Hamiltonians $H$ of the form
\begin{equation}
  H=H_0+uV
  \label{eq:int-non-int-decomp}
\end{equation}
where $H_0$ is \emph{non-interacting} and $V$ is an arbitrary interacting term. In the Pauli case, a non-interacting Hamiltonian is a linear combination of single-site Pauli operators, while in the Majorana case a non-interacting Hamiltonian is an arbitrary linear combination of quadratic monomials.

It turns out that the main algebraic property that underlies these two cases is that if $P \in \cA_r$ is of degree $r$, then $[H_0,P]$ is also of degree $r$. In other words, $[H_0,\cA_r] \subset \cA_r$ for every $r$. On the other hand, the interacting part $V$ will in general increase the degree via commutation, i.e., $[V,P]$ can be of degree $\deg(P)+\delta$ for some $\delta > 0$. For example, if $V$ is a quadratic Pauli Hamiltonian, then one can check that $\deg([V,P]) \leq \deg(P) + 1$, while if $V$ is a quartic Majorana Hamiltonian, then $\deg([V,P]) \leq \deg(P) + 2$.

The following formalizes our assumptions on the Hamiltonian $H=H_0 + uV$ in the abstract algebraic setting:
\begin{assumption}[Hamiltonians]
\label{assumption:hamiltonian}
We assume $H$ can be written in the form $H=H_0 + uV$ with $0 \leq u \leq 1$, where $H_0$ satisfies
\begin{equation}
\label{eq:H0degreeinvariant}
[H_0,\cA_r]\subseteq\cA_r\qquad\forall r\in\{0,\dots,N\}.
\end{equation}
Furthermore, the interaction $V$ is assumed to satisfy
\begin{equation}
[V,\cA_r] \subseteq \cA_{\leq r+\delta} \quad \forall r \in \{0,\ldots,N\}
\end{equation}
for some constant $\delta$. Additionally, given the basis expansion $H = \sum_{P \in \cB} h_P P$, we assume that
\begin{itemize}
  \item Bounded coefficients: $|h_P|\le1$ for all $P$.
  \item $\Delta$-sparsity: for every $Q\in\BB$,
  \begin{equation}
    \bigl|\{P\in\BB:\ h_P\ne0,\ [P,Q]\ne0\}\bigr|\le\Delta\deg(Q).
    \label{eq:Delta-sparsity-assumption} 
  \end{equation}
\end{itemize}
\end{assumption}

The $\Delta$-sparsity assumption in \eqref{eq:Delta-sparsity-assumption} is related to the degree of the interaction (hyper)graph. For example if $H$ is a  Majorana Hamiltonian $H = \sum_{S \subset [N]} h_S \gamma_S$, then a sufficient condition for \eqref{eq:Delta-sparsity-assumption} is that\footnote{Moreover a Hamiltonian in the Majorana basis is always assumed to be of even degree}
\begin{equation}
\max_{i \in \{1,\ldots,N\}} |\{ S : i \in S \text{ and } h_S \neq 0 \}| \leq \Delta,
\end{equation}
see e.g., \cite[Proposition 2.3]{facelli2026fast}.

\subsection{Norms}

To quantify the accuracy of an estimator $\widehat{A}$ with respect to some target observable $A$, one can use several norms on $\cA$. We define here the norms that will be used for the different applications, and describe their operational interpretation.

\paragraph{Frobenius norm} In Section \ref{sec:weak-interaction}, we work with the normalized Frobenius norm defined, for any $A \in \cA$, as
\[
\norm{A}_\F = \sqrt{\tr(A^* A)}.
\]
Recall that from Section \ref{sec:norms-shared-setup} for any monomial $P \in \cB$ we have $\norm{P}_{\F} = 1$ and $\tr(P) = 0$ for all $P \notin \cP_0$. If the observable $A$ has an expansion
\[
A = \sum_{P\in \cB} a_P P
\]
then the Frobenius norm is given by
\be
\label{eq:frobA}
\norm{A}_\F = \left(\sum_{P\in\cB} |a_P|^2\right)^{1/2}.
\ee
While there are multiple interpretations of the Frobenius norm distance between two observables $A$ and $\widehat{A}$, we mention here one that fits particularly well within our context, also employed in \cite{facelli2026fast,schuster2025polynomial}. Considering some ensemble of states $\cE :=\{\rho_k\}_k$, which is assumed to be of \emph{low-average}, i.e. for some constant $c > 0$ we have
\[
    \norm{ \dfrac{1}{\abs{\cE}}\sum_{k} \rho_k }_\infty = \dfrac{c}{\sqrt{\abs{\cB}}} \,,
\]
where $\|\cdot\|_{\infty}$ is the operator norm.
This condition is satisfied for example if $\cE$ is an ensemble of pure states forming an orthonormal basis, in which case $c=1$.
A standard calculation shows that (see \cite[Section 2.1]{facelli2026fast} for details), for any $X\in\cA$ we have
\[
    \Ebb_{\rho\sim\cE} \bigg[ \abs{\Tr(\rho X)}^2 \bigg] \leq c \norm{X}^2_{\F}
\]
where $\Tr$ here denotes the standard trace. Obtaining a bound on the Frobenius norm would then imply an upper bound on the root mean squared error of the expectation value of $X$ over the ensemble of states $\cE$.

\paragraph{Fourier-$1$ norm} In Section~\ref{sec:pseudo-lindbladian}, we work with a normalized version of the Fourier-$1$ norm. 
For $X\in\mathcal A$ let 
\[
X=\sum_{P\in\BB}x_P P,\quad\qquad x_P:=\tr(PX)
\]
for $\tr$ the normalized trace. Then define the norm as:
\[
\|X\|_{\Fone}:=\sum_{P\in\BB}|x_P|.
\]
Recalling that $\cN$ is the degree operator, note that for any $X \in \cA$,
\[
\|e^{b\cN}(X)\|_{\Fone}  = \sum_{P\in\BB}e^{b\deg(P)}| x_P|.
\]
Exponentially-weighted norms will play a crucial role in the analyses of Sections \ref{sec:weak-interaction} and \ref{sec:pseudo-lindbladian}.
We finish this section with the following two simple lemmas which will be useful for the analyses. 
\begin{lemma}[Submultiplicativity of the weighted Fourier-$1$ norm]
\label{wth:lem:weighted-product}
For any operators $X,Y$ and $b\geq0$,
\begin{equation}
 \|e^{b\cN}(XY)\|_{\Fone}
 \leq
 \|e^{b\cN}X\|_{\Fone}
 \|e^{b\cN}Y\|_{\Fone}.
 \label{wth:eq:weighted-product}
\end{equation}
\end{lemma}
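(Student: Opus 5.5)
The plan is to expand both operators in the monomial basis and reduce the claim to a monomial-level statement. Write $X=\sum_{P\in\BB}x_P P$ and $Y=\sum_{Q\in\BB}y_Q Q$. By bilinearity,
\[
XY=\sum_{P,Q\in\BB}x_Py_Q\,PQ .
\]
The key monomial fact is that $PQ=\phi_{P,Q}R_{P,Q}$ for a single monomial $R_{P,Q}\in\BB$ and a phase $|\phi_{P,Q}|=1$.

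I obtain this fact from the structural properties (i)--(iii). If $[P,Q]\neq0$, then (iii) gives $PQ=f_{P,Q}S_{P,Q}$ directly. If $[P,Q]=0$, then $PQ$ is self-adjoint, since $(PQ)^*=QP=PQ$, and it squares to $PQPQ=P^2Q^2=I$.

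For the commuting case I would rather not argue abstractly. Instead I check it for the two concrete bases. A product of Pauli strings is a phase times a Pauli string. A product of Majorana strings is $\pm\ci^{k}\gamma_{X\triangle Y}$, again a phase times a basis element. I will state this as a standing consequence of the basis structure, and I will remark that in the abstract setting it is implicitly assumed (the Fourier-$1$ norm setup requires it).

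Property (ii) then gives $\deg(R_{P,Q})\le\deg P+\deg Q$.

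Next I apply the triangle inequality to the coefficients of $XY$. Grouping terms by $R$, the coefficient of $R$ in $XY$ has modulus at most $\sum_{(P,Q):R_{P,Q}=R}|x_P||y_Q|$. Therefore
\[
\|e^{b\cN}(XY)\|_{\Fone}\le\sum_{P,Q}e^{b\deg R_{P,Q}}|x_P||y_Q|\le\sum_{P,Q}e^{b\deg P}e^{b\deg Q}|x_P||y_Q|.
\]
The second inequality uses $b\ge0$ together with the degree bound. The right-hand side factorizes as $\|e^{b\cN}X\|_{\Fone}\|e^{b\cN}Y\|_{\Fone}$, which is the claim.

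The only delicate point is the commuting case of the single-monomial product. The rest is routine triangle-inequality bookkeeping, which I will keep explicit, including the use of $b\ge0$.
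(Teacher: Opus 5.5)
Your proposal is correct and follows the paper's proof: expand both operators in the monomial basis, use that a product of two Pauli or Majorana monomials is a single monomial up to a phase with $\deg(PQ)\le\deg P+\deg Q$, then apply the triangle inequality together with $b\ge0$ and factorize. Like the paper, you justify the single-monomial fact by checking the two concrete bases rather than deriving it from properties (i)--(iii). You are right that (i)--(iii) alone do not obviously give it in the commuting case.
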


\begin{proof}
Write
\[
 X=\sum_{P\in\BB} x_P P,
 \qquad
 Y=\sum_{R\in\BB} y_R R.
\]
For both Pauli and Majorana monomials, $PR$ is another basis
monomial up to a phase and
\[
 \deg(PR)\leq\deg(P)+\deg(R).
\]
Hence, by the triangle inequality,
\[
 \begin{aligned}
 \|e^{b\cN}(XY)\|_{\Fone}
 &\leq
 \sum_{P,R}
 |x_P|\,|y_R|\,
 e^{b\deg(PR)}\\
 &\leq
 \sum_{P,R}
 |x_P|\,|y_R|\,
 e^{b\deg(P)}e^{b\deg(R)}\\
 &=
 \|e^{b\cN}X\|_{\Fone}
 \|e^{b\cN}Y\|_{\Fone}.
 \end{aligned}
\]
\end{proof}

\begin{lemma}
\label{lemma:inequalities}
Assume $H$ is a $\Delta$-sparse Hamiltonian (in the sense of \eqref{eq:Delta-sparsity-assumption}) satisfying
$|h_P|\leq1$ for every $P\in\BB$.
Then, for every $X\in\cA$,
\begin{equation}
 \|[H,X]\|_{\Fone}
 \leq2\Delta\deg(X)\|X\|_{\Fone}.
 \label{eq:bound[H,X]Fone}
\end{equation}
\end{lemma}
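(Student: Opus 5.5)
The plan is to reduce to single monomials by the triangle inequality, then bound the commutator of $H$ with one monomial using properties (i)--(iii) of the basis together with $\Delta$-sparsity.

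First, expand $X=\sum_{Q\in\BB} x_Q Q$. Every $Q$ with $x_Q\neq 0$ satisfies $\deg(Q)\le\deg(X)$, by the definition of $\deg(X)$ as the largest degree of a monomial with nonzero coefficient. Linearity of the commutator and the triangle inequality for $\|\cdot\|_{\Fone}$ give
\[
\|[H,X]\|_{\Fone}\le\sum_{Q}|x_Q|\,\|[H,Q]\|_{\Fone}.
\]
It therefore suffices to show $\|[H,Q]\|_{\Fone}\le 2\Delta\deg(Q)$ for each basis monomial $Q$. Summing then yields $\|[H,X]\|_{\Fone}\le 2\Delta\deg(X)\sum_Q|x_Q| = 2\Delta\deg(X)\|X\|_{\Fone}$.

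Next, fix $Q\in\BB$ and write $[H,Q]=\sum_{P} h_P[P,Q]$. By property (iii), each term is either $0$ or equal to $2f_{P,Q}S_{P,Q}$, where $S_{P,Q}\in\BB$ and $|f_{P,Q}|=1$. In the second case its Fourier-$1$ norm is exactly $2$, since the coefficient of $S_{P,Q}$ is $\tr(S_{P,Q}\cdot 2f_{P,Q}S_{P,Q})=2f_{P,Q}$ by $S^2=I$ and $\tr(I)=1$. Using the triangle inequality again and $|h_P|\le 1$,
\[
\|[H,Q]\|_{\Fone}\le 2\,\bigl|\{P:\ h_P\neq0,\ [P,Q]\neq0\}\bigr|\le 2\Delta\deg(Q),
\]
where the last step is exactly the sparsity assumption \eqref{eq:Delta-sparsity-assumption}. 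Different $P$ may give the same $S_{P,Q}$, but this only helps, because the triangle inequality already accounts for possible cancellation.

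There is no real obstacle here. The only point needing care is that the sparsity count is stated for monomials $Q$ rather than for general $X$, which is why the reduction to monomials comes first, together with the observation that $\deg(Q)\le\deg(X)$ on the support of $X$. The degenerate case $\deg(X)=0$ is consistent: then $X$ is a multiple of $I$, so $[H,X]=0$.
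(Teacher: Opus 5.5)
Your proposal is correct and follows essentially the same argument as the paper: bound $\|[H,Q]\|_{\Fone}\le 2\Delta\deg(Q)$ for each monomial using property (iii), $|h_P|\le 1$ and $\Delta$-sparsity, then extend to general $X$ by the triangle inequality and $\deg(Q)\le\deg(X)$ on the support of $X$. The only difference is that you verify explicitly that $\|2f_{P,Q}S_{P,Q}\|_{\Fone}=2$, which the paper simply cites from point (iii).
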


\begin{proof}
For any string $Q\in\BB$,
\begin{equation}
 [H,Q]
 =
 \sum_{\substack{P\in\BB\\
       h_P \neq0,\ [P,Q]\neq0}}
 h_P [P,Q].
\end{equation}
By $\Delta$-sparsity, there are at most
$\Delta\deg(Q)$ nonzero summands. Since
$\|[P,Q]\|_{\Fone}\leq2$ (see point (iii) in Section \ref{sec:norms-shared-setup}) and
$|\widehat H(P)|\leq1$, we obtain
\[
 \|[H,Q]\|_{\Fone}\leq2\Delta\deg(Q).
\]
Expanding $X$ in the basis and applying the triangle
inequality gives
\[
 \begin{aligned}
 \|[H,X]\|_{\Fone}
 \leq
 \sum_{Q\in\BB}
 |x_Q|\,\|[H,Q]\|_{\Fone}
 \leq
 2\Delta\sum_{Q\in\BB}
 \deg(Q)|x_Q|
 \leq
 2\Delta\deg(X)\|X\|_{\Fone}.
 \end{aligned}
\]
This proves Eq.~\eqref{eq:bound[H,X]Fone}.
\end{proof}

\subsection{Propagation algorithm and abstract analysis}
\label{sec:abstract}

Given $\ell\in \Nbb$, let $\Trunc_\ell:\cA\to\cA_{\le\ell}$ be the degree-$\ell$ truncation and let $(\cP_t)_{t\ge0}$ be the exact Heisenberg semigroup whose evolution we wish to approximate.  For a short step $h$ we choose an efficiently computable approximation $\widetilde{\cP}_h$.

For $t=mh$, a \emph{propagation} algorithm is simply
\be
 \label{eq:algorithm}
 A_0=A,
 \qquad
 A_{j+1}=\Trunc_\ell\widetilde{\cP}_h(A_j),
 \qquad j=0,\ldots,m-1.
\ee
The output $A_m$ is our approximation to $\cP_t(A)$.
The purpose of truncation is to keep the algorithm running efficiently. Indeed, an operator of degree at most $\ell$ has at most $N^{O(\ell)}$ coefficients. Thus the algorithm is useful when $\ell$ remains small.

The following theorem gives a simple analysis of the error accumulated by a propagation algorithm for an abstract norm $\|\cdot\|$.

\begin{theorem}[Propagation theorem]
\label{thm:abstract}
Let $\|\cdot\|$ be a norm on $\cA$, and consider the propagation algorithm \eqref{eq:algorithm}.
Assume that $\Trunc_\ell$ and $\cP_s$ are contractive in $\|\cdot\|$, i.e., for any $X \in \cA$
\begin{equation}
\|\Trunc_{\ell} X\| \leq \|X\| \text{ and }
\|\cP_s X\| \leq \|X\| \quad \forall s \geq 0.
\end{equation}
Assume also that for every $X\in\cA_{\le\ell}$,
\begin{align}
 \|\widetilde{\cP}_hX-\cP_hX\|
 &\le \delta_\ell(h)\|X\|
 \label{eq:local-error}
\end{align}
for some function $\delta_{\ell}(h)$. Let $A \in \cA_{\leq \ell}$.
Then the output of the iterations \eqref{eq:algorithm} satisfies
\be
 \label{eq:total-error}
 \|A_m-\cP_{mh} A\|
 \le
 m\left[\delta_\ell(h)(1+\delta_{\ell}(h))^m+\eta_\ell(t;A)\right]\|A\|
\ee
where
\begin{equation}
 \eta_\ell(t;A)
 :=\sup_{0\le s\le t}
 \frac{\|(I-\Trunc_\ell)\cP_s(A)\|}{\|A\|}
 \label{eq:eta}
\end{equation}
is the exact truncation tail up to time $t$ of $\cP_s(A)$.
\end{theorem}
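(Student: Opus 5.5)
We argue by a telescoping decomposition of the global error into local errors. Write $B_j:=\cP_{jh}A$ for the exact trajectory and $E_j:=A_j-B_j$. Then
\[
E_{j+1}=\Trunc_\ell\widetilde{\cP}_hA_j-\cP_hB_j
=\Trunc_\ell(\widetilde{\cP}_h-\cP_h)A_j+\Trunc_\ell\cP_h(A_j-B_j)-(I-\Trunc_\ell)\cP_hB_j .
\]
Since $\cP_hB_j=B_{j+1}=\cP_{(j+1)h}A$, the last term is the exact truncation tail at time $(j+1)h\le t$. Its norm is therefore at most $\eta_\ell(t;A)\|A\|$ by \eqref{eq:eta}.

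For the first term, note that each $A_j$ lies in $\cA_{\le\ell}$: this holds for $A_0=A$ by assumption, and for $j\ge1$ because $A_j$ is the output of $\Trunc_\ell$. Hence \eqref{eq:local-error} applies, and contractivity of $\Trunc_\ell$ gives a bound of $\delta_\ell(h)\|A_j\|$. For the middle term, contractivity of $\Trunc_\ell$ and of $\cP_h$ gives $\|E_j\|$. Altogether,
\[
\|E_{j+1}\|\le\|E_j\|+\delta_\ell(h)\|A_j\|+\eta_\ell(t;A)\|A\|.
\]

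Next we bound $\|A_j\|$ a priori, directly from the iteration. We have $\|A_{j+1}\|\le\|\widetilde{\cP}_hA_j\|\le\|\cP_hA_j\|+\|(\widetilde{\cP}_h-\cP_h)A_j\|\le(1+\delta_\ell(h))\|A_j\|$. By induction, $\|A_j\|\le(1+\delta_\ell(h))^j\|A\|\le(1+\delta_\ell(h))^m\|A\|$ for all $j\le m$. This step is what produces the factor $(1+\delta_\ell)^m$ in \eqref{eq:total-error}, and it avoids having to bound $\|A_j\|$ through $\|E_j\|$.

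Finally, since $E_0=0$, summing the recursion over $j=0,\dots,m-1$ gives
\[
\|E_m\|\le\sum_{j=0}^{m-1}\bigl[\delta_\ell(h)(1+\delta_\ell(h))^m+\eta_\ell(t;A)\bigr]\|A\|,
\]
which is exactly \eqref{eq:total-error}.

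The main point requiring care is the bookkeeping of the tail term. One should split off $(I-\Trunc_\ell)$ applied to the exact trajectory, not to the approximate one, so that the tail is controlled by $\eta_\ell$, a quantity of the exact semigroup alone. One must also check that every time $(j+1)h$ at which it is evaluated lies in $[0,t]$, which holds since $j+1\le m$.
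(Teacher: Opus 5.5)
Your proof is correct and follows essentially the same route as the paper: you split off $(I-\Trunc_\ell)$ applied to the exact trajectory $B_{j+1}=\cP_{(j+1)h}A$ to get the same one-step error recursion, use the same a priori bound $\|A_j\|\le(1+\delta_\ell(h))^j\|A\|$, and sum from $E_0=0$. Your explicit check that every $A_j$ lies in $\cA_{\le\ell}$, which is needed to invoke \eqref{eq:local-error}, is a detail the paper uses only implicitly.
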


\begin{proof}
Let $B_j=\cP_{jh}A$.  Contractivity of $\Trunc_\ell$ and of $\cP_h$ gives
\begin{equation}
\label{eq:generic-recurrence}
\begin{aligned}
 \|A_j-B_j\| &= \|\Trunc_{\ell} \widetilde{\cP}_h A_{j-1} - \cP_h B_{j-1}\|\\
 &\le \|\widetilde{\cP}_hA_{j-1}-\cP_hB_{j-1}\|
      +\|(I-\Trunc_\ell)B_j\|\\
 &\le \delta_\ell(h)\|A_{j-1}\|
      +\|A_{j-1}-B_{j-1}\|
      +\eta_\ell(t;A)\|A\|.
\end{aligned}
\end{equation}
Note that for any $X \in \cA_{\leq \ell}$
\[
\|\Trunc_{\ell} \widetilde{\cP}_h X\| \leq \|\widetilde{\cP}_h X\| \leq \|\cP_h X\| + \|(\cP_h - \widetilde{\cP}_h) X\| \leq (1+\delta_{\ell}(h)) \|X\|.
\]
As such, $\|A_j\|\le(1+\delta_\ell(h))^j\|A\|$.  Iterating the recurrence \eqref{eq:generic-recurrence} from $A_0=B_0=A$ yields
\[
 \|A_m-B_m\|
 \le \delta_\ell(h)\sum_{j=0}^{m-1}(1+\delta_\ell(h))^j\|A\|
      +m\eta_\ell(t;A)\|A\|,
\]
and the geometric sum is at most $m(1+\delta_\ell(h))^m$.
\end{proof}

The theorem leads to a particularly simple sufficient condition for efficiency of the algorithm.

\begin{corollary}[Exponential tail criterion]
\label{cor:criterion}
Suppose $A$ has degree at most $d$.  Assume, for some polynomial $q$ and positive constants $a,C$ independent of $N,t,\ell$,
\begin{equation}
 \delta_\ell(h)\le q(\ell)h^2,
 \qquad
 \eta_\ell(t;A)\le C e^{-a(\ell-d)}
 \label{eq:criterion}
\end{equation}
for all sufficiently small $h$, where $\eta_{\ell}$ is as defined in \eqref{eq:eta}.  Then for every $t$ and $0<\eps<1$ one can choose
\begin{equation}
 \ell=d+O\!\left(a^{-1}\log\frac{1+t}{\eps}\right),
 \qquad
 m=\poly\!\left(t,\frac1\eps,q(\ell)\right),
 \label{eq:criterion-choice}
\end{equation}
so that the output $A_m$ of the algorithm \eqref{eq:algorithm} with $h=t/m$ satisfies
\[
\|A_m - \cP_t A\| \leq \eps \|A\|.
\]
If a degree-$\ell$ observable can be stored and updated in $N^{O(\ell)}$ operations, then the total running time is
\begin{equation}
 N^{O\left(d+a^{-1}\log((1+t)/\eps)\right)}\poly(t,1/\eps).
 \label{eq:criterion-runtime}
\end{equation}
\end{corollary}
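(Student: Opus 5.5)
The plan is to plug the two assumed bounds \eqref{eq:criterion} into the Propagation theorem (Theorem~\ref{thm:abstract}), choose $\ell$ so that the truncation contribution is at most $\eps/2$, and then choose $m$ so that the local-error contribution is also at most $\eps/2$. With $h=t/m$, \eqref{eq:total-error} gives
\[
\|A_m-\cP_tA\|\le \Bigl[m\,\delta_\ell(h)(1+\delta_\ell(h))^m + m\,C e^{-a(\ell-d)}\Bigr]\|A\|.
\]
Since $A$ has degree at most $d\le\ell$, we have $A\in\cA_{\le\ell}$, so the theorem applies. A priori the tail term carries a factor $m$, which depends on $\ell$ through $q(\ell)$. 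The two choices are therefore coupled, and this coupling is the main point to handle.

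\textbf{Choice of $m$.} Given $\ell$, bound $m\delta_\ell(h)\le m q(\ell)t^2/m^2=q(\ell)t^2/m$. Taking
\[
m\ge \max\bigl(4q(\ell)t^2/\eps,\; q(\ell)t^2\bigr)
\]
gives $m\delta_\ell(h)\le \eps/4$. It also gives $m\delta_\ell(h)\le 1$, so $(1+\delta_\ell(h))^m\le e^{m\delta_\ell(h)}\le e$. The first term is then at most $e\eps/4$, so it suffices to ask $m\ge 8q(\ell)t^2/\eps$ (with $m\ge 1$). We may take $m$ to be the smallest integer with $m\ge \max(1,\,8q(\ell)t^2/\eps)$. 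This also ensures $h$ is small enough for the hypothesis $\delta_\ell(h)\le q(\ell)h^2$, after enlarging $m$ by a constant if necessary. With this choice, $m=\poly(t,1/\eps,q(\ell))$.

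\textbf{Choice of $\ell$.} We need $m\,Ce^{-a(\ell-d)}\le\eps/2$. Since $q$ is a polynomial, $m\lesssim (1+t)^2 q(\ell)/\eps$ with $q(\ell)\le c_q(1+\ell)^{k}$. The requirement is then
\[
a(\ell-d)\ge \log\frac{2C c_q(1+t)^2}{\eps^2}+k\log(1+\ell).
\]
The $\log(1+\ell)$ term on the right grows sublinearly in $\ell$. Writing $\ell=d+L$, we have $\log(1+d+L)\le\log(1+d)+\log(1+L)$. The $\log(1+L)$ part is absorbed by choosing $L=O(a^{-1}\log((1+t)/\eps))$ with a large enough constant; the dependence on $a$ inside the log is harmless for fixed $a$. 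The $\log(1+d)$ part is treated as a constant, since $d$ is fixed, or equivalently absorbed into the $O(\cdot)$. This gives $\ell=d+O(a^{-1}\log\frac{1+t}{\eps})$, and summing the two contributions yields total error at most $\eps\|A\|$.

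\textbf{Runtime.} The algorithm performs $m$ steps, each applying $\widetilde\cP_h$ followed by $\Trunc_\ell$ on an element of $\cA_{\le\ell}$. By assumption each step costs $N^{O(\ell)}$. The total cost is
\[
m\,N^{O(\ell)}=N^{O(d+a^{-1}\log((1+t)/\eps))}\poly(t,1/\eps),
\]
using that $q(\ell)$ is polylogarithmic in $(1+t)/\eps$ and can be absorbed. The only delicate step is the self-consistent choice of $\ell$ against the factor $m$, which is polynomial in $\ell$. It is resolved by the fact that $\log$ of a polynomial in $\ell$ is dominated by the linear term $a\ell$.
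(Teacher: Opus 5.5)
Your proposal is correct and follows the paper's argument. You apply Theorem~\ref{thm:abstract} with $h=t/m$, choose $m$ so that $q(\ell)t^2/m$ is small enough to make the short-time term $O(\eps)$, then take $\ell-d\ge a^{-1}\log(2Cm/\eps)$ and absorb the resulting $\log q(\ell)=O(\log(1+\ell))$ into the constant; you are just more explicit than the paper about this self-consistent choice of $\ell$, the $\log(1+d)$ term, and keeping $h$ in the regime where $\delta_\ell(h)\le q(\ell)h^2$ holds.
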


\begin{proof}
Take $h=t/m$.  Choose $m$ large enough so that
\begin{align}\label{eq:h-m-criteria}
 m q(\ell)h^2=\frac{q(\ell)t^2}{m}\le \min\{1,\eps/(4e)\}.
\end{align}
Then $(1+q(\ell)h^2)^m\le e$ and the accumulated short-time error in \eqref{eq:total-error} is at most $\eps\|A\|/4$.  The truncation contribution is at most
$Cm e^{-a(\ell-d)}\|A\|$.  Requiring this to be at most $\eps\|A\|/2$ gives
\[
 \ell-d\ge a^{-1}\log\frac{2Cm}{\eps}.
\]
Because $m$ is polynomial in $t,1/\eps$ and in the polynomial $q(\ell)$, the extra $\log q(\ell)=O(\log(1+\ell))$ is absorbed by enlarging the constant in \eqref{eq:criterion-choice}.  This concludes the proof.
\end{proof}

\section{Hamiltonian evolution with dissipation}
\label{sec:weak-interaction}

In this section, we study propagation algorithms to simulate the evolution generated by a Lindbladian of the form
\begin{equation}
 \label{eq:dissipative-generator}
 \cL(A)=\ci[H_0+uV,A]-\lambda\cN A,
 \qquad u,\lambda\geq0,
\end{equation}
where $H=H_0+uV$ satisfies Assumption \ref{assumption:hamiltonian} and $-\lambda \cN$ is the generator of a depolarizing noise (see Section \ref{sec:degree-operator}). When $\lambda=0$ this corresponds to exact unitary evolution, which was the subject of \cite{facelli2026fast}. We call $\cP_t = e^{t\cL}$ the semigroup generated by $\cL$.

\subsection{Contraction}
\label{sec:dissipative-dynamics-truncation}

As highlighted in Section \ref{sec:abstract}, the main ingredient to obtain guarantees on the propagation algorithm is to prove contraction of $\cP_t$ for an exponentially-weighted norm. In this section we work with the Frobenius norm, and the exponentially-weighted norm of an observable $X = \sum_{P \in \cB} x_P P \in \cA$ can be written as
\[
\|e^{a\cN} X\|_{\F} = \left(\sum_{P \in \cB} (e^{a\deg(P)} |x_P|)^2\right)^{1/2}.
\]
Proving contraction of $\cP_t$ for such an exponentially weighted norm amounts to showing that
\begin{equation}
\label{eq:contractioneaN}
\frac{\dd}{\dd t} \|e^{a \cN} \cP_t X\|^2_{\F} \leq 0.
\end{equation}
The goal of the next proposition is to identify conditions on $a$ such that \eqref{eq:contractioneaN} holds. To allow finer control on this contractivity, we in fact deal with the case where the weight parameter $a > 0$ can be time-varying.

\begin{theorem}[Contraction of the weighted Frobenius norm]
\label{thm:dissipative-weighted-contraction}
    Let $\cL$ be the Lindbladian \eqref{eq:dissipative-generator}. Assume $a: \RR^+\to \Rbb^+$ is a positive function satisfying for all $t \geq 0$
    \be
    \label{eq:ode-for-a(t)}
        a'(t) \leq \lambda - 2\Delta\abs{u}\sinh(a(t)\delta).
    \ee
    Then, for any $X\in \cA$, we have
    \be
    \label{eq:weighted-contraction}
        \frac{\dd}{\dd t} \|e^{a(t)\cN} \cP_t X\|_{\F}^2\le0.
    \ee
\end{theorem}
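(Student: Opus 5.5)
The plan is to differentiate the weighted norm directly and conjugate the generator by the weight. Set $Y_t=\cP_tX$, $W_t=e^{a(t)\cN}$ and $Z_t=W_tY_t$. We work with the Hilbert--Schmidt inner product $\langle A,B\rangle=\tr(A^*B)$, for which $\cB$ is orthonormal and $\cN$ is self-adjoint. Since $\partial_tW_t=a'(t)\cN W_t$ and $\partial_tY_t=\cL Y_t$, we get
\[
\frac{\dd}{\dd t}\|Z_t\|_\F^2=2a'(t)\langle Z_t,\cN Z_t\rangle+2\,\mathrm{Re}\,\langle Z_t,W_t\cL W_t^{-1}Z_t\rangle .
\]
We then split $W\cL W^{-1}$ into its three pieces.

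By \eqref{eq:H0degreeinvariant}, $\ci[H_0,\cdot]$ preserves each $\cA_r$, so it commutes with $W$. It is also skew-adjoint for the Hilbert--Schmidt inner product, since $H_0$ is Hermitian and $\tr(A^*[H_0,B])=-\tr([H_0,A]^*B)$. Its contribution to the real part therefore vanishes. The dissipative piece $-\lambda\cN$ commutes with $W$ and contributes $-2\lambda\langle Z,\cN Z\rangle$.

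It remains to bound the interaction term. Let $\cK=\ci[V,\cdot]$, which is also skew-adjoint, so that $(W\cK W^{-1})^*=-W^{-1}\cK W$. Hence
\[
2u\,\mathrm{Re}\langle Z,W\cK W^{-1}Z\rangle=u\langle Z,(W\cK W^{-1}-W^{-1}\cK W)Z\rangle .
\]
In the monomial basis, the matrix entry of this operator between $P$ (degree $p$) and $Q$ (degree $q$) is $2\sinh(a(q-p))\,\cK_{QP}$.

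Next we control the degree differences. Assumption \ref{assumption:hamiltonian} gives $q\le p+\delta$ whenever $\cK_{QP}\neq0$. Skew-adjointness gives $|\cK_{PQ}|=|\cK_{QP}|$, so also $p\le q+\delta$. Hence $|\sinh(a(q-p))|\le\sinh(a\delta)$ on the support of $\cK$. We then use the Schur-type estimate $|z_Q||z_P|\le\frac12(|z_P|^2+|z_Q|^2)$, which gives
\[
\bigl|u\langle Z,(W\cK W^{-1}-W^{-1}\cK W)Z\rangle\bigr|\le u\sinh(a\delta)\Bigl(\sum_P|z_P|^2\sum_Q|\cK_{QP}|+\sum_Q|z_Q|^2\sum_P|\cK_{QP}|\Bigr).
\]

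The step needing the most care is the row and column sum bound for $\cK$, which is what produces the factor $\Delta$. We take the coefficients of $V$ to be bounded by $1$; this is the normalization the factor $|u|$ in \eqref{eq:ode-for-a(t)} presupposes.

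For the column sums, fix $P$. Each term $R$ of $V$ with $[R,P]\neq0$ contributes at most $2$ in $\ell_1$, by (iii) of Section \ref{sec:norms-shared-setup}. Applying \eqref{eq:Delta-sparsity-assumption} with $Q=P$ (the terms of $V$ are among those of $H$) gives $\sum_Q|\cK_{QP}|\le2\Delta\deg(P)$, exactly as in Lemma \ref{lemma:inequalities}.

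For the row sums, fix $Q$. A nonzero entry $\cK_{QP}$ requires some term $R$ with $Q\propto RP$ and $\{R,P\}=0$. Then $P\propto RQ$, and $R$ also fails to commute with $Q$. So each $R$ determines $P$, and the number of such $R$ is at most $\Delta\deg(Q)$, giving $\sum_P|\cK_{QP}|\le2\Delta\deg(Q)$. If several $R$ produce the same pair $(P,Q)$, the triangle inequality on the entry handles it.

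Combining, the interaction term is at most $4u\Delta\sinh(a\delta)\langle Z,\cN Z\rangle$. Therefore
\[
\frac{\dd}{\dd t}\|Z_t\|_\F^2\le2\bigl(a'(t)-\lambda+2\Delta|u|\sinh(a(t)\delta)\bigr)\langle Z_t,\cN Z_t\rangle .
\]
Since $\langle Z,\cN Z\rangle\ge0$, this is $\le0$ under \eqref{eq:ode-for-a(t)}, which proves \eqref{eq:weighted-contraction}.
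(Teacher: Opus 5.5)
Your proposal is correct and follows essentially the paper's proof: the same conjugation $W_t\cL W_t^{-1}$, the same cancellation of the $H_0$ part, the same reduction of the interaction to the $\sinh$-filtered commutator, and the same sparsity counting (including the injection $R\mapsto P$ for the row sums), with your Schur/AM--GM bound on row and column sums in place of the paper's Cauchy--Schwarz over the edge set in Lemma~\ref{lemma:master-commutator-bound}. One small sign slip: the identity should read $\tr(A^*[H_0,B])=\tr([H_0,A]^*B)$, since $[H_0,\cdot]$ itself is self-adjoint and it is the factor $\ci$ that makes $\ci[H_0,\cdot]$ skew-adjoint, so your conclusion stands.
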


Before proving the theorem, note that $a(t) \equiv 0$ satisfies \eqref{eq:ode-for-a(t)}, which shows that $\cP_t$ is a contraction for the (unweighted) Frobenius norm. Setting $a(t) = a$ constant, we see that $\cP_t$ is a contraction for the $a$-weighted Frobenius norm when $2 \Delta |u| \sinh(a\delta) \leq \lambda$.

\begin{proof}
Let $W_t=e^{a(t)\cN}$, $X_t=e^{t\cL}X$, and $Y_t=W_t X_t$. To prove \eqref{eq:weighted-contraction}, we show that the derivative of $\|Y_t\|^2_F$ is nonpositive.

First of all, notice that
\[
\begin{aligned}
\frac12 \frac{\dd}{\dd t} \norm{Y_t}^2_{\F} 
&=\frac12 \left(\HSproduct{\frac{\dd}{\dd t} Y_t}{Y_t} + \HSproduct{Y_t}{\frac{\dd}{\dd t} Y_t} \right),
\end{aligned}
\]
and
\[
\begin{aligned}
(\dd/\dd t) Y_t = a'(t) \cN Y_t + W_t \cL X_t = a'(t) \cN Y_t + W_t \cL W_t^{-1} Y_t.
\end{aligned}
\]
Hence we can write
\be
\label{eq:ddtnormYt^2}
\begin{aligned}
\frac12 \frac{\dd}{\dd t} \norm{Y_t}^2_F 
&= a'(t)\HSproduct{Y_t}{\cN Y_t} + \frac12 \bigg(\HSproduct{Y_t}{W_t \cL W_t^{-1} Y_t} + \HSproduct{W_t \cL W_t^{-1} Y_t}{ Y_t}\bigg)  \\
&= a'(t)\HSproduct{Y_t}{\cN Y_t} + \HSproduct{ Y_t}{ \frac{\cL_t +\cL_t^\dagger}{2} Y_t}
\end{aligned}
\ee
where $\cL_t = W_t\cL W_t^{-1}$. Recall that
\[
\cL = \ci [H_0 + uV, \cdot] - \lambda \cN = \cL_0 + u \cG - \lambda \cN
\]
where we let $\cL_0 = \ci [H_0, \cdot]$, and $\cG = \ci [V,\cdot]$. We have the following:
\begin{itemize}
    \item Since $\cN$ commutes with $W_t = e^{a(t) \cN}$, we have $\cL_t = W_t (\cL_0 + u \cG) W_t^{-1} - \lambda \cN$.
    \item Furthermore, note that $\cL_0$ is degree-preserving, i.e., if $P$ is a degree-$r$ monomial, then $\cL_0(P)$ is homogeneous of degree $r$. This means that $\cL_0$ commutes with $\cN$ and thus with $W_t$. Hence we can write $\cL_t = u W_t \cG W_t^{-1} + \cL_0 - \lambda \cN$. Note additionally that $\cL_0$ is skew-Hermitian, i.e., that $\cL_0 + \cL_0^{\dagger} = 0$, and so the contribution of $\cL_0$ vanishes in  $\cL_t + \cL_t^{\dagger}$.
\end{itemize}

From the above, we have
\be
\label{eq:identityddtYt2}
\frac12 \frac{\dd}{\dd t} \norm{Y_t}^2_F = (a'(t) - \lambda) \HSproduct{Y_t}{\cN Y_t} + u\HSproduct{Y_t}{\frac{\cG_t + \cG_t^{\dagger}}{2} Y_t}
\ee
where $\cG_t = W_t \cG W_t^{-1}$. We now use the following lemma.

\begin{lemma}
$|\HSproduct{Y_t}{\frac{\cG_t + \cG_t^{\dagger}}{2} Y_t}| \leq 2 \Delta \sinh(a(t) \delta) \HSproduct{Y_t}{\cN Y_t}$.
\end{lemma}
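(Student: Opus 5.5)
The idea is to exploit that $\cG=\ci[V,\cdot]$ is skew-Hermitian with respect to the Hilbert--Schmidt inner product $\HSproduct{X}{Y}=\tr(X^*Y)$, since $V$ is Hermitian. Then the symmetrized conjugated generator only sees the \emph{difference} of the weights on the two ends of each matrix element, producing a $\sinh$ factor. The remaining sum is controlled by a Schur-test argument using $\Delta$-sparsity, exactly as in Lemma~\ref{lemma:inequalities}.

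\emph{Step 1: matrix elements.} Work in the orthonormal basis $\BB$ (orthonormal for $\HSproduct{\cdot}{\cdot}$ by property (i)). Write $g_{PQ}:=\HSproduct{P}{\cG Q}$ and abbreviate $p=\deg P$, $q=\deg Q$, $a=a(t)$. By skewness, $\cG^\dagger=-\cG$, so
\[
\cG_t+\cG_t^\dagger=W_t\cG W_t^{-1}-W_t^{-1}\cG W_t ,
\]
since $W_t$ is self-adjoint. Its $(P,Q)$ entry is $g_{PQ}\bigl(e^{a(p-q)}-e^{-a(p-q)}\bigr)$. Hence
\[
\HSproduct{P}{\tfrac{\cG_t+\cG_t^\dagger}{2}Q}=g_{PQ}\sinh\bigl(a(p-q)\bigr).
\]

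\emph{Step 2: degree window.} By Assumption~\ref{assumption:hamiltonian}, $g_{PQ}\neq0$ forces $p\le q+\delta$. Skewness gives $|g_{PQ}|=|g_{QP}|$, so also $q\le p+\delta$. Thus $|p-q|\le\delta$ on the support of $g$. Since $\sinh$ is odd and increasing and $a\ge0$, each nonzero entry satisfies $|\sinh(a(p-q))|\le\sinh(a\delta)$.

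\emph{Step 3: Schur test.} Writing $Y_t=\sum_P y_P P$, we get
\[
\Bigl|\HSproduct{Y_t}{\tfrac{\cG_t+\cG_t^\dagger}{2}Y_t}\Bigr|\le\sinh(a\delta)\sum_{P,Q}|g_{PQ}|\,|y_P|\,|y_Q| .
\]
Next I bound the column sums. $\cG Q=\ci\sum_R v_R[R,Q]$, and by property (iii) each nonvanishing commutator equals $2$ times a unimodular phase times a single monomial. Using $|v_R|\le1$ and $\Delta$-sparsity, at most $\Delta q$ terms survive, so
\[
\sum_P|g_{PQ}|\le\|\cG Q\|_{\Fone}\le 2\Delta q .
\]
This is the content of Lemma~\ref{lemma:inequalities}, with the sparsity and coefficient bounds read for the interaction $V$, as the factoring of $u$ in \eqref{eq:identityddtYt2} intends. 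By $|g_{PQ}|=|g_{QP}|$ the row sums satisfy $\sum_Q|g_{PQ}|\le2\Delta p$. Applying $|y_P||y_Q|\le\tfrac12(|y_P|^2+|y_Q|^2)$ and using the row sums for the first half and the column sums for the second gives
\[
\sum_{P,Q}|g_{PQ}||y_P||y_Q|\le 2\Delta\sum_P\deg(P)|y_P|^2=2\Delta\HSproduct{Y_t}{\cN Y_t}.
\]
This yields the claim.

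The main point requiring care is Step 2: the two-sided degree window is what makes $\sinh(a\delta)$ (rather than an unbounded $e^{a\cdot}$ factor) appear. It relies on skewness of $\cG$ to transfer the one-sided bound $[V,\cA_r]\subseteq\cA_{\le r+\delta}$ into a lower bound as well. Everything else is bookkeeping.
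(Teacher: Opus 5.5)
Your proof is correct and follows essentially the paper's route: the paper likewise writes $\frac{\cG_t+\cG_t^\dagger}{2}$ as the filtered operator $\cG_\phi$ with $\phi(k)=\sinh(k\,a(t))$ and then invokes Lemma~\ref{lemma:master-commutator-bound}, whose proof is the same Cauchy--Schwarz (Schur) test over input and output degrees that you carry out directly for the quadratic form. The only difference is cosmetic: you get the output-side count and the reverse inequality $q\le p+\delta$ from skew-adjointness of $\cG$ ($|g_{PQ}|=|g_{QP}|$), whereas the paper uses the injection $(P,Q)\mapsto P$ and takes the two-sided window $|\deg S(P,Q)-\deg Q|\le\delta$ for granted.
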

\begin{proof}
We use Lemma \ref{lemma:master-commutator-bound}. Note that $\cG_t = W_t \cG W_t^{-1} = \sum_{r,s \geq 0} e^{(s-r)a(t)} \P_s \cG \P_r$. Using the fact that $\cG$ is skew-Hermitian, we get
\[
\cG_t + \cG_t^{\dagger} = \sum_{r,s \geq 0} (e^{(s-r)a(t)} - e^{(r-s)a(t)}) \P_s \cG \P_r = 2 \sum_{r,s \geq 0} \phi(s-r) \P_s \cG \P_r = 2 \cG_{\phi}
\]
with $\phi(r) = \sinh(ra(t))$, with the notation of Lemma \ref{lemma:master-commutator-bound}.
Inequality \eqref{eq:master} immediately implies the result.
\end{proof}

Therefore from \eqref{eq:identityddtYt2} we get
\be
 \frac{1}{2} \frac{\dd}{\dd t}\|Y_t\|_{\F}^2
 \leq \bigg[a'(t)-\lambda+2\Delta|u|\sinh(\delta a(t))\bigg]
 \HSproduct{Y_t}{\cN Y_t}  \leq 0
\ee
where we used the assumption \eqref{eq:ode-for-a(t)} on $a(t)$ and the fact that $\HSproduct{Y_t}{\cN Y_t} \geq 0$. This proves \eqref{eq:weighted-contraction}.
\end{proof}

The following theorem exploits the contraction property proved above to obtain an upper bound on the tail of the truncation of $\cP_t$, i.e., to obtain bounds on $\|\cP_t X - \Trunc_{\ell} \cP_t X\|_{\F}$.

\begin{theorem}[Low-degree truncation of weakly-interacting dynamics]
\label{thm:weak-dissipative-trunc}
Let $\cL$ be the Lindbladian  defined in \eqref{eq:dissipative-generator} and consider a degree-$d$ observable $X\in\cA$. Then we have that, for any $\ell\geq d$ and $t \leq \tmax(u,\lambda)$
\be
\label{eq:truncation-bound}
    \norm{(I-\Trunc_\ell)\cP_t X}_{\F} \leq \kappa e^{-(\ell-d)/\delta}\norm{X}_F, \qquad \kappa = e^{d/\delta}
\ee
where $\tmax(u,\lambda)$ is given by
\be
\label{eq:tmax-def}
    \tmax(u,\lambda) =\begin{cases}
        \dfrac{1}{2\Delta\abs{u}\delta \cosh(1)} \log\left(1+ \dfrac{2\Delta\abs{u}c_0}{\lambda_c(u)-\lambda}\right) \quad &\lambda< \lambda_c(u)\\
        \infty \quad &\lambda\geq\lambda_c(u)
    \end{cases}
\ee
where $\lambda_c(u):=2\Delta\abs{u} \sinh(1)$ and $c_0=2(e-1)/(e^2+1)$.
\end{theorem}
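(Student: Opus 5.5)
The plan is to combine the weighted contraction of Theorem~\ref{thm:dissipative-weighted-contraction} with a suitable choice of time-dependent weight $a(t)$. The weight starts at $a(0)=2/\delta$ and stays above $1/\delta$ on $[0,\tmax]$. The tail bound then follows from a Markov-type argument in the weighted Frobenius norm. For any $t$ and any $Z$, every monomial discarded by $\Trunc_\ell$ has degree $>\ell$, so
\[
\|(I-\Trunc_\ell)Z\|_{\F}\le e^{-a(t)\ell}\,\|e^{a(t)\cN}Z\|_{\F}.
\]
Apply this to $Z=\cP_tX$ and use \eqref{eq:weighted-contraction} integrated from $0$ to $t$:
\[
\|e^{a(t)\cN}\cP_tX\|_{\F}\le\|e^{a(0)\cN}X\|_{\F}\le e^{a(0)d}\|X\|_{\F},
\]
since $X$ has degree $d$. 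With $a(0)=2/\delta$ and $a(t)\ge 1/\delta$ this gives $e^{-\ell/\delta+2d/\delta}\|X\|_{\F}=\kappa e^{-(\ell-d)/\delta}\|X\|_{\F}$ with $\kappa=e^{d/\delta}$, which is exactly \eqref{eq:truncation-bound}.

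It remains to construct an admissible $a$, i.e.\ one satisfying \eqref{eq:ode-for-a(t)}, and to determine how long it stays $\ge 1/\delta$. Substitute $b=\delta a$, so the constraint becomes $b'\le\delta(\lambda-2\Delta|u|\sinh b)$. We start from $b(0)=2$ and only need the region $b\in[1,2]$. The argument splits into two cases.
\begin{enumerate}[label=(\alph*)]
\item If $\lambda\ge\lambda_c(u)=2\Delta|u|\sinh(1)$, take $b\equiv1$, i.e.\ $a\equiv 1/\delta$. The right-hand side of the constraint is then nonnegative, so $a$ is admissible for all times. This even gives the better constant $e^{d/\delta}\cdot e^{-\ell/\delta}$, so $\tmax=\infty$.
\item If $\lambda<\lambda_c(u)$, first replace $\sinh b$ on $[1,2]$ by an affine upper bound. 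Write $x=b-1\in[0,1]$. Since $\sinh$ is convex on $[1,2]$, it lies below its chord: $\sinh(1+x)\le\sinh1+Kx$ with $K=\sinh2-\sinh1$. (To match the paper's constants one would pick the constant $K$ and the starting offset so that $K\leftrightarrow\cosh(1)$ and $c_0=2(e-1)/(e^2+1)$ appear. The latter suggests the start point or the slope is tuned, e.g.\ starting at $x_0=c_0\cosh(1)$-type value instead of $1$; I would fix these by matching the final formula.) Then define $x(t)$ as the solution of the linear ODE
\[
x'=-\delta\bigl(\mu+2\Delta|u|Kx\bigr),\qquad \mu=\lambda_c(u)-\lambda>0 .
\]
By the affine bound this solution satisfies \eqref{eq:ode-for-a(t)} as long as $x\in[0,1]$. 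Solving explicitly,
\[
x(t)=\Bigl(x_0+\tfrac{\mu}{2\Delta|u|K}\Bigr)e^{-2\Delta|u|K\delta t}-\tfrac{\mu}{2\Delta|u|K},
\]
which hits $0$ at $t^*=\frac{1}{2\Delta|u|\delta K}\log\bigl(1+\frac{2\Delta|u|Kx_0}{\mu}\bigr)$. This has exactly the shape of \eqref{eq:tmax-def}, and for $t\le t^*$ we have $b\ge1$ as required.
\end{enumerate}

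Two technical points need care. First, $a$ must be positive for all $t\ge0$, as the theorem requires. After $t^*$ one can simply continue $a$ by any admissible positive function, since we only use the bound up to $\tmax$. Second, the contraction theorem assumes $a$ is differentiable, so the continuation should be chosen smooth, or the inequality should be integrated piecewise.

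The main obstacle is reproducing the precise constants $\cosh(1)$ and $c_0$, rather than the structure of the argument. I would first try the tangent-line bound at the endpoint, combined with choosing the initial value $b(0)$ slightly differently, and adjust until the comparison ODE yields exactly $\frac{1}{2\Delta|u|\delta\cosh1}\log\bigl(1+\frac{2\Delta|u|c_0}{\lambda_c-\lambda}\bigr)$. Any valid choice gives the theorem up to these constants.
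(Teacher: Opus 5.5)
Your reduction agrees with the paper's proof. The Markov step $\|(I-\Trunc_\ell)\cP_tX\|_{\F}\le e^{-a(t)\ell+a(0)d}\|X\|_{\F}$, the choice $a(0)=2/\delta$ with target $a(t)\ge 1/\delta$, and the constant weight $a\equiv 1/\delta$ when $\lambda\ge\lambda_c(u)$ are all correct.

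\textbf{The gap is in case (b).} The chord comparison does not give the stated $\tmax(u,\lambda)$, and it cannot be tuned to. Write $k=2\Delta|u|$. With your choices $x_0=1$ and $K=\sinh 2-\sinh 1\approx 2.45$, you get $t^*=\frac{1}{\delta kK}\log\bigl(1+\frac{kK}{\lambda_c-\lambda}\bigr)$. The statement requires $\frac{1}{\delta k\cosh 1}\log\bigl(1+\frac{kc_0}{\lambda_c-\lambda}\bigr)$.

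\begin{itemize}
\item As $\lambda\uparrow\lambda_c$, the coefficients of $\log\frac{1}{\lambda_c-\lambda}$ are $\frac{1}{\delta kK}<\frac{1}{\delta k\cosh 1}$. So $t^*<\tmax(u,\lambda)$ once $\lambda_c-\lambda\lesssim k/45$ (numerically), and your argument says nothing for $t\in(t^*,\tmax]$.
\item Far below $\lambda_c$ the comparison reverses: at $\lambda=0$ you get about $0.46/(\delta k)$ against the stated $0.19/(\delta k)$. So the two horizons are incomparable.
\end{itemize}

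This is structural, not bookkeeping. Since $\sinh$ is strictly convex, every affine bound $\sinh(1+x)\le\sinh 1+K'x$ valid on $[0,x_0]$ has $K'\ge(\sinh(1+x_0)-\sinh 1)/x_0>\cosh 1$. Shrinking $x_0$ pushes $K'$ toward $\cosh 1$, but it costs a $\log(K'x_0)$ term that still loses as $\lambda\uparrow\lambda_c$.

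The tangent-line remedy you suggest goes the wrong way. Tangent lines of a convex function lie below it, so the resulting comparison function would violate \eqref{eq:ode-for-a(t)}. Finally, ``up to constants'' is not enough here, because the statement fixes $\cosh(1)$ and $c_0$.

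\textbf{What is missing is the correct relaxation rate.} It is set by the linearization of $\sinh$ at the fixed point $\alpha=\arsinh(\lambda/k)<1$, namely $\delta k\cosh\alpha\le\delta k\cosh 1$, not by a secant slope over $[1,2]$. The paper captures it by solving $a'=\lambda-k\sinh(\delta a)$ exactly with $a(0)=2/\delta$:
\begin{itemize}
\item Setting $z=e^{\delta a}$ gives the Riccati equation $z'=-\frac{\delta k}{2}(z-e^{\alpha})(z+e^{-\alpha})$.
\item Then $w=(z-e^{\alpha})/(z+e^{-\alpha})$ satisfies $w'=-\Omega w$, with $\Omega=\delta\sqrt{k^2+\lambda^2}=\delta k\cosh\alpha$.
\item With $q=e^{\alpha}\in[1,e)$, the time to go from $z=e^2$ to $z=e$ is $t_*=\frac{1}{\Omega}\log\frac{(e^2-q)(e+q^{-1})}{(e^2+q^{-1})(e-q)}=\frac{1}{\Omega}\log\bigl(1+\frac{q^2+1}{e^2q+1}\cdot\frac{e(e-1)}{e-q}\bigr)$.
\item The elementary bounds $\frac{q^2+1}{e^2q+1}\ge\frac{2}{e^2+1}$, $e-q\le e(1-\alpha)$, $1-\alpha=\int_\lambda^{\lambda_c}\frac{\dd s}{\sqrt{k^2+s^2}}\le\frac{\lambda_c-\lambda}{\sqrt{k^2+\lambda^2}}$, and $k\le\sqrt{k^2+\lambda^2}\le k\cosh 1$ give exactly $t_*\ge\tmax(u,\lambda)$. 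The first bound is where $c_0=2(e-1)/(e^2+1)$ comes from.
\end{itemize}

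By the comparison principle, this exact solution is the slowest-decreasing admissible weight starting from $2/\delta$, so it is the optimal choice within your framework. It also settles your technical points about positivity and continuation, since it is smooth and stays strictly above $\alpha/\delta\ge 0$ for all $t\ge 0$.
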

\begin{remark*}
    Let us study the value $\tmax(u,\lambda)$ at some special points:
    \begin{itemize}
        \item  For $\lambda=0$, this corresponds to the closed-system case, and similarly to \cite{facelli2026fast} we expect a finite maximal time which increases as $u$ gets smaller. Indeed, we find
        \[
            \tmax(u,\lambda=0) = \dfrac{\log(1+c_0/\sinh(1))}{2\Delta \abs{u} \delta \cosh(1)}
        \]
        which indeed has a $O(1/\abs{u})$ scaling, an exponential improvement compared to the results in \cite{facelli2026fast} and matching the scalings of \cite{zhao2026efficient}.
        \item When $u=0$ note that $\lambda_c=0$ and so $\tmax(0,\lambda)=+\infty$ for all $\lambda \geq 0$.
        \item For a fixed value of $u > 0$, we note that $t_{\max}(u,\lambda)$ grows to $+\infty$ as $\lambda \uparrow \lambda_c(u)$ like $\log\left(\frac{1}{\lambda_c(u)-\lambda}\right)$.
    \end{itemize}
\end{remark*}
\begin{proof}
When $\lambda \geq \lambda_c(u)$, we use Theorem \ref{thm:dissipative-weighted-contraction}, with $a(t) \equiv 1/\delta$ which satisfies \eqref{eq:ode-for-a(t)} for all $t \geq 0$ since $a'(t) = 0 \leq \lambda - 2\Delta|u|\sinh(1) = \lambda - \lambda_c$. This shows \eqref{eq:truncation-bound} with $\kappa=1$ in the regime $\lambda \geq \lambda_c(u)$.

We now assume $\lambda < \lambda_c(u)$. 
We use Theorem \ref{thm:dissipative-weighted-contraction}, where we take $a(t)$ to be a solution of the differential equation obtained from \eqref{eq:ode-for-a(t)}:
\be
\label{eq:odea}
a'(t) = \lambda - k\sinh(\delta a(t)),\qquad k=2\Delta |u|.
\ee
If the solution satisfies $a(t) \geq a_*$ for all $t \in [0,T]$, then we immediately get that from any $X$ of degree $d$,
\[
\norm{(I-\Trunc_\ell)e^{t\cL}X}_{\F} \leq e^{d(a(0) - a_*)} e^{-a_* (\ell-d)} \norm{X}_{\F} \qquad \forall t \in [0,T],
\]
since
\[
\norm{(I-\Trunc_\ell)e^{t\cL}X}_{\F} \leq e^{-a(t)\ell}\norm{e^{a(t)\cN}e^{t\cL}X}_{\F} \leq e^{-a(t)\ell}\norm{e^{a_0\cN}X}_{\F} \leq e^{-a(t)\ell+a_0 d}\norm{X}_{\F}.
\]
We will choose $a(0)=2/\delta$, $a_*=1/\delta$. It can be shown that the solution of the ODE \eqref{eq:odea} is given by
\be
\label{eq:solution-a(t)}
    a(t) = \frac{1}{\delta}\log\left(\dfrac{e^\alpha + C e^{-\alpha} e^{-\Omega t}}{1-Ce^{-\Omega t}}\right) \,,
\ee
with constants
\be
\label{eq:constants-in-a(t)}
\alpha=\arsinh(\lambda/k)\,, \quad C = \dfrac{e^{\delta a(0)}-e^\alpha}{e^{\delta a(0)}+e^{-\alpha}} \,, \quad \Omega=\delta\sqrt{k^2+\lambda^2}
\ee
Note that since $\lambda < \lambda_c = k\sinh(1)$, we have $\alpha < 1$, and since $a(0) = 2/\delta$ we have $C > 0$. (One way to arrive at \eqref{eq:solution-a(t)} is via the change of variables $z(t) = e^{\delta a(t)}$ which turns the ODE \eqref{eq:odea} into a Riccati equation. We omit the details here.) Note that $a'(t) \leq 0$ for all $t \geq 0$; indeed
\[
\frac{1}{k} a'(t) = \lambda/k - \sinh(\delta a(t)) = \sinh(\alpha) - \sinh(\delta a(t)) < 0
\]
where we used the fact that $\delta a(t) > \alpha$ which easily follows from \eqref{eq:solution-a(t)} since $C > 0$. Furthermore, note that $\lim_{t\to\infty} a(t) = \alpha/\delta < 1/\delta := a_*$.
Thus there exists a unique $t_*$ such that $a(t_*)=a_*$. From \eqref{eq:solution-a(t)} one gets:
    \[
         t_* = \frac{1}{\Omega}\log\left(\dfrac{(e^2-e^\alpha)(e+e^{-\alpha})}{(e^2+e^{-\alpha})(e-e^\alpha)}\right).
    \]
We now simplify this expression to make the dependences on $u$ and $\lambda_c - \lambda$ more explicit. Denoting $x:=e^\alpha < e$, and isolating a factor of $1$ in the logarithm we equivalently have
    \[
        t_* = \frac{1}{\Omega}\log\left(1 + \dfrac{(x^2+1)}{e^2x+1}\dfrac{e(e-1)}{e-x} \right)
    \]
    Let $f(x)=(x^2+1)/(e^2x+1)$. Since $f'(x)\geq 0 $ for all $x\geq 1$, we use $f(x)\geq f(1)$ and obtain
    \[
        t_* \geq \frac{1}{\Omega}\log\left(1 + \dfrac{2e(e-1)}{e^2+1}\dfrac{1}{e-x} \right)
    \]
    Note also that
    \[
    \begin{aligned}
        e-e^\alpha &= e(1-e^{-(1-\alpha)}) \\
        &\leq e(1-\alpha) \\
        &= e(\arsinh(\lambda_c/k)-\arsinh(\lambda/k))
    \end{aligned}
    \]
    where the inequality is a consequence of $1-e^{-z}\leq z$ for $z\in\Rbb$ and the last equality is obtained by definition of $\alpha$ and $\lambda_c$. Finally, the following inequality
    \[
    \arsinh(\lambda_c/k)-\arsinh(\lambda/k) =\int_\lambda^{\lambda_c} \dd s\dfrac{1}{\sqrt{k^2+s^2}} \leq \dfrac{\lambda_c-\lambda}{\sqrt{k^2+\lambda^2}}
    \]
    implies that
    \[
        t_* \geq \dfrac{1}{\Omega} \log\left(1+\dfrac{2(e-1)}{e^2+1}\dfrac{\sqrt{\lambda^2+k^2}}{\lambda_c-\lambda}\right) 
    \]
    Noting that $\sqrt{\lambda^2+k^2}\leq \sqrt{\lambda_c^2+k^2}=k\cosh1$ as well as $\sqrt{\lambda^2+k^2}\geq k$, we have
    \[
        t_* \geq \dfrac{1}{\delta k\cosh(1)} \log\left(1+\dfrac{2(e-1)}{e^2+1}\dfrac{k}{\lambda_c-\lambda}\right) 
    \]
    Recalling that $k=2\Delta \abs{u}$, the right-hand side corresponds to \eqref{eq:tmax-def} in the regime $\lambda<\lambda_c$. This completes the proof.
\end{proof}

\subsection{Time simulation}

Combining the theorem above with Theorem \ref{thm:abstract} allows us to get guarantees on propagation algorithms to simulate the Lindbladian \eqref{eq:dissipative-generator}. Given a time step $h > 0$, we consider the Euler first order approximation of $\cP_h$, defined by:
\be
 \widetilde{\cP}_h=I+h\cL.
\ee
To simulate the time evolution up to a horizon $t > 0$ for an initial observable $A \in \cA$, we consider the algorithm
\be
\label{eq:propagation-algo-weak}
\begin{cases}
A_0 = A\\
A_{j+1} = \Trunc_{\ell} \widetilde{\cP}_{h} A_j \qquad (j=0,\ldots,m-1),
\end{cases}
\ee
for some $m \in \NN$, where $h = t/m$.
The next corollary shows that this algorithm with 
\be
\label{eq:weak-interaction-complexity}
    \ell=d+O\!\left(\log\frac{1+t}{\eps}\right),
    \qquad
    m=\poly(t,1/\eps)
\ee
allows us to efficiently approximate $\cP_t A$ provided $t < \tmax(u,\lambda)$ where $\tmax(u,\lambda)$ is defined in \eqref{eq:tmax-def}.

\begin{corollary}
\label{cor:weak-dissipative-propagation}
Let $\cL$ be the Lindbladian as defined in \eqref{eq:dissipative-generator}, and fix an arbitrary initial observable $A \in \cA$, and a time horizon $0 < t < \tmax(u,\lambda)$, where $\tmax(u,\lambda)$ is defined in \eqref{eq:tmax-def}.

Then for any $\eps >0$, there is an algorithm that runs in time $N^{O(\log ((1+t) /\eps))} \poly(t,1/\eps)$ and returns an approximation $\tilde{A}$ of $\cP_t A$ such that $\|\cP_t A - \tilde{A}\|_{\F} \leq \eps \|A\|_{\F}$. The approximant $\tilde{A}$ is represented as a polynomial of degree at most $\ell = d + O(\log ((1+t)/\eps))$.
\end{corollary}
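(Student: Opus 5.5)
We obtain the corollary by applying the exponential tail criterion (Corollary~\ref{cor:criterion}) with the norm $\|\cdot\|=\|\cdot\|_{\F}$, the exact semigroup $\cP_s=e^{s\cL}$ and the Euler step $\widetilde{\cP}_h=I+h\cL$. To do so we must check three things: the contractivity hypotheses of Theorem~\ref{thm:abstract}, the local error bound $\delta_\ell(h)\le q(\ell)h^2$, and the tail bound $\eta_\ell(t;A)\le Ce^{-a(\ell-d)}$. Throughout, $d=\deg(A)$.

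\textbf{Contractivity.} Since the monomials are orthonormal for the normalized trace, $\Trunc_\ell$ is an orthogonal projection, so $\|\Trunc_\ell X\|_{\F}\le\|X\|_{\F}$. Contractivity of $\cP_s$ in $\|\cdot\|_{\F}$ is the case $a\equiv0$ of Theorem~\ref{thm:dissipative-weighted-contraction}, as noted right after its statement.

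\textbf{Tail bound.} Take $0\le s\le t<\tmax(u,\lambda)$. Theorem~\ref{thm:weak-dissipative-trunc} gives
\[
\|(I-\Trunc_\ell)\cP_sA\|_{\F}\le e^{d/\delta}e^{-(\ell-d)/\delta}\|A\|_{\F}.
\]
Hence $\eta_\ell(t;A)\le Ce^{-a(\ell-d)}$ with $a=1/\delta$ and $C=e^{d/\delta}$. Since $C$ depends on $d$, $\log C=d/\delta$ enters the choice of $\ell$ only additively, which is harmless because the runtime $N^{O(d+\cdots)}$ already contains $d$.

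\textbf{Local error.} This is the one genuinely new estimate. For $X\in\cA_{\le\ell}$ Taylor's theorem gives
\[
\cP_hX-(I+h\cL)X=\int_0^h(h-s)\,\cP_s\cL^2X\,\dd s,
\]
so Frobenius contractivity of $\cP_s$ yields $\|\cP_hX-\widetilde{\cP}_hX\|_{\F}\le \tfrac{h^2}{2}\|\cL^2X\|_{\F}$. It remains to bound $\cL$ on low-degree operators. The term $-\lambda\cN$ has norm at most $\lambda r$ on $\cA_{\le r}$. For the commutator part, write $[H,Q]$ with $Q\in\BB$ of degree at most $r$: by $\Delta$-sparsity it has at most $\Delta r$ nonzero summands, each of Frobenius norm at most $2$, since $[P,Q]=2f_{P,Q}S_{P,Q}$. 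The $S_{P,Q}$ may coincide, so we cannot use orthogonality directly. Instead we use the Schur test on the matrix of $\mathrm{ad}_H$ restricted to $\cA_{\le r}$, whose image lies in $\cA_{\le r+\delta}$. Its column sums are at most $2\Delta r$. Its row sums are bounded because, for fixed output $S$ and fixed $P$, the input is forced to be $Q\propto PS$, and the admissible $P$ number at most $\Delta\deg(S)\le\Delta(r+\delta)$. This gives an operator norm $O(\Delta(r+\delta))$. Applying the bound twice gives $\|\cL^2X\|_{\F}\le q(\ell)\|X\|_{\F}$ with $q(\ell)=O\bigl((\Delta+\lambda)^2(\ell+2\delta)^2\bigr)$, a polynomial in $\ell$. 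I expect this row-sum step to be the main (if mild) obstacle; the rest is direct substitution.

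\textbf{Conclusion.} Corollary~\ref{cor:criterion} now gives $\ell=d+O(\delta\log((1+t)/\eps))$ and $m=\poly(t,1/\eps,\ell)$. One step of the algorithm computes $\Trunc_\ell(I+h\cL)A_j$ for $A_j\in\cA_{\le\ell}$. By property (iv) this operator has $(N+1)^{O(\ell)}$ coefficients, and each coefficient of $\cL A_j$ takes $\poly(N)$ work, using the sparse expansion of $H$. The total runtime is therefore $N^{O(d+\log((1+t)/\eps))}\poly(t,1/\eps)$. For fixed $d$ this is the stated bound, with $\tilde A=A_m$ of degree at most $\ell$.
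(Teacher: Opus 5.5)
Your proposal is correct and follows the paper's proof. You apply Corollary~\ref{cor:criterion} in the Frobenius norm, take the tail bound $\eta_\ell\le e^{d/\delta}e^{-(\ell-d)/\delta}$ from Theorem~\ref{thm:weak-dissipative-trunc}, and bound the Euler error by the Taylor remainder plus Frobenius contractivity, exactly as the paper does via Lemma~\ref{lemma:euler-truncation-error}. Your Schur-test bound on $[H,\cdot]$ is the same edge-counting argument as in Lemmas~\ref{lemma:master-commutator-bound} and~\ref{lemma:frobenius-norm-growth}, and it gives the same constant $2\Delta\sqrt{r(r+\delta)}$. One small point: your row-sum count silently uses that $PQ\propto S$ with $\{P,Q\}=0$ forces $[P,S]\neq 0$, which is what lets sparsity be applied at $S$.
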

\begin{proof}
We apply Corollary \ref{cor:criterion} with the Frobenius norm.
Using Lemma \ref{lemma:euler-truncation-error}, we have that for $X\in\cA_{\leq \ell}$
\begin{align}
 \|\cP_h X-\widetilde{\cP}_h X\|_{\F}
 &\le c\cdot (\ell+\delta)^2 h^2\|X\|_{\F},
 \label{eq:weak-interaction-local}
\end{align}
where $c$ is a constant depending only on $\Delta$ and $\lambda$.
From Theorem \ref{thm:weak-dissipative-trunc} we get that, with the Frobenius norm $\eta_{\ell}(t;A) \leq e^{d/\delta} e^{-(\ell-d)/\delta}$ for all $0 \leq t \leq \tmax(u,\lambda)$ (where $\eta$ is defined in \eqref{eq:eta}).

\end{proof}

\subsection{Spin and Majorana applications}
For concreteness, we consider two specific instances whose out-of-equilibrium dynamics can be efficiently simulated by the propagation algorithm.
\paragraph{Spin systems}
Consider the algebra $\cA$ of spins with system size $N$. The basis of the algebra is described in \eqref{eq:pauli-algebra}.

Consider a Hamiltonian $H=H_0+uV$ where $H_0$ acts on single sites and $V$ is a $\Delta$-sparse interaction between pairs of sites, namely
\[
    H_0 = \sum_{\substack{P\in\BB\,\\ \deg(P)=1}} h_P P\,,\qquad V=\sum_{\substack{P\in\BB\\ \deg(P)=2}} h_P P
\]
For arbitrary $\lambda >0$, we then consider the following Lindbladian
\be
\label{eq:spin-depolarising-lindbladian}
    \cL(X) = \ci[H_0+uV,X] + \lambda \cD(X)
\ee
where $\cD$ is the depolarising noise introduced in \eqref{eq:depolarizing-noise-definition}. We have the following result regarding efficient simulability of spin system dynamics
\begin{corollary}[Efficient simulation of weakly-interacting spin systems]
\label{cor:weak-dissipative-dynamics-spin}
    Consider a system of $N$ spins and let $\cL$ be given by \eqref{eq:spin-depolarising-lindbladian}. Let $A\in\cA$ be an arbitrary degree-$d$ observable. Fix a time horizon $0\leq t \leq \tmax(u,\lambda)$ where
    \[
        \tmax(u,\lambda) =
        \begin{cases}
        \dfrac{1}{2\Delta u \cosh(1)}\log\left(1+\dfrac12\dfrac{\Delta u }{\lambda_c(u)-\lambda}\right) \quad &\lambda < \lambda_c(u) \\
        \infty \quad &\lambda\geq \lambda_c(u)
        \end{cases}
    \]
    with $\lambda_c(u)=2\Delta u \sinh(1)$. Given $0<\eps<1$, the propagation algorithm in the Pauli basis produces a degree-$\ell$ approximant $\widetilde{A}$ s.t. $\norm{\tilde{A}-e^{t\cL}A}_\F \leq \eps\norm{A}_\F$ in complexity $N^{O(\log((1+t)/\eps))}$ with $\ell = d+O(\log((1+t)/\eps))$
\end{corollary}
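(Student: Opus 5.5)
The plan is to specialise Corollary~\ref{cor:weak-dissipative-propagation}, together with Theorem~\ref{thm:weak-dissipative-trunc}, to the spin setting. The work is mainly to check that the Pauli example satisfies Assumption~\ref{assumption:hamiltonian}, to identify $\delta$, and to match the constants in $\tmax$.

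\emph{Verification of the structural assumptions.}
\begin{enumerate}[label=(\alph*)]
\item The Pauli basis \eqref{eq:pauli-algebra} satisfies properties (i)--(iv) of Section~\ref{sec:norms-shared-setup}:
\begin{itemize}
\item Pauli strings square to $I$, are Hermitian and traceless unless trivial.
\item Degrees are subadditive.
\item Two Pauli strings either commute or anticommute; in the latter case $[P,Q]=2PQ$, a phase times a Pauli string.
\item The number of strings of degree at most $r$ is $\sum_{k\le r}\binom{N}{k}3^k\le (N+1)^{O(r)}$.
\end{itemize}
\item By \eqref{eq:depolarizing-noise-definition}, $\lambda\cD=-\lambda\cN$ on Pauli strings. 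Hence \eqref{eq:spin-depolarising-lindbladian} is exactly of the form \eqref{eq:dissipative-generator}.
\item $H_0$ is a sum of single-site Paulis. The commutator of a single-site Pauli with a string $Q$ is either zero or changes only the $i$-th factor of $Q$ to another non-identity Pauli, so \eqref{eq:H0degreeinvariant} holds.
\item For $V$ a sum of two-site terms, the commutator $[P,Q]$ with $\deg P=2$ is nonzero only if the supports overlap. Then $\deg(PQ)\le\deg(Q)+1$, so $\delta=1$.
\item We assume, as the statement implicitly does, that $|h_P|\le1$ and $\Delta$-sparsity \eqref{eq:Delta-sparsity-assumption} hold for the full $H$ (interpreting ``$\Delta$-sparse'' in that sense), with $0\le u\le1$.
\end{enumerate}

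\emph{Matching $\tmax$.} Plugging $\delta=1$ into \eqref{eq:tmax-def} gives the prefactor $1/(2\Delta u\cosh 1)$ and $\lambda_c(u)=2\Delta u\sinh(1)$. Inside the logarithm the term is $2\Delta u c_0/(\lambda_c-\lambda)$, where
\[
c_0=\frac{2(e-1)}{e^2+1}\approx 0.41 .
\]
Hence $2c_0\approx0.82\ge 1/2$. Since $\log(1+\cdot)$ is increasing, the corollary's $\tmax$ is at most that of \eqref{eq:tmax-def}. So every $t$ in the stated range is covered by Theorem~\ref{thm:weak-dissipative-trunc}, and for $\lambda\ge\lambda_c$ both are infinite.

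\emph{Truncation tail.} Theorem~\ref{thm:weak-dissipative-trunc} then gives
\[
\eta_\ell(t;A)\le e^{d}e^{-(\ell-d)},
\]
so $C=e^d$ and $a=1$ in Corollary~\ref{cor:criterion}. The factor $e^d$ only shifts $\ell$ by $O(d)$, which is absorbed since $d$ is fixed.

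\emph{Local error and conclusion.} The local Euler error bound $\delta_\ell(h)\le c(\ell+1)^2h^2$ is provided by Lemma~\ref{lemma:euler-truncation-error}, as used in Corollary~\ref{cor:weak-dissipative-propagation}. Frobenius contractivity of $\cP_s$ is the case $a\equiv0$ of Theorem~\ref{thm:dissipative-weighted-contraction}. Contractivity of $\Trunc_\ell$ is immediate from \eqref{eq:frobA}. Corollary~\ref{cor:criterion} then yields $\ell=d+O(\log((1+t)/\eps))$ and runtime $N^{O(\ell)}\poly(t,1/\eps)$. For fixed $d$ this is $N^{O(\log((1+t)/\eps))}$, since $\poly(t,1/\eps)$ is dominated by that expression.

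The main obstacle is bookkeeping rather than conceptual. The runtime must be absorbed into the stated form, and the slightly different constant inside the logarithm, $1/2$ versus $2c_0$, must be reconciled. The key observation is that $1/2\le 2c_0$, so the stated horizon is conservative.
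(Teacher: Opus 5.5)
Your proposal is correct and follows the paper's proof. Like the paper, you specialise Corollary~\ref{cor:weak-dissipative-propagation} (that is, Theorem~\ref{thm:weak-dissipative-trunc} fed into Corollary~\ref{cor:criterion}) with $\delta=1$, and you use $2c_0\geq 1/2$ (the paper writes this as $c_0\geq 1/4$) to show that the stated horizon is no larger than the one in \eqref{eq:tmax-def}. Your explicit checks of the Pauli-basis properties, of $\cD=-\cN$ on Pauli strings, and of $\delta=1$ simply fill in what the paper summarises as ``the spin algebra satisfies the required properties.''
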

\begin{proof}
    The spin algebra satisfies the required properties described at the start of Section \ref{sec:weak-interaction} and thus the result is a direct consequence of Corollary \ref{cor:weak-dissipative-propagation} with $\delta=1$
     and $c_0 \geq 1/4$ in \eqref{eq:tmax-def} of $\tmax(u,\lambda)$.
\end{proof}

\paragraph{Fermionic systems}
Consider now the case of a fermionic algebra $\cA$ on $N/2$ sites, where $N$ is assumed even. Let $\{\gamma_1,\dots,\gamma_N\}$ denote the corresponding self-adjoint Majorana operators and take the corresponding basis defined in \eqref{eq:majorana-basis}. 

Consider a $\Delta$-sparse Hamiltonian $H=H_0+uV$ where $H_0$ is the quadratic part and $V$ a quartic part:\footnote{Note that typical fermionic Hamiltonians (e.g. the Fermi-Hubbard model) can be cast in the form above due to the linear mapping $c_x=\frac{1}{2}(\gamma_{2x-1}+\ci \gamma_{2x})$ and $c_x^*=\frac{1}{2}(\gamma_{2x-1}-\ci \gamma_{2x})$ where $c_x,c_x^\dagger$ are the second quantization operators for any site $x\in \{1,\dots,N/2\}$.}
\[
    H_0 = \sum_{\substack{X\subset[N] \\ \abs{X}=2}} h_X \gamma_X \,,\qquad V = \sum_{\substack{X\subset[N] \\ \abs{X}=4}} h_X \gamma_X
\]
Similarly to the spin example, for arbitrary $\lambda>0$ consider the following Lindbladian
\be
\label{eq:fermion-depolarising-lindbladian}
    \cL(X) =\ci[H_0+uV,X] +\lambda \cD(X)
\ee
where $\cD$ is the fermionic Ornstein-Uhlenbeck generator introduced in \eqref{eq:majorana-ornstein-uhlenbeck}.
We can then show the following
\begin{corollary}[Efficient simulation of weakly-interacting fermionic systems]
\label{cor:weak-dissipative-dynamics-fermionic}
Consider a system of $N$ Majorana modes  and let $\cL$ be given by \eqref{eq:fermion-depolarising-lindbladian}. Let $A\in\cA$ be an arbitrary degree-$d$ observable. Fix a time horizon $0\leq t \leq \tmax(u,\lambda)$ where
    \[
        \tmax(u,\lambda) =
        \begin{cases}
        \dfrac{1}{4\Delta u \cosh(1)}\log\left(1+\dfrac{1}{2}\dfrac{\Delta u}{\lambda_c(u)-\lambda}\right) \quad &\lambda < \lambda_c(u) \\
        \infty \quad &\lambda\geq \lambda_c(u)
        \end{cases}
    \]
    with $\lambda_c(u)=2\Delta u \sinh(1)$. Then, for any $0<\eps<1$, the propagation algorithm in the Majorana basis produces a degree-$\ell$ approximant $\widetilde{A}$ s.t. $\norm{\tilde{A}-e^{t\cL}A}_\F \leq \eps\norm{A}_\F$ in complexity $N^{O(\log((1+t)/\eps))}$ with $\ell = d+O(\log((1+t)/\eps))$
\end{corollary}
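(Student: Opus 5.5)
The plan mirrors the proof of Corollary~\ref{cor:weak-dissipative-dynamics-spin}: we verify that the Majorana setting meets the hypotheses of Section~\ref{sec:weak-interaction}, and then specialize Corollary~\ref{cor:weak-dissipative-propagation} with the right value of $\delta$.

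\textbf{Step 1: the generator has the form \eqref{eq:dissipative-generator}.} The Majorana basis \eqref{eq:majorana-basis} satisfies properties (i)--(iv) of Section~\ref{sec:norms-shared-setup}, with at most $\sum_{k\le r}\binom{N}{k}\le (N+1)^r$ monomials of degree $\le r$. By Section~\ref{sec:degree-operator}, the fermionic Ornstein--Uhlenbeck generator satisfies $\cD=-\cN$ on every monomial $\gamma_X$, including odd ones thanks to the parity factor in $V_j$. Hence $\lambda\cD=-\lambda\cN$, and \eqref{eq:fermion-depolarising-lindbladian} is exactly \eqref{eq:dissipative-generator}.

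\textbf{Step 2: Assumption~\ref{assumption:hamiltonian}.} For quadratic $\gamma_Y$ and any $\gamma_X$, the product $\gamma_Y\gamma_X$ is proportional to $\gamma_{Y\triangle X}$. The commutator is nonzero only when $|Y\cap X|=1$, in which case $|Y\triangle X|=|X|$. Thus $[H_0,\cA_r]\subseteq\cA_r$.

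For quartic $\gamma_Y$, the commutator is nonzero only when $|Y\cap X|$ is odd (for $|X|$ even; the odd case is handled the same way via the sign rule). This gives $|Y\triangle X|\le |X|+4-2=|X|+2$, so $\delta=2$. Bounded coefficients hold by assumption (rescale if necessary). $\Delta$-sparsity in the sense of \eqref{eq:Delta-sparsity-assumption} follows from the per-mode condition via \cite[Proposition 2.3]{facelli2026fast}: a term fails to commute with $\gamma_Q$ only if it shares a mode with $Q$, and there are at most $\Delta|Q|$ such terms.

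\textbf{Step 3: apply Theorem~\ref{thm:weak-dissipative-trunc} with $\delta=2$.} One wrinkle: that theorem's proof uses $a_*=1/\delta$ together with the threshold $\lambda_c=k\sinh(1)$, where the argument of $\sinh$ is $\delta a_*=1$. So the critical value remains $\lambda_c(u)=2\Delta u\sinh(1)$ regardless of $\delta$. With $\delta=2$ and $k=2\Delta u$, formula \eqref{eq:tmax-def} gives
\[
\frac{1}{4\Delta u\cosh(1)}\log\!\left(1+\frac{2\Delta u\,c_0}{\lambda_c-\lambda}\right).
\]
Since $c_0=2(e-1)/(e^2+1)\approx 0.41\ge 1/4$, we have $2c_0\ge 1/2$. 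Monotonicity of $\log$ then shows that the stated $\tmax$ is at most the one from \eqref{eq:tmax-def}, so every $t$ in the corollary's range is admissible. The truncation tail is $e^{d/2}e^{-(\ell-d)/2}$, which is of exponential form with $a=1/2$ and $C=e^{d/2}$.

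\textbf{Step 4: conclude with Corollary~\ref{cor:weak-dissipative-propagation}.} The Euler local error $O((\ell+2)^2h^2)$ comes from Lemma~\ref{lemma:euler-truncation-error}. The prefactor $e^{d/2}$ only shifts $\ell$ by $O(d)$, which gives $\ell=d+O(\log((1+t)/\eps))$ and runtime $N^{O(\log((1+t)/\eps))}$, with $d$ treated as fixed as in the statement.

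The main obstacle is Step 2's verification of $\delta=2$ and of sparsity for odd-degree observables. Anticommutation there depends on the parities of both $|X|$ and $|X\cap Y|$. I would first check directly that $[\gamma_Y,\gamma_X]\neq0$ forces $|X\cap Y|\ge1$ for even $|Y|$, which suffices for both the degree bound and sparsity.
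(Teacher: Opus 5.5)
Your proposal is correct and follows the paper's own proof. The paper just notes that the Majorana algebra meets the hypotheses of Section~\ref{sec:weak-interaction} and invokes Corollary~\ref{cor:weak-dissipative-propagation} with $\delta=2$ and $c_0\ge 1/4$; your Steps~1--4 are those same verifications written out in detail. The parity concern you flag at the end is not actually an obstacle: for even $|Y|$ one has $\gamma_Y\gamma_X=(-1)^{|X\cap Y|}\gamma_X\gamma_Y$ whatever the parity of $|X|$, so both the degree bound and the sparsity argument hold for odd-degree observables too.
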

\begin{proof}
        The Majorana algebra satisfies the required properties described at the start of Section \ref{sec:weak-interaction} and thus the result is a direct consequence of Corollary \ref{cor:weak-dissipative-propagation} with $\delta=2$
     and $c_0 \geq 1/4$ in the definition \eqref{eq:tmax-def} of $\tmax(u,\lambda)$.
\end{proof}

\section{Thermal observable estimation}
\label{sec:pseudo-lindbladian}

In this section we focus on the problem of estimating thermal expectation values of weakly interacting Hamiltonians. Given a Hamiltonian $H$, an inverse temperature $\beta\ge 0$ and an observable $Q$,
our goal is to estimate the Gibbs expectation
\[
  \operatorname{Tr}(\rho_{\beta,H}Q),\qquad
  \rho_{\beta,H}:=\frac{e^{-\beta H}}{\operatorname{Tr}(e^{-\beta H})}.
\]
For concreteness, we consider in this section fermionic Hamiltonians written in terms of $N$ Majorana operators.\footnote{The Pauli case can also be treated using the same techniques. We omit it here for clarity of exposition. In fact the Pauli case is simpler to handle since then the Hamiltonian $H_0$ is a linear combination of single-site Paulis, in which case one doesn't need an additional assumption like Assumption \ref{assumption:diagonal-hamiltonian-fermion}.} We assume $H=H_0+uV$ decomposes into non-interacting and interacting parts as in
Assumption \ref{assumption:hamiltonian}. We will impose further assumptions on $H_0$ that will be detailed later.

The main result of this section is a deterministic
algorithm, based on Majorana propagation, to approximate $\Tr(\rho_{\beta,H} Q)$ to within $\eps$, which runs in time $(N+1)^{O(s+\log(1/\eps))}\operatorname{poly}(s,1/\eps)$ when $Q$ is a degree-$s$ monomial, and
whenever $\beta$ and $u$ satisfy an explicit high-temperature/weak-interaction condition of the form $u e^{\beta h} \ll 1$.

The approach consists of constructing a (generally non-completely-positive) semigroup $\cP_t=e^{t\mathcal K^\beta(H)}$ generated by a \emph{pseudo-Lindbladian} $\cK^{\beta}(H)$
that is unital and whose dual fixes $\rho_{\beta,H}$. If $\cP_tQ$ converges to
$\Tr(\rho_{\beta,H}Q)\,I$, then the Gibbs expectation is exactly the
coefficient of the identity in $\cP_tQ$ for large enough $t$. We estimate this
coefficient by applying the propagation framework (Theorem~\ref{thm:abstract}) to short steps
$I+h\mathcal K^\beta_M(H)$, where $\mathcal K^\beta_M(H)$ is a suitable computationally efficient approximation of $\mathcal K^\beta(H)$ based on Dyson expansion. Both fast convergence to the Gibbs state and the ability to apply Theorem~\ref{thm:abstract} require showing that
$\cP_t$ is exponentially contractive in an exponentially weighted Fourier-1 norm, which is where
the parameter condition enters.
Throughout this section we work with the Fourier 1-norm which we recall is defined as
\[
\|X\|_{\Fone} = \sum_{P \in \cB} |x_P|
\]
for an observable $X = \sum_{P \in \cB} x_P P$ expressed in the Majorana basis $\cB$. For convenience, we also set, for $a > 0$
\begin{equation}
\label{eq:exponentially-weighted-1-norm}
\|X\|_a := \|e^{a \cN} X\|_{\Fone} = \sum_{P \in \cB} e^{a\deg(P)} |x_P|.
\end{equation}

\subsection{Overview}
\label{sec:overviewgibbs}

We start by introducing the pseudo-Lindbladian that we study, and give an overview of the analysis.

\paragraph{A family of pseudo-Lindbladians}
For an operator $B \in \cA$ define the imaginary-time evolution by
\begin{equation}
\tau^H_{s}(B) := e^{s H} B e^{-sH}.
\end{equation}
Given operators $A_1,\ldots,A_{\g},B_1,\ldots,B_{\g} \in \cA$, we define the super-operator  $\cK^{\beta}(H): \cA \to \cA$ by
\begin{equation}
\label{eq:def-Kbeta}
\cK^{\beta}(H)X = \alpha \sum_{i=1}^{\g} \bigl([A_i,X]\,\tau^H_{-\beta/2}(B_i) - \tau^H_{\beta/2}(B_i) \,[A_i,X]\bigr),
\end{equation}
where $\alpha \geq 0$ is some suitable constant, and we let $\cP_t$ be the associated semigroup
\begin{equation}
\label{eq:semigroup}
\cP_t = e^{t \cK^{\beta}(H)}.
\end{equation}
An easy calculation shows  that the adjoint of $\cK^{\beta}(H)$ for the Hilbert-Schmidt inner product is given by
\begin{equation}
\label{eq:Kbeta-dagger}
(\cK^{\beta}(H))^{\dagger} \sigma = \alpha\sum_{i=1}^{\g} \Bigl[A^*_i \, , \, \sigma \tau^H_{\beta/2}(B_i^*) - \tau^H_{-\beta/2}(B^*_i) \sigma\Bigr].
\end{equation}
Since $\rho_{\beta,H}\tau^H_{\beta/2}(B_i^*) = \tau^H_{-\beta/2}(B^*_i) \rho_{\beta,H}$ for any $i$, we see that $\rho_{\beta,H} \in \ker (\cK^{\beta}(H))^{\dagger}$, and hence it is a fixed point of the adjoint semigroup $(\cP_t)^{\dagger}$. Thus if $Q \in \cA$ we see that
\[
\frac{\dd}{\dd t} \Tr((\cP_t Q) \rho_{\beta,H} ) = 0.
\]
Hence if $\cP_t Q \to cI$ as $t \to \infty$ for some constant $c$ (note that $I$ is a fixed point of $\cP_t$ since it is in $\ker \cK^{\beta}(H)$), it must be that $c = \Tr(\rho_{\beta,H} Q)$.

\paragraph{Analysis} The main goal of this section is to analyze the propagation algorithm to simulate the evolution $\cP_t Q$ for some initial observable $Q$.
To analyze the propagation algorithm for the evolution $\cP_t Q$, we need to show, as per the abstract analysis of Section \ref{sec:abstract}, that $\cP_t$ is a contraction for a suitable exponentially-weighted norm of the form \eqref{eq:exponentially-weighted-1-norm}. This amounts to showing that
\begin{equation}
\label{eq:decay-exponentially-weighted-norm}
\frac{\dd}{\dd t} \|\cP_t Q\|_{a} \leq 0 \qquad (t \geq 0)
\end{equation}
for some $a>0$. It is possible to show that, in general, $\cP_t$ is \emph{not} a contraction for the unweighted 1-norm. However we will show that there is a range $a \in [a_0,a_0+v]$ where contraction will occur.
This will allow us to obtain an exponential bound on the tail $\|(I-\Trunc_{\ell}) \cP_t Q\|_{a_0}$ of the form $e^{-v(\ell-\deg Q)}$,
which is the crucial condition in Theorem \ref{thm:abstract} for the convergence of the propagation algorithm. Furthermore, condition \eqref{eq:decay-exponentially-weighted-norm} will also guarantee that the evolution $\cP_t Q$ converges exponentially fast to $\Tr(Q \rho_{\beta,H}) I$, which implies that we only need to simulate $\cP_t Q$ up to time $t \approx O(\log(1/\eps))$ to get an additive approximation of $\Tr(Q \rho_{\beta,H})$. To prove \eqref{eq:decay-exponentially-weighted-norm} we will regard $\cK^{\beta} = \cK^{\beta}(H)$ as a perturbation of a reference generator $\cK_0$:
\[
\cK^{\beta} = \cK_0 + \cR.
\]
It is easy to verify that
\begin{equation}
\label{eq:ddtt12}
\frac{\dd}{\dd t} \|\cP_t Q\|_{a} \leq \left(\left.\frac{\dd}{\dd \tau}\right|_{\tau=0}\|e^{\tau \cK_0} \cP_t Q\|_{a}\right) + \|\cR \cP_t Q\|_{a}.
\end{equation}
Indeed, simply write, for small $\dd t$, $\|\cP_{t+\dd t} Q\|_{a} \approx \|(I+\dd t(\cK_0 + \cR)) \cP_t Q\|_{a}$, and so by the triangle inequality
\begin{equation}
\label{eq:ddtt12proof}
\frac{\|\cP_{t+\dd t} Q\|_{a} - \|\cP_t Q\|_{a}}{\dd t} \leq \frac{\|(I+\dd t \cK_0) \cP_t Q\|_{a} - \|\cP_t Q\|_{a}}{\dd t} + \|\cR \cP_t Q\|_{a} + O(\dd t).
\end{equation}
Assume the reference generator $\cK_0$ and the residual $\cR = \cK^{\beta} - \cK_0$ satisfy inequalities of the form:
\begin{equation}
\label{eq:ineq-reference}
\frac{\dd}{\dd \tau} \|e^{\tau \cK_0} X\|_{a} \leq - \lambda \|\cN X\|_{a} \quad \forall X \in \cA,
\end{equation}
and
\begin{equation}
\label{eq:ineq-residual}
\|\cR X\|_{a} \leq \eta \|\cN X\|_{a} \quad \forall X \in \cA
\end{equation}
for some $\lambda, \eta > 0$.
Then these immediately give, from \eqref{eq:ddtt12}
\begin{equation}
\label{eq:combining-decays}
\frac{\dd}{\dd t} \|\cP_t Q\|_{a} \leq -(\lambda - \eta) \|\cN \cP_t Q\|_{a}
\end{equation}
which yields the desired decay \eqref{eq:decay-exponentially-weighted-norm} provided $\eta < \lambda$.
Concretely, we take $\cK_0 = \cK^{\beta}(H_0)$ to be the generator associated to the non-interacting part of the Hamiltonian. The operators $A_i$ and $B_i$ of the pseudo-Lindbladian \eqref{eq:def-Kbeta} are chosen in such a way that we can prove the decay property \eqref{eq:ineq-reference}. Indeed, with our choice of $A_i$ and $B_i$ it turns out that $\cK^{\beta}(H_0)$ is a perturbation of the depolarizing noise generated by $-\cN$. To prove the inequality \eqref{eq:ineq-residual} on the residual $\cR$ we use a Dyson expansion in powers of $u$ to bound the terms $\tau_{\pm \beta/2}(B_i)$ in the definition \eqref{eq:def-Kbeta} of $\cK^{\beta}$.

\paragraph{Choice of pseudo-jump operators $A_i$ and $B_i$} We now completely specify the precise pseudo-Lindbladian we analyze in this paper. We take $\alpha=1/4$, $\g=N$ in \eqref{eq:def-Kbeta} and let, for $i=1,\ldots,N$:
\begin{equation}
\label{eq:pseudo-jumpA}
\begin{cases}
A_i &= \gamma_i\\
B_i &= \sech\big(\frac\beta2 \cG_0\big)(A_i)
\end{cases}
\end{equation}
where $\sech(x) = 1/\cosh(x)$ and $\cG_0 = [H_0,\cdot]$.
To motivate the choice of $B_i$'s note that \eqref{eq:def-Kbeta} can be conveniently written as:
\begin{equation}
\label{eq:def-Kbeta-2}
\cK^{\beta}(H) X = \dfrac{1}{4} \sum_{i=1}^{\g} [[A_i,X],C_i] + \{[A_i,X],S_i\}
\end{equation}
where
\begin{equation}
\begin{cases}
C^\beta_i(H) = (\tau^H_{-\beta/2}(B_i) + \tau^H_{\beta/2}(B_i))/2 = \cosh(\frac\beta2 \cG)(B_i)\\
S^\beta_i(H) = (\tau^H_{-\beta/2}(B_i) - \tau^H_{\beta/2}(B_i))/2 = -\sinh(\frac\beta2 \cG)(B_i)
\end{cases}
\end{equation}
where we let $\cG = [H,\cdot]$ so that $\tau^H_{s} = e^{s \cG}$. We would like to choose $B_i$'s so that when $H=H_0$ we get $C_i = A_i$: this will ensure that when $H=H_0$ the first term in \eqref{eq:def-Kbeta-2} will simplify to the depolarizing noise $-\cN$ (indeed it is known that, on the sub-algebra of even observables $
-\cN(X) = 
\frac{1}{4} \sum_{i=1}^N [[\gamma_i,X],\gamma_i]
$
see Remark \ref{remark:pseudo-noise-fermi}),  which will be very convenient to prove the decay condition \eqref{eq:ineq-reference}. To achieve this, it suffices to take 
\begin{equation}
\label{eq:choiceBi}
    B_i = \cosh(\beta/2 \cG_0)^{-1}(A_i).
\end{equation}
Since $\cG_0$ preserves degree (by \eqref{eq:H0degreeinvariant}), note that $\deg(B_i) = \deg(A_i) = 1$, i.e., $B_i$ is a linear combination of single-site Majoranas. With the choice of $B_i$ in \eqref{eq:choiceBi} note that $S_i = -\sinh(\frac\beta2 \cG) \cosh(\beta/2\cG_0)^{-1}(A_i)$. Importantly, we get that the pseudo-Lindbladian for $H=H_0$ simplifies to
\begin{equation}
\label{eq:KbetaH0}
\begin{aligned}
\cK^{\beta}(H_0)(X) &= \dfrac{1}{4} \sum_{i=1}^{\g} [[A_i,X],A_i] - \{[A_i,X],\tanh(\frac\beta2 \cG_0)(A_i)\}\\
&= -\cN(X) - \cD_0(X).
\end{aligned}
\end{equation}
where $\cD_0(X) = \frac14 \sum_{i=1}^{N} \{[\gamma_i,X],\tanh(\frac\beta2 \cG_0)(\gamma_i)\}$. Some of the properties of $\cD_0$ will be studied later, in Appendix \ref{app:decay2}.

\begin{remark}[Even subalgebra]
For exposition purposes, we consider in this section observables that are in the even subalgebra, i.e., that are linear combinations of even-degree Majorana monomials. This particular choice simplifies the definition of jump operators in \eqref{eq:pseudo-jumpA}. The results can be generalized to the full algebra of observables by taking $A_i$ in \eqref{eq:pseudo-jumpA} to be $\ci \Pi \gamma_i$ where $\Pi$ is the parity operator as explained in Section \ref{sec:degree-operator}. This requires special treatment of the parity operator which only complicates the exposition, so that's why we restrict our attention to the even subalgebra.
\end{remark}

\subsection{Assumption on $H_0$}

Our analysis of the propagation algorithm for the evolution $\cP_t$ will require further assumptions on $H_0$. These assumptions are primarily to control the perturbation $\cD_0$ in \eqref{eq:KbetaH0}, and the magnitude of the operators $B_i$ defined in \eqref{eq:choiceBi}.
More formally:

\begin{assumption}
\label{assumption:diagonal-hamiltonian-fermion}
Write the quadratic Majorana $H_0$ as 
\begin{equation}
    \label{eq:H0diagonal}
    H_0 = \frac{\ci}{4} \sum_{1\leq i,j \leq N} A_{ij} \gamma_i \gamma_j
\end{equation}
where $A \in \RR^{N\times N}$ is skew-symmetric, i.e., $A^T = -A$. We assume that there is a function $\F : \RR_+ \to \RR_+$ with $\F(x) \geq 1$ for all $x \geq 0$ such that for all $\beta \geq 0$
\begin{equation}
\label{eq:sechboundassumption}
\|\sech(\ci A \beta/2 )\|_{\row} \leq \F(\beta)
\end{equation}
\begin{equation}
\label{eq:tanhboundassumption}
\|\tanh(\ci A \beta/2 )\|_{\row} \leq \F(\beta)
\end{equation}
where for a $N\times N$ matrix $M$
\[
\|M\|_{\row} = \max_{i \in \{1,\ldots,N\}} \sum_{j=1}^N |M_{ij}|.
\]
\end{assumption}

Equation \eqref{eq:sechboundassumption} is used to bound the magnitude of $B_i$ in \eqref{eq:pseudo-jumpA}, and Equation \eqref{eq:tanhboundassumption} is used to bound the perturbation term $\cD_0$ in \eqref{eq:KbetaH0}.

Although the conditions above may look nonstandard, and tailored to our specific setup,
it turns out that growth functions of the form $\F(\beta)$ are well studied in the literature, and are related to the spatial decay of the Green's function for $H_0$. In fact, the analysis of the cumulant expansion in \cite{chen2025convergence} is based on very similar assumptions (see \cite[Definition 5]{chen2025convergence}). For example, it is known that for a $D$-dimensional square lattice the inequalities above are satisfied with $\F(x) = C_D (1+2Dx)^{D/2+1}$ (see Appendix \ref{app:lattice}).

\subsection{Contraction}

Our first main result shows that under appropriate assumptions, the evolution $\cP_t$ is a contraction for an exponentially-weighted norm $\|\cdot\|_{a}$ for a range of $a \in [a_0,a_0+v]$.

\begin{theorem}
\label{thm:gibbs1}
Let $H = H_0 + uV$ be a Hamiltonian satisfying Assumptions \ref{assumption:hamiltonian} and \ref{assumption:diagonal-hamiltonian-fermion}. Let $\beta \geq 0$ and $\cP_t$ be the semigroup defined by the pseudo-Lindbladian \eqref{eq:semigroup} with the choice of $A_i$ and $B_i$ given in \eqref{eq:pseudo-jumpA}.

Assume that
\begin{equation}
\label{eq:assumptionub}
u (e^{2 \delta \Delta \beta} - 1) < C\, \F(\beta)^{-(3+\delta)/2} \, e^{-2\Delta \beta}
\end{equation}
where $C = (1/32) e^{-1} 2^{-(1+\delta)/2}$. Let 
\begin{equation}
    \label{eq:defa0}
    a_0 = \frac{1}{2} \log(2\F(\beta))
\end{equation}
and $v=1/(1+\delta)$.
Then for any $a_0 \leq a \leq a_0 + v$ and all $X \in \cA$ even,
\begin{equation}
\label{eq:decay-semigroup}
\frac{\dd}{\dd t} \|\cP_t X\|_{a} \leq -\frac{1}{4} \|\cN \cP_t X\|_a.
\end{equation}
\end{theorem}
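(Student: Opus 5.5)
We follow the perturbative scheme laid out in the overview. Split $\cK^\beta(H)=\cK_0+\cR$ with $\cK_0=\cK^\beta(H_0)=-\cN-\cD_0$, as in \eqref{eq:KbetaH0}. By \eqref{eq:ddtt12}, it suffices to prove two bounds for every even $X$ and every $a\in[a_0,a_0+v]$. The first is the reference decay \eqref{eq:ineq-reference}, which we aim to establish with $\lambda=1/2$. The second is the residual bound \eqref{eq:ineq-residual}, which we aim to establish with $\eta\le 1/4$. Combining them as in \eqref{eq:combining-decays} then gives \eqref{eq:decay-semigroup}.

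\textbf{Step 1 (reference generator).} We compute the one-sided derivative of $\|e^{\tau\cK_0}X\|_a$ at $\tau=0$ monomial by monomial. The $-\cN$ part acts diagonally on the Majorana basis. It therefore contributes exactly $-\|\cN X\|_a$, because the derivative of $|x_P|$ along $-\deg(P)x_P$ equals $-\deg(P)|x_P|$.

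For the $\cD_0$ part we use the triangle inequality, so it suffices to show $\|\cD_0 X\|_a\le \tfrac12\|\cN X\|_a$. For a monomial $\gamma_X$, the commutator $[\gamma_i,\gamma_X]$ vanishes unless $i\in X$ (on the even subalgebra), and in that case it equals $2\gamma_i\gamma_X$, which has degree $|X|-1$. The anticommutator with $\tanh(\frac\beta2\cG_0)(\gamma_i)=\sum_j T_{ij}\gamma_j$ then produces monomials of degree $|X|-2$ or $|X|$. The degree-$|X|$ terms come from $j\ne i$, $j\notin X$, and they cancel pairwise in the anticommutator. What survives are terms of degree $|X|-2$, with total coefficient mass at most $|X|\,\|T\|_{\row}\le|X|\F(\beta)$, where we use that the entries of $T=\tanh(\ci A\beta/2)$ bound the coefficients and invoke \eqref{eq:tanhboundassumption}.

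Since these terms have degree lowered by $2$, they pick up a weight factor $e^{-2a}$. This yields
\[
\|\cD_0X\|_a\le \F(\beta)e^{-2a}\|\cN X\|_a\le \tfrac12\|\cN X\|_a,
\]
where the last inequality uses $a\ge a_0=\frac12\log(2\F(\beta))$. This is exactly why $a_0$ has the form \eqref{eq:defa0}. (The Appendix on $\cD_0$ presumably carries out this cancellation carefully; we would rely on it.)

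\textbf{Step 2 (residual, the main obstacle).} By \eqref{eq:def-Kbeta-2},
\[
\cR X=\tfrac14\sum_i [[\gamma_i,X],C_i(H)-\gamma_i]+\{[\gamma_i,X],S_i(H)-S_i(H_0)\}.
\]
We need to bound $\|C_i(H)-\gamma_i\|_{a'}$ and $\|S_i(H)-S_i(H_0)\|_{a'}$ at a slightly larger weight $a'$, using Lemma~\ref{wth:lem:weighted-product}. Writing $\tau^H_s$ through the Dyson expansion in $uV$ relative to $\tau^{H_0}_s$, the $k$-th order term is an iterated integral of $k$ commutators with $u\,\tau^{H_0}(V)$. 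Each commutator raises the degree by at most $\delta$, and by Lemma~\ref{lemma:inequalities} it costs a factor of at most $2\Delta\cdot(\text{degree})$. At degree $1+k\delta$ the weight contributes $e^{a(1+k\delta)}$.

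Summing over $k$ produces a series of the form
\[
\sum_{k\ge1}\frac{(u\beta\cdot 2\Delta e^{a\delta})^k\prod_{j<k}(1+j\delta)}{k!},
\]
which resums to something like $(1-u(e^{2\delta\Delta\beta}-1)\cdots)^{-1/\delta}-1$. The initial vector $B_i$ carries a prefactor $\F(\beta)$ from \eqref{eq:sechboundassumption}, and the imaginary-time $H_0$-evolution of the total Hamiltonian norm is controlled by a factor $e^{2\Delta\beta}$. These are the origins of the corresponding factors in \eqref{eq:assumptionub}.

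The delicate part is that $\tau^{H_0}_s$ for imaginary $s$ is not norm-preserving in $\|\cdot\|_a$. To handle this, we would bound it by exponentiating Lemma~\ref{lemma:inequalities}, restricted to the degree-preserving $\cG_0$, which gives a factor $e^{2\Delta|s|\deg}$ per term. The precise bookkeeping is what forces the constants $C$ and $\F(\beta)^{-(3+\delta)/2}$.

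\textbf{Step 3 (assembly).} With the estimate from Step 2 in hand, each summand of $\cR X$ has weighted mass at most $\deg(P)\,e^{a}\cdot\epsilon(u,\beta)$ per monomial $P$ of $X$. Here the factor $\deg(P)$ counts the indices $i\in P$, the degree shift is absorbed by the weight through $e^{a}\le\sqrt{2\F}\,e^{v}$, and the range $a\le a_0+v$ with $v=1/(1+\delta)$ keeps $e^{a\delta}$ bounded so that the Dyson series converges under \eqref{eq:assumptionub}. Choosing the constant $C$ so that the resulting bound gives $\eta\le1/4$, and combining with Step 1, yields $-(\tfrac12-\tfrac14)=-\tfrac14$, which is \eqref{eq:decay-semigroup}.
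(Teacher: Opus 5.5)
Your proposal follows the paper's proof essentially step for step: the same split $\cK^\beta=\cK_0+\cR$ combined via \eqref{eq:ddtt12}, the same reference estimate $\|\cD_0X\|_a\le \F(\beta)e^{-2a}\|\cN X\|_a$ (Proposition~\ref{prop:decayH01}, giving $\lambda_a\ge 1/2$ once $a\ge a_0$), and the same Dyson-expansion-plus-submultiplicativity bound on $\cR$ (Lemma~\ref{lem:complex-time-evolution}, point 3, and Theorem~\ref{thm:bound-residual-K-KM} with $M=1$). Two details you only sketch need to be handled as in the paper. First, the $H_0$-growth factors $e^{2\Delta\delta(s-t_j)}$ between insertions must be kept, which replaces the simplex volume $s^k/k!$ by $\bigl((e^{2\Delta\delta s}-1)/(2\Delta\delta)\bigr)^k/k!$. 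Second, $C$ is fixed by the statement, not chosen: it is exactly what makes $4e^{2\Delta\beta}e^{a(1+\delta)}\F(\beta)\,u(e^{2\delta\Delta\beta}-1)\le 4e\,2^{(1+\delta)/2}\F(\beta)^{(3+\delta)/2}e^{2\Delta\beta}\,u(e^{2\delta\Delta\beta}-1)\le 1/8$ for $a\le a_0+v$, whence $\rho_a\le 1/32$ and $\eta_a\le 1/4$.
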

\begin{proof}
We follow the approach described in Section \ref{sec:overviewgibbs}  and focus on proving inequalities \eqref{eq:ineq-reference} and \eqref{eq:ineq-residual}. Set $f = \F(\beta)$.
On the one hand, Proposition \ref{prop:decayH01} of Appendix \ref{app:decay2} shows for all $a \geq 0$ \eqref{eq:ineq-reference} holds with
\[
\lambda_a \geq 1 - f e^{-2a}.
\]

On the other hand, Theorem \ref{thm:bound-residual-K-KM} of Appendix \ref{app:boundKK} shows that for any $X$ even and any $a \geq 0$,
\begin{equation}
\|\cR X\|_{a} \leq \eta_a \|\cN X\|_{a}
\end{equation}
with
\begin{equation}
\label{eq:etaa}
\eta_a = 4e^{2\beta \Delta + a}f\frac{\rho_a}{1-\rho_a} 
\end{equation}
assuming
\[
\rho_a = u (e^{2\beta \Delta \delta} - 1) e^{a\delta} < 1.
\]
We now show that under the assumption \eqref{eq:assumptionub} and $a \in [a_0,a_0+v]$, we have $\lambda_a - \eta_a \geq 1/4$.
First note that for $a \geq a_0 = \log(2 f)/2$  we have
\[
\lambda_a \geq 1/2.
\]
We are now going to analyze the numerator of $\eta_a$ given in \eqref{eq:etaa}. Call $x = \Delta \beta$, and write $\rho_a = \kappa e^{a\delta}$ with
\[
\kappa = u(e^{2\delta x}-1).
\]
Since $a(1+\delta) \leq (a_0+v)(1+\delta) \leq a_0(1+\delta)+1$ we have
\begin{equation}
\label{eq:keyineqs}
\begin{aligned}
4e^{2x + a} f \rho_a &= 4e^{2x+a} f \kappa e^{a\delta}\\
&= 4e^{a(1+\delta)} f \kappa e^{2x}\\
&\leq 4e^{a_0(1+\delta)+1} f \kappa e^{2x}\\
&\leq 4 e 2^{(1+\delta)/2} f^{(3+\delta)/2} \kappa e^{2x}\\
&\leq 1/8
\end{aligned}
\end{equation}
where the last inequality is precisely the assumption \eqref{eq:assumptionub}.
Since $f \geq 1, x, a \geq 0$ the inequalities above also tell us that $\rho_a \leq 1/32 \leq 1/2$, and that $\eta_a \leq \frac{1/8}{1/2} = 1/4$. Thus $\lambda_a - \eta_a \geq 1/4$ for the range of $a$ provided.
\end{proof}

\subsection{Short-step approximation}

\newcommand{\cPt}{\widetilde{\cP}}

In this section we show how for small $h$, the short-time evolution $\cP_h X$ can be well approximated by some $\widetilde{\cP}_h X$ that can be computed efficiently. Our main result is the following.

\begin{theorem}
\label{thm:efficient}
Under the assumptions of Theorem \ref{thm:gibbs1}, for any $0< h < 1$, there is a linear mapping $\widetilde{\cP}_h : \cA \to \cA$ such that the following holds for all $X \in \cA_{\leq \ell}$:
\begin{equation}
\label{eq:Ph-Ptildeh}
\|\cP_h X - \widetilde{\cP}_h X\|_{a_0} \leq c\ell^2 h^2 \|X\|_{a_0}
\end{equation}
where $c > 0$ is an absolute constant.
Furthermore $\widetilde{\cP}_h$ can be computed in time $N^{O(|\log h|)}$.
\end{theorem}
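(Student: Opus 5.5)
Our approach is to take $\widetilde{\cP}_h = I + h\,\cK^\beta_M(H)$, where $\cK^\beta_M(H)$ is obtained from $\cK^\beta(H)$ by replacing the imaginary-time evolved operators $\tau^H_{\pm\beta/2}(B_i)$ with truncated Dyson expansions in powers of $u$. We then split the error into two pieces,
\[
\|\cP_hX-\widetilde{\cP}_hX\|_{a_0}\le \|\cP_hX-(I+h\cK^\beta)X\|_{a_0}+h\|(\cK^\beta-\cK^\beta_M)X\|_{a_0},
\]
and bound the first by $O(\ell^2h^2)\|X\|_{a_0}$ and the second by $O(h^2)\|X\|_{a_0}$.

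\textbf{Dyson expansion and approximate generator.} Write $\tau^H_{s}(B_i)=\tau^{H_0}_s\big(\sum_{k\ge0}u^k D_k(s)\big)$, where $D_k$ is the $k$-th term of the interaction-picture Dyson series built from $u[V,\cdot]$. Since $\tau^{H_0}_s$ preserves degree and $[V,\cA_r]\subseteq\cA_{\le r+\delta}$, the operator $D_k$ has degree at most $1+k\delta$. The appendix estimate behind Theorem~\ref{thm:bound-residual-K-KM} (via Lemma~\ref{lemma:inequalities} and Lemma~\ref{wth:lem:weighted-product}) shows that the weighted norm of the $k$-th term is bounded geometrically by $f e^{a}\rho_a^k$ with $\rho_a\le 1/32$. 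Truncating at order $M=O(|\log h|)$ therefore makes the tail of $\cK^\beta-\cK^\beta_M$ at most $h\cdot\eta$ in operator norm from $\|\cdot\|_{a_0}$ to $\|\cN\cdot\|_{a_0}$-type quantities. Combined with $\|\cN X\|_{a_0}\le \ell\|X\|_{a_0}$ for $X\in\cA_{\le\ell}$, this gives a contribution of $O(\ell h^2)\|X\|_{a_0}$. The kept terms are explicit: nested commutators of $V$ with $B_i$ integrated against simplex volumes in $s$, with $\tau^{H_0}_s$ acting as an explicit $N\times N$ matrix on the linear span of the $\gamma_j$. Each has degree $O(M)$ and hence $N^{O(M)}$ coefficients, so $\cK^\beta_M X$ is computable in time $N^{O(|\log h|)}$ (times $\poly(N,\ell)$ work per monomial of $X$).

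\textbf{Euler remainder.} We use
\[
\cP_h-I-h\cK^\beta=\int_0^h(h-s)(\cK^\beta)^2\cP_s\,\dd s .
\]
This requires bounding $\|(\cK^\beta)^2\cP_sX\|_{a_0}$. By Theorem~\ref{thm:gibbs1}, $\cP_s$ is a contraction for $\|\cdot\|_{a}$ for every $a\in[a_0,a_0+v]$. The needed ingredients are:
\begin{itemize}
\item the bound $\|\cK^\beta Y\|_{a_0}\lesssim \|\cN Y\|_{a_0+\epsilon}$, obtained from $\lambda$/$\eta$-type bounds on $\cK_0$ and $\cR$, using that each $[A_i,\cdot]$ term with $B_i$ of bounded weighted norm costs a factor $\deg$;
\item the inequality $\deg\cdot e^{-\epsilon\deg}\le 1/(e\epsilon)$, which trades degree factors for a shift in the weight.
\end{itemize}
Applying the first bound twice with the weight shifts split inside $[a_0,a_0+v]$ and using contractivity of $\cP_s$ at weight $a_0+v$ gives $\|(\cK^\beta)^2\cP_sX\|_{a_0}\lesssim \|X\|_{a_0+v}\le e^{v\ell}\|X\|_{a_0}$. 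That exponential factor is too lossy.

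\textbf{Main obstacle.} The difficulty is obtaining the polynomial factor $\ell^2$ rather than $e^{v\ell}$. We expect to use the semigroup identity $(\cK^\beta)^2\cP_s=\cP_s(\cK^\beta)^2$. Contractivity at $a_0$ then reduces the problem to $\|(\cK^\beta)^2X\|_{a_0}$. For this we bound $\|\cK^\beta Y\|_{a_0}\le c'(\deg Y+\delta)\|Y\|_{a_0}$, expanding each term per monomial and using that $\cK^\beta$ raises degree only through the Dyson terms, whose contributions are summable at weight $a_0$ by the $\rho_a<1$ condition. Applying this twice, with $\deg(\cK^\beta X)$ controlled weightedly, yields $c\ell^2\|X\|_{a_0}$. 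The absolute constant $c$ follows from the explicit numerics in \eqref{eq:keyineqs}.
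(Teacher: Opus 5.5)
Your architecture matches the paper's. You take $\widetilde{\cP}_h=I+h\cK^\beta_M$ with $M\approx\log_2(1/h)$. The Dyson tail gives $h\|(\cK^\beta-\cK^\beta_M)X\|_{a_0}\le h^2\|\cN X\|_{a_0}\le \ell h^2\|X\|_{a_0}$. The Euler remainder is handled by commuting $\cP_s$ past $(\cK^\beta)^2$ and using contractivity at weight $a_0$.

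\textbf{The gap.} It lies in the one nontrivial estimate, $\|(\cK^\beta)^2X\|_{a_0}\le c\ell^2\|X\|_{a_0}$ (Lemma~\ref{lem:boundK2}). ``Applying $\|\cK^\beta Y\|_{a_0}\le c'(\deg Y+\delta)\|Y\|_{a_0}$ twice'' does not work. The untruncated $\tau^H_{\pm\beta/2}(B_i)$ contains Dyson terms $Y_m$ of every degree $1+m\delta$, so $\cK^\beta X\notin\cA_{\le\ell+\delta}$ and $\deg(\cK^\beta X)$ can be of order $N$. The second application then costs a factor of order $N$, not $\ell$. The correct first-order statement is $\|\cK^\beta Y\|_a\le 2\|\cN Y\|_a$, from Proposition~\ref{prop:boundK0} plus Theorem~\ref{thm:bound-residual-K-KM} with $M=1$. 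It controls the output of $\cK^\beta$ only in weighted norm, never in degree. So ``$\deg(\cK^\beta X)$ controlled weightedly'' is exactly the missing lemma.

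\textbf{How the paper closes it.} The paper re-runs the Dyson tail bound with the weight $\cN e^{a\cN}$ in place of $e^{a\cN}$. Each $Y_m$ picks up a factor $\deg Y_m\le d+m\delta$, which is still summable because $\rho_a<1$. This gives $\|\cN(\cK^\beta-\cK_0)X\|_a\le c\|\cN^2X\|_a$, and analogously $\|\cN\cK_0X\|_a\le\frac32\|\cN^2X\|_a$. Hence $\|(\cK^\beta)^2X\|_a\le 2\|\cN\cK^\beta X\|_a\le c'\|\cN^2X\|_a\le c'\ell^2\|X\|_a$.

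\textbf{Your abandoned trick also works.} The degree-for-weight trade you dropped closes the gap, and more cheaply, if the shift is $\epsilon=1/\ell$ rather than a fixed fraction of $v$. By \eqref{eq:keyineqs}, the bound $\|\cK^\beta Y\|_a\le 2\|\cN Y\|_a$ holds for every $a\in[a_0,a_0+v]$. Together with $\|\cN Z\|_{a_0}\le (e\epsilon)^{-1}\|Z\|_{a_0+\epsilon}$, this gives, for $\ell\ge 1+\delta$,
\[
\|(\cK^\beta)^2X\|_{a_0}\le 2\|\cN\cK^\beta X\|_{a_0}\le \frac{2}{e\epsilon}\|\cK^\beta X\|_{a_0+\epsilon}\le\frac{4}{e\epsilon}\|\cN X\|_{a_0+\epsilon}\le \frac{4\ell e^{\epsilon\ell}}{e\epsilon}\|X\|_{a_0}=4\ell^2\|X\|_{a_0}.
\]
For $1\le\ell<1+\delta$, take $\epsilon=v$ instead, which gives $4(1+\delta)\ell^2$, a $\delta$-dependent constant. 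The $e^{v\ell}$ loss you found came from the choice of shift, not from the method. This route also avoids the second weighted Dyson computation the paper needs.
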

\begin{proof}
To prove this theorem, we introduce for each $M \geq 0$, the first-order Euler approximations
\begin{equation}
\label{eq:cPtgibbs}
\cPt_{h,M} X = (I + h \cK^{\beta}_{M}) X 
\end{equation}
where $\cK^{\beta}_M$ is obtained from $\cK^{\beta}$ by replacing the terms $\tau_{\pm \beta/2}(B_i)$ in \eqref{eq:def-Kbeta} by the first $M$ terms of the Dyson series (see Appendix \ref{app:complex-time}). Then
\[
\begin{aligned}
\cP_hX - \cPt_{h,M}X &= \Bigl(\cP_h X - (I+h \cK^{\beta})X\Bigr) + h(\cK^{\beta} - \cK^{\beta}_{M}) X\\
& =\int_0^h (h-s) \cP_s (\cK^{\beta})^2 X\,\dd s + h(\cK^{\beta} - \cK^{\beta}_{M}) X.
\end{aligned}
\]
Set $a=a_0$. 

Since $\cP_s$ is nonexpansive in $\|\cdot\|_{a}$ norm we have
\[
\left\|\int_0^h (h-s) \cP_s (\cK^{\beta})^2 X\,\dd s\right\|_{a} \leq \frac{h^2}{2} \|(\cK^{\beta})^2 X\|_{a} \leq \frac{h^2}{2} c \ell^2 \|X\|_{a}
\]
for $\deg X \leq \ell$, where we used the fact
\begin{equation}
\label{eq:boundK2}
\|(\cK^{\beta})^2 X\|_{a} \leq c \ell^2 \|X\|_a,
\end{equation}
which we prove below in Lemma \ref{lem:boundK2}.
Furthermore we know from Theorem \ref{thm:bound-residual-K-KM} that
\begin{equation}
\|(\cK^{\beta} - \cK^{\beta}_M)X\|_{a} \leq e^{2\beta \Delta+a} f \cdot (M+1) \frac{\rho^M}{(1-\rho)} \|\cN X\|_{a}
\leq 2^{-(M-1)} \|\cN X\|_a \leq 2^{-(M-1)} \ell \|X\|_a,
\end{equation}
where $\rho = \rho_a$ and we used the fact that $e^{2\beta \Delta + a} f \rho/(1-\rho) \leq (1/32)/(7/8) \leq 1$ (see \eqref{eq:keyineqs}) and $(M+1)\rho^M \leq (2\rho)^M \leq (1/2)^M$.
It follows that
\[
\|\cP_h X - \cPt_{h,M} X\|_{a} \leq \left( c h^2 + h 2^{-(M-1)}  \right) \ell^2  \|X\|_{a}.
\]
Note that the second term in the parenthesis scales with $h$, whereas we need the error to scale with $h^2$. Choose $M$ such that $2^{-(M-1)} \leq h$, i.e.,
\be
\label{eq:choiceM}
M = 1+\left\lceil \log_2(1/h) \right\rceil.
\ee
This gives, as desired:
\[
\|\cP_h X - \cPt_{h,M} X\|_{a} \leq (c+1) \ell^2 h^2 \|X\|_{a}.
\]
From point 4 of Lemma \ref{lem:complex-time-evolution} we know that $\cPt_{h,M}$ can be computed in time $N^{O(M)} = N^{O(|\log h|)}$.
It remains to prove the bound \eqref{eq:boundK2}.

\begin{lemma}
\label{lem:boundK2}
There is an absolute constant $c$ such that for every even observable $X$ of degree $\ell$, we have
    $\|(\cK^{\beta})^2 X\|_{a} \leq c \ell^2 \|X\|_{a}$.
\end{lemma}
\begin{proof}
We write $\cK^{\beta} = \cK_0 + (\cK^\beta- \cK_0)$ where $\cK_0 = \cK^{\beta}(H_0)$ is the reference generator.
We know from Theorem \ref{thm:bound-residual-K-KM}
\[
\|(\cK^{\beta}-\cK_0) X\|_a \leq 2 f e^{2x+a} \frac{\rho_a}{1-\rho_a} \|\cN X\|_a \leq \frac{1}{4} \|\cN X\|_a.
\]
For the reference generator $\cK_0$, from Proposition \ref{prop:boundK0} in Appendix \ref{app:decay2} we have (using $a = a_0 = 1/2 \log(2 f)$)
\[
\|\cK_0 X\|_a \leq (1+ f e^{-2a}) \|\cN X\|_a \leq 3/2 \|\cN X\|_a.
\]
Thus putting together, we get
\[
\|\cK^{\beta} X\|_a \leq 2 \|\cN X\|_a,
\]
and so, in particular $\|(\cK^{\beta})^2 X\|_a \leq 2 \|\cN \cK^{\beta} X\|_a$.

Using the same ideas as the proof of point 3 in Lemma \ref{lem:complex-time-evolution} (using the weight $\cN e^{a\cN}$ instead of $e^{a\cN}$, which adds a factor $\deg(Y_m) \leq d + m\delta$ in \eqref{eq:eaNYm} and changes the term $(\alpha)_M$ in \eqref{eq:seriestail} by $(\alpha)_{M+1}$) we can also show that
\[
\|\cN (\cK^{\beta} - \cK_0) X\|_{a} \leq 2 f e^{2x+a_0} \frac{\rho_a}{(1-\rho_a)^2} \|\cN^2 X \|_a \leq c \|\cN^2 X\|_a
\]
for some absolute constant $c$, using our assumptions. We can also show, in the same way as the proof of Proposition \ref{prop:boundK0} that
\[
\|\cN \cK_0 X\|_a \leq (1+ f e^{-2a}) \|\cN^2 X\|_a \leq (3/2) \|\cN^2 X\|_a.
\]
So putting things together we get
\[
\|(\cK^{\beta})^2 X\|_a \leq c' \|\cN^2 X\|_a \leq c'\ell^2 \|X\|_a
\]
when $\deg(X) \leq \ell$.
\end{proof}
\end{proof}

\subsection{Simulation algorithm}

To simulate the dynamics $\cP_t A$ we use the following propagation algorithm up to time $t$
\be
\label{eq:propagation-algo-weak-gibbs}
\begin{cases}
A_0 = A\\
A_{j+1} = \Trunc_{\ell} \cPt_{h,M} A_j \qquad (j=0,\ldots,m-1),
\end{cases}
\ee
where $\cPt_{h,M}$ is defined in \eqref{eq:cPtgibbs} with $M$ given by \eqref{eq:choiceM}.

\begin{corollary}
\label{cor:simulation-gibbs}
Under the assumptions of Theorem \ref{thm:gibbs1}, fix a horizon $t > 0$, an $\eps > 0$ and an even observable $A \in \cA$ of degree $d$. 
The propagation algorithm \eqref{eq:propagation-algo-weak-gibbs} with $\ell = d+O(\log((1+t)/\eps))$ and $m = t/h = \poly(t,\ell,\eps^{-1})$ computes an approximation $\widetilde{A}$ of $\cP_t A$ that satisfies
\begin{equation}
\label{eq:convergencepropgibbs}
\|\cP_t A - \widetilde{A}\|_{\Fone} \leq \eps \|A\|_{a_0} \leq \eps e^{a_0 d} \|A\|_{\Fone}
\end{equation}
in time $N^{O(d+\log((1+t)/\eps))}$.
\end{corollary}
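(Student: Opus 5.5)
We will apply Corollary~\ref{cor:criterion}, or more directly Theorem~\ref{thm:abstract}, with the norm $\|\cdot\| = \|\cdot\|_{a_0}$. The first task is to check the contractivity hypotheses in this norm.

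\emph{Contractivity.} The truncation $\Trunc_\ell$ is trivially contractive in $\|\cdot\|_{a_0}$, since it only deletes coefficients of a weighted $\ell_1$ sum. Contractivity of $\cP_s$ follows from Theorem~\ref{thm:gibbs1} with $a=a_0$: the right-hand side of \eqref{eq:decay-semigroup} is nonpositive, so $\|\cP_s X\|_{a_0}\le \|X\|_{a_0}$ for all even $X$. We work throughout on the even subalgebra, which is invariant under $\cK^\beta$, $\cK^\beta_M$ and $\Trunc_\ell$, since all jump-operator products are even.

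\emph{Local error.} The local error hypothesis \eqref{eq:local-error} is given by Theorem~\ref{thm:efficient}, with $\delta_\ell(h)\le c\ell^2h^2$. This is exactly the form $q(\ell)h^2$ with $q(\ell)=c\ell^2$.

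\emph{Tail bound.} This is the main step. We use the fact that Theorem~\ref{thm:gibbs1} gives contraction not only at $a_0$ but at $a_1=a_0+v$. For $\deg A\le d$ and any $s\ge0$,
\[
\|(I-\Trunc_\ell)\cP_sA\|_{a_0}\le e^{-v\ell}\|\cP_sA\|_{a_1}\le e^{-v\ell}\|A\|_{a_1}\le e^{-v(\ell-d)}\|A\|_{a_0}.
\]
The first inequality holds because every monomial of degree $>\ell$ gains a factor of at least $e^{v\ell}$ when passing from weight $a_0$ to $a_1$. The second is contraction at $a_1$. The last holds because $A$ has degree at most $d$. Hence $\eta_\ell(t;A)\le e^{-v(\ell-d)}$ uniformly in $t$, with $v=1/(1+\delta)$ a constant.

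\emph{Error and runtime.} Corollary~\ref{cor:criterion} now gives $\ell=d+O(\log((1+t)/\eps))$ and $m=\poly(t,\ell,1/\eps)$, with $\|A_m-\cP_tA\|_{a_0}\le\eps\|A\|_{a_0}$. Since $\|\cdot\|_{\Fone}\le\|\cdot\|_{a_0}$, this yields the first inequality in \eqref{eq:convergencepropgibbs}. The bound $\|A\|_{a_0}\le e^{a_0d}\|A\|_{\Fone}$ follows because $\deg A\le d$. For the runtime, each step applies $I+h\cK^\beta_M$ to an operator of degree at most $\ell$. With $M=1+\lceil\log_2(1/h)\rceil$, the Dyson terms raise the degree by $O(M)$ before truncation, so each step costs $N^{O(\ell+M)}$ by point~4 of Lemma~\ref{lem:complex-time-evolution}, and there are $m$ steps.

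\emph{Main obstacle.} The delicate point is that $M=O(\log(1/h))$ with $h=t/m$ and $m=\poly(t,\ell,1/\eps)$, so $M=O(\log(t\ell/\eps))$. This is absorbed into $O(d+\log((1+t)/\eps))$, since $\log \ell=O(\log(d+\log((1+t)/\eps)))$. Consequently the total time is $m\cdot N^{O(\ell+M)}=N^{O(d+\log((1+t)/\eps))}$.
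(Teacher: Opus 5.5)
Your proposal is correct and follows the paper's proof. You apply Corollary~\ref{cor:criterion} in the norm $\|\cdot\|_{a_0}$, with $q(\ell)=c\ell^2$ from Theorem~\ref{thm:efficient} and the tail bound $e^{-v(\ell-d)}$ obtained by comparing the weights $a_0$ and $a_0+v$; you also make explicit two points the paper leaves implicit, namely invariance of the even subalgebra (needed because Theorem~\ref{thm:gibbs1} only gives contraction for even $X$) and absorption of the Dyson order $M=O(\log(1/h))$ into the runtime exponent (which strictly picks up an extra $\log(1/t)$ when $t<1$, a looseness the paper's statement shares).
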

\begin{proof}
We use Theorem \ref{thm:abstract} and Corollary \ref{cor:criterion} with $\|\cdot\| = \|\cdot\|_{a_0}$.
Note that since we have contraction of $\|\cP_t X\|_{a}$ for all $a \in [a_0,a_0 + v]$, this allows us to show that for $\deg X = d$, and $\ell \geq d$
\[
\|(I-\Trunc_{\ell}) \cP_t X\|_{a_0} \leq e^{-v(\ell-d)} \|X\|_{a_0} \quad \forall t \geq 0.
\]
So we apply Corollary \ref{cor:criterion} with $\eta(t;A) \leq Ce^{-v(\ell-d)}$ and $q(\ell) = c \ell^2$ from \eqref{eq:Ph-Ptildeh}. This gives $\|\cP_t A - \widetilde{A}\|_{a_0} \leq \eps \|A\|_{a_0} \leq \eps e^{a_0 d} \|A\|_{\Fone}$, and we use $\|\cP_t A - \widetilde{A}\|_{\Fone} \leq \|\cP_t A - \widetilde{A}\|_{a_0}$ to get the desired result \eqref{eq:convergencepropgibbs}.
\end{proof}

\subsection{Thermal observables}

\begin{proposition}[Convergence speed]
\label{prop:convergencespeed}
Under the assumptions of Theorem \ref{thm:gibbs1}, we have for any even degree-$d$ observable $X \in \cA$
\[
|\tr(\cP_t X) - \Tr(\rho_{\beta,H} X)| \leq e^{-t/4} e^{a_0 d} \|X\|_{\Fone}
\]
with $a_0$ as defined in \eqref{eq:defa0}.
\end{proposition}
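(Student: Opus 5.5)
The plan is to combine the decay inequality \eqref{eq:decay-semigroup} of Theorem~\ref{thm:gibbs1}, taken at $a=a_0$, with the invariance of the Gibbs expectation under $\cP_t$. The aim is to show that the nonidentity part of $\cP_t X$ decays at rate $1/4$ in the $\|\cdot\|_{a_0}$ norm.

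\emph{Step 1: reduce to a traceless observable.} Write $c_X=\Tr(\rho_{\beta,H}X)$ and $Y=X-c_X I$. Since $I\in\ker\cK^\beta$, we have $\cP_t I=I$, so $\cP_t Y=\cP_t X-c_X I$. Because the adjoint semigroup fixes $\rho_{\beta,H}$ (see Section~\ref{sec:overviewgibbs}), we get $\Tr(\rho_{\beta,H}\cP_t Y)=\Tr(\rho_{\beta,H}Y)=0$ for all $t$. We also have $\tr(\cP_t X)-c_X=\tr(\cP_tY)$, which is the identity coefficient $y_I(t)$ of $\cP_t Y$. Note that $Y$ is still even.

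\emph{Step 2: a Gronwall argument.} For every nonidentity monomial, $\deg(P)\ge1$. Hence, splitting $\cP_tY=y_I(t)I+Z_t$ with $Z_t$ the nonidentity part,
\[
\|\cN\cP_t Y\|_{a_0}=\|\cN Z_t\|_{a_0}\ge\|Z_t\|_{a_0}.
\]
Next I relate $y_I(t)$ to $Z_t$. Using $0=\Tr(\rho_{\beta,H}\cP_tY)=y_I(t)+\Tr(\rho_{\beta,H}Z_t)$ and $\|P\|_\infty=1$, I get $|y_I(t)|\le\|Z_t\|_{\Fone}\le\|Z_t\|_{a_0}$. Therefore
\[
\|\cP_tY\|_{a_0}=|y_I(t)|+\|Z_t\|_{a_0}\le2\|Z_t\|_{a_0}\le2\|\cN\cP_tY\|_{a_0}.
\]
Combined with \eqref{eq:decay-semigroup}, this gives only $\frac{\dd}{\dd t}\|\cP_tY\|_{a_0}\le-\frac18\|\cP_tY\|_{a_0}$, which would yield the rate $1/8$.

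To get the claimed $1/4$, I will instead track $g(t)=\|Z_t\|_{a_0}$ directly. Note that \eqref{eq:decay-semigroup} controls the full norm $|y_I|+g$, not $g$ alone. So the cleaner route is to bound
\[
|\tr(\cP_tY)|=|y_I(t)|\le g(t)
\]
and to argue that $\frac{\dd}{\dd t}(|y_I|+g)\le -\frac14 g$. This gives $|y_I(t)|+g(t)\le \|Y\|_{a_0}$ and decay of $g$. I expect this bookkeeping to be the main obstacle: getting exactly the constant $e^{-t/4}$ rather than $e^{-t/8}$ or a factor of $2$.

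If the exact rate proves stubborn, the fallback is to observe that the derivative of $|y_I|$ is controlled. Since $\frac{\dd}{\dd t}y_I=\tr(\cK^\beta Z_t)$, a direct estimate of that term lets one close the inequality $g'\le-\frac14 g$ with prefactor $\|Y\|_{a_0}$, which gives $|y_I(t)|\le g(t)\le e^{-t/4}\|Y\|_{a_0}$.

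\emph{Step 3: bound the initial norm.} It remains to bound $\|Y\|_{a_0}$ in terms of $e^{a_0d}\|X\|_{\Fone}$. Only the nonidentity part of $Y$ enters $g(0)$. That part equals the nonidentity part of $X$, so $g(0)\le e^{a_0 d}\|X\|_{\Fone}$, using $\deg X\le d$. This is precisely why working with $g$ rather than the full norm avoids an extra factor from $c_X$, whose contribution would otherwise be bounded by $|c_X|\le\|X\|_{\Fone}$.
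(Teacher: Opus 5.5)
Your Step 1, your Step 3, and the estimate $|y_I(t)|=|\Tr(\rho_{\beta,H}Z_t)|\le\|Z_t\|_{\Fone}\le g(t)$ are all correct. The gap is in Step 2: nothing there establishes decay at rate $1/4$.

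What you actually carry out gives only $\|\cP_tY\|_{a_0}\le e^{-t/8}\|Y\|_{a_0}$. The inequality $\frac{\dd}{\dd t}(|y_I|+g)\le-\frac14 g$ by itself only controls $\int_0^\infty g$, and combining it with $|y_I|\le g$ brings you back to rate $1/8$. The fallback goes in the wrong direction. Writing $g'=(|y_I|+g)'-|y_I|'$, an upper bound on $g'$ needs a \emph{lower} bound on $\frac{\dd}{\dd t}|y_I|$. Estimating $\frac{\dd}{\dd t}y_I=\tr(\cK^\beta Z_t)$ in absolute value therefore yields only $g'\le-\frac14\|\cN Z_t\|_{a_0}+|\tr(\cK^\beta Z_t)|$. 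The available bound on the last term is $\|\cK^\beta Z_t\|_{a_0}\le 2\|\cN Z_t\|_{a_0}$, which swamps the gain. As long as you apply \eqref{eq:decay-semigroup} only along the single trajectory $\cP_tY$, whose identity coefficient is generally nonzero, the estimate cannot close.

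The missing idea is that \eqref{eq:decay-semigroup} holds for \emph{every} even initial observable, so you can restart the semigroup from the nonidentity part. Since $\cP_sI=I$, you have
\[
Z_{t+s}=(I-\Trunc_0)\cP_s\bigl(y_I(t)I+Z_t\bigr)=(I-\Trunc_0)\cP_sZ_t .
\]
Two further facts make this usable. First, $I-\Trunc_0$ is a contraction for $\|\cdot\|_{a_0}$. Second, $Z_t$ is even, because $\cK^\beta$ preserves the even subalgebra. Hence $g(t+s)\le\|\cP_sZ_t\|_{a_0}$, with equality at $s=0$. Applying \eqref{eq:decay-semigroup} with initial observable $Z_t$ then gives
\[
\limsup_{s\downarrow0}\frac{g(t+s)-g(t)}{s}\le-\frac14\|\cN Z_t\|_{a_0}\le-\frac14\,g(t).
\]
Thus $g(t)\le e^{-t/4}g(0)$, and your Step 3 finishes the proof.

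This is exactly the paper's argument. The paper works with $(I-\Trunc_0)\cP_tX$ from the start, so the $\rho_{\beta,H}$-centering of your Step 1 is unnecessary. Its role is played by the identity $\tr(\cP_tX)-\Tr(\rho_{\beta,H}X)=-\Tr\bigl(\rho_{\beta,H}(I-\Trunc_0)\cP_tX\bigr)$.
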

\begin{proof}
Let $X \in \cA$, and let $Y_t = (I-\Trunc_0)\cP_t X = \cP_t X - \tr(\cP_t X) I$. We want to show that $\|Y_t\|_{\Fone}$ converges exponentially fast to 0. Note that (cf. Equation \eqref{eq:ddtt12})
\[
\frac{\dd}{\dd t} \|Y_t\|_{\Fone} \leq \left.\frac{\dd}{\dd \tau}\right|_{\tau=0} \|e^{\tau \cK_0} Y_t\|_{\Fone} + \|\cR Y_t\|_{\Fone}
\]
assuming $\cK^{\beta} = \cK_0 + \cR$. Indeed, one can prove the above using the same elementary method in \eqref{eq:ddtt12proof} and by using the fact that $\cK^{\beta}(I) = 0$ which implies that $Y_{t+\dd t} \approx Y_t + \dd t (I-\Trunc_0) \cK^{\beta} Y_t$, and that $\|(I-\Trunc_0)X\|_{\Fone} \leq \|X\|_{\Fone}$.

Under the assumptions of Theorem \ref{thm:gibbs1}, for any $a\in[a_0,a_0+v]$ we get
\[
\frac{\dd}{\dd t} \|Y_t\|_{a} \leq -\frac{1}{4} \|\cN Y_t\|_{a} \leq -\frac{1}{4} \|Y_t\|_a,
\]
where we used the fact that $\|\cN Y_t\|_a \geq \|Y_t\|_a$ since the constant term of $Y_t$ is zero. This implies that
\[
\|Y_t\|_{a} \leq e^{-t/4} \|Y_0\|_{a} 
\]
In particular, since $\tr(\cP_t X) - \Tr(\rho_{\beta,H} X) = -\Tr(\rho_{\beta,H} Y_t)$ we have
\[
|\tr(\cP_t X) - \Tr(\rho_{\beta,H} X)|  = |\Tr(\rho_{\beta,H} Y_t)| \leq \|Y_t\|_{\infty} \leq \|Y_t\|_{\Fone} \leq \|Y_t\|_a \leq e^{-t/4} \|X\|_{a} \leq e^{-t/4} e^{a d} \|X\|_{\Fone}.
\]
\end{proof}

Here below is pseudocode for the Majorana Propagation algorithm. The parameter $\eta$ is simply a rescaling of $\eps$ following \eqref{eq:convergencepropgibbs}, while the choice of $t$ is a direct consequence of the convergence speed in Proposition \ref{prop:convergencespeed}. The parameters $\ell,m$ follow directly from Corollary \ref{cor:criterion}, and finally the truncation constant $M$ is specified according to \eqref{eq:choiceM}.
\begin{algorithm}[H]
\caption{Thermal observables by Majorana propagation}
\label{alg:thermal-observable-propagation}
\begin{algorithmic}[1]
\Require $H=H_0+uV$, $\beta\geq0$, an even observable $X$, and $0<\varepsilon<1$.
\Ensure $|\widetilde\mu-\operatorname{Tr}(\rho_{\beta,H}X)|
 \leq\varepsilon\|X\|_{\Fone}$.
\State $d\gets\deg X$;
 $\eta\gets\tfrac{\varepsilon}{2}e^{-a_0d}$;
 $t\gets4\log(1/\eta)$.
\State $\ell\gets\left\lceil C_1\left(d+\log\frac{1+t}{\eta}\right)\right\rceil$.
\State $m\gets\left\lceil C_2(1+\ell)^2(1+t)^2/\eta\right\rceil$;
 $h\gets t/m$.
\State $M\gets1+\left\lceil\log_2(1/h)\right\rceil$.
\State Precompute the operators $B_i$ defined in \eqref{eq:pseudo-jumpA}, and the truncated terms $\tau_{\pm\beta/2}^{M}(B_i)$ that are used in the definition of $\cK_{\beta}^M$ using point 4 of Lemma \ref{lem:complex-time-evolution}.
\State $X_0\gets X$.
\For{$j=0,\ldots,m-1$}
  \State $X_{j+1}\gets \Trunc_{\ell}\bigl(X_j+h\cK_M^\beta X_j\bigr)$.
\EndFor
\State \Return $\widetilde\mu\gets\operatorname{tr}(X_m)$.
\end{algorithmic}
\end{algorithm}

The following theorem summarizes the Majorana propagation algorithm to compute thermal expectations, based on the pseudo-Lindbladian \eqref{eq:def-Kbeta}.

\begin{theorem}
\label{thm:gibbs-observables}
Under the assumptions of Theorem \ref{thm:gibbs1},
let $\eps > 0$ and $A$ an even degree-$d$ observable. Consider running the propagation algorithm \eqref{eq:propagation-algo-weak-gibbs} to time $t = 4 \log(2 e^{a_0 d}/\eps)$ and with precision $\eps' = \eps e^{-a_0 d} / 2$, and let $\widetilde{A}$ be the obtained approximation of $\cP_t A$. Let $\tilde{\mu} = \tr(\widetilde{A})$. Then $|\tilde{\mu} - \Tr(\rho_{\beta,H} A)| \leq \eps\norm{A}_1$. The complexity of the algorithm is $N^{O(\deg A + \log((1+t)/\eps))}$.
\end{theorem}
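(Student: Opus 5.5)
The plan is a triangle inequality that splits the error into a convergence part and a simulation part:
\[
|\tilde\mu-\Tr(\rho_{\beta,H}A)|\le |\tr(\widetilde A)-\tr(\cP_tA)| + |\tr(\cP_tA)-\Tr(\rho_{\beta,H}A)|.
\]
We bound each part by $\tfrac{\eps}{2}\|A\|_{\Fone}$. Here $\|A\|_1$ in the statement is read as the Fourier-$1$ norm $\|A\|_{\Fone}$, in line with Algorithm~\ref{alg:thermal-observable-propagation}.

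\emph{Convergence term.} Proposition~\ref{prop:convergencespeed} applies directly because $A$ is even of degree $d$. It gives
\[
|\tr(\cP_tA)-\Tr(\rho_{\beta,H}A)|\le e^{-t/4}e^{a_0d}\|A\|_{\Fone}.
\]
With $t=4\log(2e^{a_0d}/\eps)$ we have $e^{-t/4}=\eps e^{-a_0 d}/2$, so this term is at most $\tfrac{\eps}{2}\|A\|_{\Fone}$.

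\emph{Simulation term.} The normalized trace extracts the identity coefficient, so $|\tr(Y)|=|y_I|\le\|Y\|_{\Fone}$ for every $Y$. Hence
\[
|\tr(\widetilde A)-\tr(\cP_tA)|\le\|\widetilde A-\cP_tA\|_{\Fone}.
\]
Corollary~\ref{cor:simulation-gibbs} with precision $\eps'=\eps e^{-a_0d}/2$ bounds the right-hand side by $\eps' e^{a_0 d}\|A\|_{\Fone}=\tfrac{\eps}{2}\|A\|_{\Fone}$. Adding the two terms gives the claimed accuracy.

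\emph{Complexity.} Corollary~\ref{cor:simulation-gibbs} gives running time $N^{O(d+\log((1+t)/\eps'))}$, together with the $N^{O(|\log h|)}$ precomputation from Theorem~\ref{thm:efficient}. Since $m=\poly(t,\ell,1/\eps')$, $|\log h|$ is logarithmic in these parameters and is absorbed into the exponent.

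The main obstacle is tracking how $a_0$ enters through $\eps'$. We have $\log(1/\eps')=\log(2/\eps)+a_0d$ and $t=O(a_0d+\log(1/\eps))$. Since $a_0=\tfrac12\log(2\F(\beta))$ is a constant under the standing assumptions, $\log((1+t)/\eps')=O(d+\log((1+t)/\eps))$. This yields the stated bound $N^{O(\deg A+\log((1+t)/\eps))}$. If the dependence on $\beta$ is to be tracked instead, the extra $a_0d$ in the exponent should be noted explicitly.
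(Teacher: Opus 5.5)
Your proof is correct and follows the paper's argument: the same triangle inequality, with Proposition~\ref{prop:convergencespeed} bounding the convergence term and Corollary~\ref{cor:simulation-gibbs} together with $|\tr(Y)|\le\|Y\|_{\Fone}$ bounding the simulation term. Your complexity bookkeeping is more explicit than the paper's, which simply asserts the runtime: you correctly note that $\log(1/\eps')$ carries an extra $a_0 d$, which is absorbed into $O(d)$ only if $a_0=\tfrac12\log(2\F(\beta))$ is treated as a constant.
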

\begin{proof}
Combining Corollary \ref{cor:simulation-gibbs} and Proposition \ref{prop:convergencespeed} we immediately get:
\[
\begin{aligned}
|\Tr(\rho_{\beta,H} A) -\tilde{\mu}| &\leq |\Tr(\rho_{\beta,H} A) - \tr(\cP_t A)| + |\tr(\cP_t A) - \tr(\widetilde{A})|\\
&\leq \eps/2 \|A\|_{\Fone} + \|\cP_t A - \widetilde{A}\|_{\Fone}\\
&\leq \eps/2\|A\|_{\Fone} + \eps' e^{a_0 d} \|A\|_{\Fone} \leq \eps \|A\|_{\Fone}.
\end{aligned}
\]
\end{proof}

\paragraph{Acknowledgments} GF and HF were partially supported by UK Research and Innovation (UKRI) under the UK government's Horizon Europe funding guarantee EP/X032051/1. OF acknowledges funding by the European Research Council (ERC Grant AlgoQIP, Agreement No. 851716).

\paragraph{AI statement} OpenAI's ChatGPT Sol 5.6 and ChatGPT Astra 6 assisted with exploring and checking proof strategies and technical details. The structure and exposition of the manuscript are a result of the authors' work. The authors are fully responsible for the mathematical claims and the final manuscript.

\printbibliography

\clearpage
\newpage

\appendix

\section{Lemmas for Hamiltonian evolution}
\label{app:weak-interaction}

Recall that $\cN:\cA \to \cA$ is defined on the basis of monomials as $\cN(P) = \deg(P) P$. If we call $\P_r:\cA \to \cA$ the projection operator on the degree-$r$ homogeneous part, then $\cN = \sum_{r \geq 0} r \P_r$.

The next lemma plays a key role in Section \ref{sec:weak-interaction}. It is a degree-weighted generalization of \cite[Theorem 3.4]{facelli2026fast}.
\begin{lemma}
\label{lemma:master-commutator-bound}
Let $H = \sum_{P \in \cB} h_P P$ be a $\Delta$-sparse Hamiltonian with $\deg([H,X]) \leq \deg(X)+ \delta$ for any $X\in\cA$ and $|h_P| \leq 1$ for all $P$. Let $\cG = \ci [H,\cdot]$, and for $\phi:\ZZ \to \RR$, define
\[
\cG_{\phi} = \sum_{r,s\geq 0} \phi(s-r)\P_s \cG \P_r.
\]
Then for any $X,Y \in \cA$,
\be
\label{eq:master}
|\HSproduct{X}{\cG_{\phi} Y}| \leq 2 \Delta \|\phi\|_{\infty,\delta} \sqrt{\HSproduct{X}{\cN X}} \sqrt{\HSproduct{Y}{\cN Y}},
\ee
where $\|\phi\|_{\infty,\delta} = \max_{|k|\leq \delta} |\phi(k)|$.
\end{lemma}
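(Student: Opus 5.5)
We expand both operators in the monomial basis, $X=\sum_{Q} x_Q Q$ and $Y=\sum_{R} y_R R$. The bilinear form then becomes a sum over pairs of monomials, and we bound it with a weighted Cauchy--Schwarz (Schur test) argument. By orthonormality of $\BB$ for the normalized trace,
\[
\HSproduct{X}{\cG_\phi Y}=\sum_{R}\sum_{P:\,h_P\neq0,\,[P,R]\neq0} \ci\, h_P\, y_R\, \phi\bigl(\deg(PR)-\deg R\bigr)\, 2 f_{P,R}\, \overline{x_{S_{P,R}}}.
\]
Here we used property (iii): each nonzero commutator is $[P,R]=2f_{P,R}S_{P,R}$, a single monomial of modulus-one coefficient. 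Each term therefore contributes a factor $\P_s \cG \P_r$ with $r=\deg R$ and $s=\deg S_{P,R}$.

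The first step is to restrict the range of $\phi$. By the hypothesis $\deg([H,X])\le\deg X+\delta$, applied to $R$ and, via the skew-adjointness of $\cG$ (the relation $R\leftrightarrow S$ is symmetric), also to $S$, we get $|s-r|\le\delta$. Hence $|\phi(s-r)|\le\|\phi\|_{\infty,\delta}$ and
\[
|\HSproduct{X}{\cG_\phi Y}|\le 2\|\phi\|_{\infty,\delta}\sum_{(P,R)} |y_R|\,|x_{S_{P,R}}|.
\]
For symmetry we should note that $S=\pm \ci PR$ up to phase. So $R$ and $P$ determine $S$, and conversely $S$ and $P$ determine $R$, with $[P,S]\neq0$ as well, since $S$ anticommutes with $P$ exactly when $R$ does.

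The second step applies $|y_R||x_S|\le \tfrac12\bigl(t\,|y_R|^2 w_{R,S} + t^{-1}|x_S|^2 w_{R,S}^{-1}\bigr)$ with a weight chosen to balance the degrees. The simplest choice is the uniform AM--GM with weight $w\equiv1$. Summing over pairs, each $R$ appears at most $\Delta\deg R$ times by $\Delta$-sparsity. Each $S$ also appears at most $\Delta\deg S$ times, since $R$ is determined by $(P,S)$ with $[P,S]\neq0$ and $h_P\neq0$. This gives
\[
\sum_{(P,R)} |y_R||x_S|\le \tfrac{t}{2}\Delta\sum_R \deg(R)|y_R|^2+\tfrac{1}{2t}\Delta\sum_S\deg(S)|x_S|^2 = \tfrac{\Delta}{2}\bigl(t\HSproduct{Y}{\cN Y}+t^{-1}\HSproduct{X}{\cN X}\bigr).
\]
Optimizing over $t>0$ yields $\Delta\sqrt{\HSproduct{X}{\cN X}}\sqrt{\HSproduct{Y}{\cN Y}}$, and multiplying by $2\|\phi\|_{\infty,\delta}$ gives \eqref{eq:master}.

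The main obstacle is the counting in the second step. It must be a genuine bijection so that no pair is double counted, and the dual count requires sparsity applied to $S$ rather than $R$. I would verify carefully that the map $(P,R)\mapsto(P,S_{P,R})$ is injective and lands in the set counted by \eqref{eq:Delta-sparsity-assumption} for $Q=S$. This uses $P^2=I$, so that $R$ is recovered from $P$ and $S$ up to phase. The degree-window claim $|s-r|\le\delta$ also needs care in the downward direction: I would derive it from $[P,S]\propto R$ together with the degree hypothesis applied to $S$.
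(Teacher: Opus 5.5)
Your proposal is correct and follows essentially the same route as the paper: expand in the monomial basis, use the degree window $|\deg S_{P,R}-\deg R|\le\delta$ to pull out $\|\phi\|_{\infty,\delta}$, then bound the edge sum with the two sparsity counts (the fixed-output count uses the injection $(P,R)\mapsto P$). Your optimized AM--GM step is equivalent to the Cauchy--Schwarz inequality over the edge set that the paper uses. Your derivation of the downward degree bound via $[P,S]\propto R$ fills in a detail the paper only asserts.
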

\begin{proof}
Write
\[
 X=\sum_{S\in\BB}x_SS,
 \qquad
 Y=\sum_{Q\in\BB}y_QQ.
\]
Let 
\[
\mathsf E = \{(P,Q) \in \BB \times \BB : h_P \neq 0 \text{ and } [P,Q] \neq 0 \}.
\]
For every $(P,Q)\in\mathsf E$, write
\[
 PQ=f_{P,Q}S(P,Q),\qquad |f_{P,Q}|=1.
\]
We know for $(P,Q) \in \mathsf E$, $[P,Q] = 2PQ$.
Expanding the filtered commutator gives
\[
\begin{aligned}
 \langle X,\cG_\phi Y\rangle_\F
 &= \sum_{S,Q} \overline{x_S} y_Q \< S, \cG_{\phi} Q \> \\
 &= \sum_{S,Q} \overline{x_S} y_Q \phi(\deg S - \deg Q) \<S, \ci [H,Q]\>\\
 &= \sum_{P,S,Q} \overline{x_S} y_Q \phi(\deg S - \deg Q) h_P \<S, \ci [P,Q]\>\\
 &= \sum_{(P,Q) \in \mathsf E} \overline{x_{S(P,Q)}} y_Q \phi(\deg S(P,Q) - \deg Q) 2 \ci h_P f_{P,Q}.
\end{aligned}
\]
Since $|\deg S(P,Q) - \deg Q| \leq \delta$ for $(P,Q) \in \mathsf E$, and $|h_P| \leq 1$, this implies
\begin{align}
|\langle X,\cG_\phi Y\rangle_\F|
 &\leq
 2\|\varphi\|_{\infty,\delta}
 \sum_{(P,Q)\in\mathsf E}|x_{S(P,Q)}|\,|y_Q|                         \notag\\
 &\leq
 2\|\varphi\|_{\infty,\delta}
 \left(\sum_{(P,Q)\in\mathsf E}|x_{S(P,Q)}|^2\right)^{1/2}
 \left(\sum_{(P,Q)\in\mathsf E}|y_Q|^2\right)^{1/2}.
 \label{eq:edge-cs}
\end{align}
By the $\Delta$-sparsity assumption on $H$ (equation \eqref{eq:Delta-sparsity-assumption}), we have for any fixed $Q \in \BB$,

\begin{equation}
 \bigl|\{P:(P,Q)\in\mathsf E\}\bigr|
 \leq\Delta\deg(Q).
 \label{eq:input-count}
\end{equation}
Hence we have
\be
\sum_{(P,Q)\in\mathsf E}|y_Q|^2
 \leq\Delta\sum_Q\deg(Q)|y_Q|^2
 =\Delta\langle Y,\cN Y\rangle_\F.
\ee
Similarly, for a fixed output $S$, we have
\be
\label{eq:card-E(S)}
|\mathsf{E}(S)| \leq \Delta \deg(S)
\ee
where $\mathsf{E}(S) = \{(P,Q) \in \mathsf{E} : S(P,Q) = S\}$.
Indeed, note that $(P,Q) \in \mathsf{E}(S)$ implies $[P,S] \neq 0$ and so $(P,Q) \mapsto P$ is an injection from $\mathsf{E}(S)$ to $\{P : (P,S) \in \mathsf{E}\}$. Thus \eqref{eq:card-E(S)} holds by  $\Delta$-sparsity of the Hamiltonian.
Consequently,
\be
 \sum_{(P,Q)\in\mathsf E}|x_{S(P,Q)}|^2
 = \sum_{S} |x_S|^2 \sum_{(P,Q) \in \mathsf{E}(S)} 1
 \leq\Delta\sum_S\deg(S)|x_S|^2
 =\Delta\langle X,\cN X\rangle_\F.
\ee
Substituting these estimates into~\eqref{eq:edge-cs} proves
\eqref{eq:master}.
\end{proof}

We can get the estimate from \cite[Theorem 3.4]{facelli2026fast} as an immediate consequence.

\begin{lemma}
\label{lemma:frobenius-norm-growth}
Assume $H = \sum_{P} h_P P$ is a $\Delta$-sparse Hamiltonian with $\deg([H,X]) \leq \deg(X) + \delta$ for any $X\in\cA$ and $|h_P| \leq 1$ for all $P \in \cB$. Then for any $A \in \cA$, $\|[H,A]\|_\F \leq 2 \Delta \sqrt{(\deg A)(\deg A+\delta)} \|A\|_\F$.
\end{lemma}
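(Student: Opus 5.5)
We will deduce the bound directly from Lemma~\ref{lemma:master-commutator-bound}, applied with $\phi\equiv 1$ and the choice $X=[H,A]$, $Y=A$. With $\phi\equiv1$ we have $\cG_\phi=\sum_{r,s}\P_s\cG\P_r=\cG=\ci[H,\cdot]$ and $\|\phi\|_{\infty,\delta}=1$. Writing $\|[H,A]\|_\F^2=\langle [H,A],[H,A]\rangle_\F = -\ci\langle [H,A],\cG A\rangle_\F$, the lemma gives
\[
\|[H,A]\|_\F^2 \le 2\Delta \sqrt{\langle [H,A],\cN [H,A]\rangle_\F}\,\sqrt{\langle A,\cN A\rangle_\F}.
\]

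The next step is to bound each degree-weighted quadratic form by a degree times a squared Frobenius norm. Since $\cN$ is diagonal in the orthonormal basis $\BB$ with eigenvalue $\deg(P)$, for any $Z$ we have $\langle Z,\cN Z\rangle_\F=\sum_P \deg(P)|z_P|^2\le \deg(Z)\|Z\|_\F^2$. By the hypothesis $\deg([H,A])\le \deg A+\delta$, this yields $\langle [H,A],\cN[H,A]\rangle_\F\le(\deg A+\delta)\|[H,A]\|_\F^2$ and $\langle A,\cN A\rangle_\F\le \deg A\,\|A\|_\F^2$.

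Substituting these bounds gives
\[
\|[H,A]\|_\F^2\le 2\Delta\sqrt{(\deg A)(\deg A+\delta)}\,\|[H,A]\|_\F\,\|A\|_\F .
\]
If $[H,A]=0$ the claim is trivial; otherwise we divide by $\|[H,A]\|_\F$ and obtain the result. No step is a real obstacle. The only point requiring care is checking that the lemma applies with $\phi\equiv 1$ exactly: the sum over $r,s$ of $\P_s\cG\P_r$ reconstitutes $\cG$ because the projections $\P_r$ resolve the identity. One should also track the factor $\ci$, which does not affect absolute values.
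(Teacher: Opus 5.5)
Your proposal is correct and follows the paper's proof essentially verbatim: apply Lemma~\ref{lemma:master-commutator-bound} with $\phi\equiv 1$, bound $\langle Z,\cN Z\rangle_\F\le \deg(Z)\|Z\|_\F^2$ for $Z=A$ and $Z=[H,A]$, and divide by $\|[H,A]\|_\F$. Taking $X=[H,A]$ instead of $\ci[H,A]$ and treating the case $[H,A]=0$ separately are harmless refinements.
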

\begin{proof}
Take $\phi(r) = 1$, $X = \cG(A)$ and $Y = A$, where $\cG = \ci [H,\cdot]$.
Note that $\HSproduct{Y}{\cN Y} \leq \deg(Y) \|Y\|_{\F}^2$, and $\HSproduct{X}{\cN X} \leq \deg(X) \|X\|_{\F}^2 \leq (\deg(A)+\delta) \|[H,A]\|_{\F}^2$. Inequality \eqref{eq:master} gives us
\[
\|[H,A]\|_{\F}^2 \leq 2 \Delta \sqrt{\deg A(\deg A + \delta)} \|[H,A]\|_{\F} \|A\|_{\F}.
\]
Dividing by $\|[H,A]\|_{\F}$ gives the result.
\end{proof}

\begin{lemma}[Euler truncation error and stability]
\label{lemma:euler-truncation-error}
Let $\cL$ be the semigroup generator from \eqref{eq:dissipative-generator}. Then for any $h > 0$, and any $X \in \cA$ with $\deg X \leq \ell$,
\be
\label{eq:euler-truncation-error}
\|e^{h\cL}X-(I+h\cL)X\|_{\F}
 \le\frac{h^2}{2}K(\ell)K(\ell+\delta)\|X\|_{\F}
\ee
where $K(r) := \lambda r + 2 \Delta \sqrt{r(r+\delta)}$.
\end{lemma}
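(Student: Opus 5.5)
We would use the integral form of the Taylor remainder. Since $e^{h\cL}X-(I+h\cL)X=\int_0^h (h-s)\,e^{s\cL}\cL^2X\,\dd s$, and $\cP_s=e^{s\cL}$ is a contraction in $\|\cdot\|_{\F}$ (take $a\equiv 0$ in Theorem~\ref{thm:dissipative-weighted-contraction}), we get
\[
\|e^{h\cL}X-(I+h\cL)X\|_{\F}\le \frac{h^2}{2}\,\|\cL^2X\|_{\F}.
\]
The problem therefore reduces to showing $\|\cL^2 X\|_{\F}\le K(\ell)K(\ell+\delta)\|X\|_{\F}$ whenever $\deg X\le\ell$.

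The key intermediate step is a one-step bound: for any $Y$ with $\deg Y\le r$, we claim $\|\cL Y\|_{\F}\le K(r)\|Y\|_{\F}$ and $\deg(\cL Y)\le r+\delta$. To see this, split $\cL Y=\ci[H,Y]-\lambda\cN Y$ with $H=H_0+uV$.
\begin{itemize}
\item The noise part satisfies $\|\cN Y\|_{\F}\le r\|Y\|_{\F}$, since $\cN$ is diagonal with eigenvalues at most $r$ on $\cA_{\le r}$.
\item The commutator part is handled by Lemma~\ref{lemma:frobenius-norm-growth} applied to $H$ itself. It gives $\|[H,Y]\|_{\F}\le 2\Delta\sqrt{r(r+\delta)}\|Y\|_{\F}$, using that the lemma's proof only needs $\deg Y\le r$, so $\sqrt{\deg Y(\deg Y+\delta)}\le\sqrt{r(r+\delta)}$.
\end{itemize}
For the commutator part we need to check that $H$ meets the lemma's hypotheses under Assumption~\ref{assumption:hamiltonian}. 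The coefficients of $H$ are bounded by $1$, sparsity is $\Delta$, and $\deg([H,Y])\le\deg Y+\delta$ because $[H_0,\cdot]$ preserves degree while $[V,\cdot]$ raises it by at most $\delta$. Multiplying by $u\le1$ causes no trouble, since the sparsity and coefficient assumptions are stated for $H$ itself. The degree claim for $\cL Y$ follows from the same two facts.

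Finally, we iterate. Apply the one-step bound to $Y=X$ with $r=\ell$, and then to $Y=\cL X$ with $r=\ell+\delta$:
\[
\|\cL^2X\|_{\F}\le K(\ell+\delta)\,\|\cL X\|_{\F}\le K(\ell+\delta)K(\ell)\,\|X\|_{\F}.
\]
Combining this with the Taylor remainder bound gives \eqref{eq:euler-truncation-error}.

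The main obstacle is confirming that Lemma~\ref{lemma:frobenius-norm-growth} applies to the full $H$ with the stated degree-increase hypothesis. If it turns out that only $V$ satisfies the lemma's hypotheses, the fallback is to apply Lemma~\ref{lemma:master-commutator-bound} directly with $\phi\equiv1$ to the whole $\Delta$-sparse $H$. That argument uses only the sparsity counting and the bound $|\deg S(P,Q)-\deg Q|\le\delta$, which holds for both $H_0$ and $V$ terms, so the same constant is recovered.
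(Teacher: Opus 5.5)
Your proposal is correct and follows the paper's route: the integral Taylor remainder, Frobenius nonexpansiveness of $e^{s\cL}$, and the one-step bound $\|\cL Y\|_{\F}\le K(\deg Y)\|Y\|_{\F}$ from Lemma~\ref{lemma:frobenius-norm-growth}, applied twice at degrees $\ell$ and $\ell+\delta$. Your explicit checks (that $\cL$ raises degree by at most $\delta$, that $K$ is monotone, and that the hypotheses hold for the full $H$) spell out what the paper leaves implicit; the fallback you describe is not needed.
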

\begin{proof}
Using Lemma \ref{lemma:frobenius-norm-growth}, we have, for any $X \in \cA$,
\be
\label{eq:bound-LX}
\|\cL X\|_{\F} \leq K(\deg X) \|X\|_{\F}.
\ee
To get inequality \eqref{eq:euler-truncation-error} first note that:
\[
e^{h\cL}X-X-h\cL X
 =\int_0^h(h-s)e^{s\cL}\cL^2X\,\dd s.
\]
By applying \eqref{eq:bound-LX} twice, and using the nonexpansiveness of $e^{s\cL}$ with respect to the Frobenius norm, we get the desired inequality \eqref{eq:euler-truncation-error}.
\end{proof}

\clearpage
\newpage
\section{Lemmas for thermal observable estimation}
\subsection{Imaginary-time evolution and Dyson expansion}\label{app:complex-time}

\begin{lemma}[Imaginary-time evolution for weakly interacting Hamiltonians]
\label{lem:complex-time-evolution}
Let $H=H_0+uV$ be a Hamiltonian satisfying Assumption \ref{assumption:hamiltonian}.
Assume furthermore that $H_0$ satisfies
\begin{equation}
\label{eq:H0additionalassumptionDelta0}
\|[H_0,X]\|_{\Fone} \leq 2\Delta \deg(X) \|X\|_{\Fone} \quad \forall X \in \cA,
\end{equation}
for some constant $\Delta > 0$.
Let $F\in\cA$.
Then, for any $s\in\RR$, we have
\begin{equation}
 \label{eq:iterated-duhamel}
 e^{sH}Fe^{-sH}
 =
 \sum_{m=0}^{\infty}Y_m(s;F),
\end{equation}
where the $Y_m(s;F)$ are defined recursively by
\[
 \begin{aligned}
 Y_0(s;F)
 &=e^{sH_0}Fe^{-sH_0},\\
 Y_m(s;F)
 &=u\int_0^s
 e^{(s-t)H_0}[V,Y_{m-1}(t;F)]e^{-(s-t)H_0}\,dt,
 \qquad m\geq1.
 \end{aligned}
\]
Furthermore, writing $d = \deg F$, the $Y_m$ satisfy the following:
\begin{enumerate}
 \item
 $Y_m(s;F)\in\cA_{\leq d+m\delta}$.

 \item For any $m \geq 0$:
 \begin{equation}
 \label{eq:boundYm}
  \|Y_m(s;F)\|_{\Fone}
 \leq
 e^{2\Delta d|s|}
 \frac{1}{m!}
 \prod_{j=1}^{m}
 \left(\frac d\delta+j-1\right)
 \bigl(u(e^{2\Delta \delta|s|}-1)\bigr)^m
 \|F\|_{\Fone}.
 \end{equation}

 \item Let $a \geq 0$ and assume $\rho := u (e^{2\Delta \delta |s|}-1) e^{a\delta} < 1$. Then for any $M \geq 0$
 \begin{equation}
 \label{eq:tail-imaginary-time-evolution}
     \|e^{a\cN}\Bigl(e^{s H} F e^{-sH} - \sum_{m=0}^{M-1} Y_m(s;F)\Bigr)\|_{\Fone} \leq e^{2 \Delta d |s| + ad} \|F\|_{\Fone} \frac{(d/\delta)_M}{M!} \frac{\rho^M}{(1-\rho)^{\max(d/\delta,1)}}
 \end{equation}
 where $(d/\delta)_M = (d/\delta)(d/\delta+1)\dots(d/\delta + M-1)$.

 \item The tuple $(Y_0,\ldots,Y_m)$ can be computed in time complexity $N^{O(m)}$.
\end{enumerate}
\end{lemma}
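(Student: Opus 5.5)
The plan is to derive the iterated Duhamel expansion first, and then prove points 1–4 by induction on $m$, in that order.

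\textbf{The expansion.} Set $F(s)=e^{sH}Fe^{-sH}$, so that $F'(s)=[H,F(s)]=[H_0,F(s)]+u[V,F(s)]$. Variation of constants with respect to the $H_0$-flow gives
\[
F(s)=e^{sH_0}Fe^{-sH_0}+u\int_0^s e^{(s-t)H_0}[V,F(t)]e^{-(s-t)H_0}\,dt .
\]
Iterating this identity produces the partial sums $\sum_{m<M}Y_m$ plus a remainder. The remainder vanishes as $M\to\infty$ because $\cA$ is finite-dimensional: the $M$-th term is bounded by $(C|s|u)^M/M!$ in operator norm.

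\textbf{Point 1.} This follows by induction. Conjugation by $e^{sH_0}$ is $e^{s[H_0,\cdot]}$, which preserves each $\cA_r$ by \eqref{eq:H0degreeinvariant}. Commutation with $V$ raises the degree by at most $\delta$.

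\textbf{Point 2.} Again by induction. The base case uses $\|e^{s[H_0,\cdot]}X\|_{\Fone}\le e^{2\Delta\deg(X)|s|}\|X\|_{\Fone}$, which follows from \eqref{eq:H0additionalassumptionDelta0} by expanding the exponential series and using degree preservation. For the inductive step, Lemma~\ref{lemma:inequalities} applied to $V$ gives $\|[V,Y]\|_{\Fone}\le 2\Delta\deg(Y)\|Y\|_{\Fone}$, where the sparsity of $V$ is inherited from $H$. Since $Y_{m-1}(t)$ has degree at most $d+(m-1)\delta$, this yields
\[
\|Y_m(s)\|_{\Fone}\le u\int_0^{|s|}e^{2\Delta(d+m\delta)(|s|-t)}\,2\Delta(d+(m-1)\delta)\,\|Y_{m-1}(t)\|_{\Fone}\,dt .
\]
Plug in the inductive bound $e^{2\Delta d t}\frac{(d/\delta)_{m-1}}{(m-1)!}(u(e^{2\Delta\delta t}-1))^{m-1}$. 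The exponentials combine to $e^{2\Delta d|s|}e^{2\Delta m\delta(|s|-t)}$. After the substitution $w=e^{2\Delta\delta t}$, the remaining integral becomes
\[
2\Delta\delta\,(d/\delta+m-1)\int_0^{|s|} e^{2\Delta m\delta(|s|-t)}(e^{2\Delta\delta t}-1)^{m-1}\,dt .
\]
This is bounded by $(d/\delta+m-1)\frac{1}{m}(e^{2\Delta\delta|s|}-1)^m$. The bound rests on
\[
\frac{d}{dt}\bigl(1-e^{-2\Delta\delta t}\bigr)^m = 2\Delta\delta\, m\,(1-e^{-2\Delta\delta t})^{m-1}e^{-2\Delta\delta t},
\]
after factoring out $e^{2\Delta m\delta|s|}$. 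Checking that this exact integral matches the target is the step I expect to be the main obstacle. If the exponents do not line up exactly, I would instead bound by a monotone comparison, using $(e^x-1)^{m-1}e^{x}\le$ a derivative of $(e^x-1)^m/m$.

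\textbf{Point 3.} By point 1, $\|e^{a\cN}Y_m\|_{\Fone}\le e^{a(d+m\delta)}\|Y_m\|_{\Fone}$. Summing the bound from point 2 over $m\ge M$ gives the tail
\[
e^{2\Delta d|s|+ad}\|F\|_{\Fone}\sum_{m\ge M}\frac{(d/\delta)_m}{m!}\rho^m .
\]
I would bound this tail of the binomial series $(1-\rho)^{-d/\delta}$ by its first term times $(1-\rho)^{-\max(d/\delta,1)}$. The justification is the ratio $\frac{(d/\delta)_{m+k}\,M!}{(d/\delta)_M\,(m+k)!}\le \frac{(\max(d/\delta,1))_k}{k!}$. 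This follows termwise from $\frac{c+M+j}{M+1+j}\le \frac{\max(c,1)+j}{1+j}$, where $c=d/\delta$.

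\textbf{Point 4.} Compute $Y_m$ symbolically. $Y_0(s)=e^{s[H_0,\cdot]}F$ acts within $\cA_d$, a space of dimension $N^{O(d)}$, so it can be computed by exact matrix exponentials there. Alternatively, in the Majorana case one can use the quadratic structure: $e^{sH_0}\gamma_i e^{-sH_0}$ is given by an $N\times N$ matrix exponential, and products of these give $Y_0$. Each $Y_m$ is then a time-ordered integral over the simplex of products of such conjugations and commutators with $V$. Diagonalizing $[H_0,\cdot]$ on $\cA_{\le d+m\delta}$ lets the iterated integrals be evaluated in closed form as sums of exponentials. All objects live in $\cA_{\le d+m\delta}$, whose dimension is $N^{O(m)}$ for fixed $d,\delta$, so the total cost is $N^{O(m)}$.
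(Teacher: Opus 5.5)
Your proposal is correct. The expansion and points 1 and 3 match the paper; your single inequality $\frac{c+M+j}{M+1+j}\le\frac{\max(c,1)+j}{1+j}$ neatly merges the two cases of Remark~\ref{rem:pochhammer-sum} (the index $m+k$ in your ratio should read $M+k$). Two steps take a different route.

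\emph{Point 2.} You induct on $m$. The paper instead unrolls the recursion into an $m$-fold simplex integral and observes that the exponential weights telescope to $e^{2\Delta ds}\prod_j e^{2\Delta\delta(s-t_j)}$. By symmetry of this integrand it then gets $\frac1{m!}\bigl(\frac{e^{2\Delta\delta s}-1}{2\Delta\delta}\bigr)^m$ at once. The step you flagged as the obstacle is in fact exact. With $c=2\Delta\delta$, the identity $e^{-cmt}(e^{ct}-1)^{m-1}=e^{-ct}(1-e^{-ct})^{m-1}$ gives $c\int_0^{|s|}e^{cm(|s|-t)}(e^{ct}-1)^{m-1}\,dt=\frac1m(e^{c|s|}-1)^m$. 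So your induction reproduces \eqref{eq:boundYm} with no loss, no comparison argument is needed, and the two arguments are equivalent.

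\emph{Point 4.} The paper notes that $(Y_0,\dots,Y_m)$ solves the block-bidiagonal linear ODE $Y_0'=[H_0,Y_0]$, $Y_j'=[H_0,Y_j]+u[V,Y_{j-1}]$ in $(m+1)\dim\cA_{\le d+m\delta}$ variables. This gives the $N^{O(m)}$ cost with no further bookkeeping. Your closed-form route can work, but it needs two repairs:
\begin{itemize}
\item $[H_0,\cdot]$ is highly degenerate, so resonant terms $s^pe^{s\mu}$ appear, not just sums of exponentials.
\item Inserting an eigendecomposition at each of the $m+1$ propagators of the simplex integral would involve up to $\prod_j\dim\cA_{\le d+j\delta}=N^{O(m^2)}$ eigenvalue paths.
\end{itemize}
To get $N^{O(m)}$ you should proceed recursively in $j$. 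Store $Y_j(s)=\sum_{\mu}\sum_{p\le j}s^pe^{s\mu}Z^{(j)}_{\mu,p}$, with $\mu$ among the at most $\dim\cA_{\le d+m\delta}$ eigenvalues and $Z^{(j)}_{\mu,p}\in\cA_{\le d+j\delta}$. Then update this table by one explicit integration per step.
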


\begin{proof}
We use the notations
\begin{equation}
 \begin{aligned}
 \alpha_t(\cdot)&=e^{tH_0}\cdot e^{-tH_0},\qquad
 \cG(\cdot)&=[V,\cdot].
 \end{aligned}
\end{equation}
Duhamel's identity gives
\[
 e^{sH}Fe^{-sH}
 =
 \alpha_s(F)
 +
 u\int_0^s
 \alpha_{s-t}\bigl([V,e^{tH}Fe^{-tH}]\bigr)\,dt.
\]
Iterating this identity gives
Eq.~\eqref{eq:iterated-duhamel} with the stated recursion. By assumption~\eqref{eq:H0degreeinvariant}, $\alpha_t$ preserves
degree and $\cG$ increases degree by at most $\delta$.
Induction on $m$ therefore proves point 1.

For point 2, first assume $s\geq0$.
Unrolling the iteration gives, for $m\geq1$,
\begin{equation}
 \label{eq:Ymintegralsimplex}
 \begin{aligned}
 Y_m(s;F)
 =u^m\int_{0<t_1<\cdots<t_m<s}
 \alpha_{s-t_m}\cG\alpha_{t_m-t_{m-1}}\cG
 \cdots
 \cG\alpha_{t_2-t_1}\cG\alpha_{t_1}(F)
 \,\dd t_1\cdots\dd t_m.
 \end{aligned}
\end{equation}
Lemma~\ref{lemma:inequalities} and
Eq.~\eqref{eq:Delta-sparsity-assumption} give, for every $X\in\cA$,
\begin{align}
 \|\cG(X)\|_{\Fone}
 \leq2\Delta\deg(X)\|X\|_{\Fone},\qquad
 \|\alpha_s(X)\|_{\Fone}
 \leq e^{2\Delta\deg(X)|s|}\|X\|_{\Fone}.
 \label{eq:alphasXbound}
\end{align}
To obtain Eq.~\eqref{eq:alphasXbound}, write
$\cG_0=[H_0,\cdot]$ and use
\[
 \alpha_s=\sum_{j=0}^{\infty}\frac{s^j}{j!}\cG_0^j.
\]
Since $\cG_0$ preserves degree,
Eq.~\eqref{eq:H0additionalassumptionDelta0} implies
\[
 \|\cG_0^jX\|_{\Fone}
 \leq(2\Delta\deg(X))^j\|X\|_{\Fone},
\]
which gives the stated exponential bound.

With $d=\deg F$, the integrand of
Eq.~\eqref{eq:Ymintegralsimplex} satisfies
\[
 \begin{aligned}
 &\|\alpha_{s-t_m}\cG\alpha_{t_m-t_{m-1}}\cG
       \cdots\cG\alpha_{t_1}(F)\|_{\Fone}\\
 &\quad\leq
 (2\Delta)^m
 \prod_{j=1}^{m}\bigl(d+(j-1)\delta\bigr)
 \|F\|_{\Fone}
 \times
 \exp\Bigl(
 2\Delta(s-t_m)(d+m\delta)
 +2\Delta(t_m-t_{m-1})(d+(m-1)\delta)
 +\cdots+2\Delta t_1d
 \Bigr)\\
 &\quad=
 (2\Delta)^m e^{2\Delta ds}
 \prod_{j=1}^{m}\bigl(d+(j-1)\delta\bigr)
 \prod_{j=1}^{m}e^{2\Delta \delta(s-t_j)}
 \|F\|_{\Fone}.
 \end{aligned}
\]
We keep the degree-growth product explicitly:
\[
 \prod_{j=1}^{m}\bigl(d+(j-1)\delta\bigr)
 =
 \delta^m
 \prod_{j=1}^{m}
 \left(\frac d\delta+j-1\right).
\]
The scalar exponential factor is symmetric in the insertion
times. Its integral over the ordered simplex is therefore
\[
 \begin{aligned}
 \int_{0<t_1<\cdots<t_m<s}
 \prod_{j=1}^{m}e^{2\Delta \delta(s-t_j)}
 \,\dd t_1\cdots\dd t_m
 =
 \frac1{m!}
 \left(\int_0^s e^{2\Delta \delta(s-t)}\,\dd t\right)^m
 =
 \frac1{m!}
 \left(\frac{e^{2\Delta \delta s}-1}{2\Delta\delta}\right)^m.
 \end{aligned}
\]
Consequently,
\[
 \begin{aligned}
 \|Y_m(s;F)\|_{\Fone}
 &\leq
 u^m(2\Delta)^m e^{2\Delta ds}
 \delta^m
 \prod_{j=1}^{m}
 \left(\frac d\delta+j-1\right)
 \frac1{m!}
 \left(\frac{e^{2\Delta \delta s}-1}{2\Delta \delta}\right)^m
 \|F\|_{\Fone}\\
 &=
 e^{2\Delta ds}
 \frac{1}{m!}
 \prod_{j=1}^{m}
 \left(\frac d\delta+j-1\right)
 \bigl(u (e^{2\Delta \delta s}-1)\bigr)^m
 \|F\|_{\Fone}
 \end{aligned}
\]
as desired.
For $s<0$, apply the same argument at time $|s|$ to
$-H_0$ and $-V$.

For point 3, we simply use the fact that
\begin{equation}
\label{eq:eaNYm}
\|e^{a\cN} Y_m\|_{\Fone} \leq e^{a(d+m\delta)} \|Y_m\|_{\Fone}
\end{equation}
since $\deg(Y_m) \leq d+m\delta$, and apply the bound \eqref{eq:boundYm}. Then use the inequality
\begin{equation}
\label{eq:seriestail}
\sum_{m=M}^{\infty} \frac{(\alpha)_{m}}{m!} r^m \leq \frac{(\alpha)_M}{M!} \frac{r^M}{(1-r)^{\max(\alpha,1)}}
\end{equation}
where $(\alpha)_m = \alpha (\alpha+1)\dots (\alpha + m-1)$ when $r < 1$. We demonstrate this in Remark~\ref{rem:pochhammer-sum}.

Point 4: Differentiating the recursion shows that
$(Y_0,\ldots,Y_m)$ satisfies the following set of differential equations:
\begin{equation}
 \begin{aligned}
 Y_0'(s)&=[H_0,Y_0(s)],\\
 Y_j'(s)&=[H_0,Y_j(s)]+u[V,Y_{j-1}(s)],
 \qquad j=1,\ldots,m,
 \end{aligned}
\end{equation}
with initial condition
\[
 (Y_0(0),\ldots,Y_m(0))=(F,0,\ldots,0).
\]
By point 1, the $j$th component belongs to
$\cA_{\leq d+j\delta}$.
Expanding each component of $(Y_0,\ldots,Y_m)$ in the basis $\cB$, we get a coupled system of linear ODEs in at most $(m+1)\dim\cA_{\leq d+m\delta} = (m+1)N^{O(m)}$ variables. Thus a standard integration scheme for ODEs can be used to compute $(Y_0(s),\ldots,Y_m(s))$ in time $N^{O(m)}$.
\end{proof}
\begin{remark}\label{rem:pochhammer-sum}
In this remark we prove \eqref{eq:seriestail}.
For \(\alpha>0\), \(M\geq0\), and \(0\leq r<1\), we have
\begin{equation}
\label{eq:rising-factorial-tail}
\sum_{m=M}^{\infty}
\frac{(\alpha)_m}{m!}r^m
\leq
\frac{(\alpha)_M}{M!}
\frac{r^M}{(1-r)^{\max\{\alpha,1\}}}.
\end{equation}
We prove this by considering the cases \(\alpha\leq1\) and
\(\alpha\geq1\). If \(0<\alpha\leq1\), then
\[
\frac{(\alpha)_{m+1}/(m+1)!}{(\alpha)_m/m!}
=
\frac{\alpha+m}{m+1}
\leq1.
\]
Thus the sequence \((\alpha)_m/m!\) is nonincreasing, and hence
\[
\sum_{m=M}^{\infty}
\frac{(\alpha)_m}{m!}r^m
\leq
\frac{(\alpha)_M}{M!}
\sum_{m=M}^{\infty}r^m
=
\frac{(\alpha)_M}{M!}
\frac{r^M}{1-r}.
\]
Now suppose \(\alpha\geq1\). Writing \(m=M+k\), we have
\[
\frac{(\alpha)_{M+k}}{(M+k)!}
=
\frac{(\alpha)_M}{M!}
\frac{(\alpha+M)_k}{(M+1)_k}.
\]
For every \(j\geq0\),
\[
\frac{\alpha+M+j}{M+1+j}
\leq
\frac{\alpha+j}{1+j}
\]
Taking the product over \(j=0,\ldots,k-1\) therefore gives
\[
\frac{(\alpha+M)_k}{(M+1)_k}
\leq
\frac{(\alpha)_k}{k!}.
\]
Consequently,
\[
\sum_{m=M}^{\infty}
\frac{(\alpha)_m}{m!}r^m
=
\frac{(\alpha)_M}{M!}r^M
\sum_{k=0}^{\infty}
\frac{(\alpha+M)_k}{(M+1)_k}r^k\leq
\frac{(\alpha)_M}{M!}r^M
\sum_{k=0}^{\infty}
\frac{(\alpha)_k}{k!}r^k=
\frac{(\alpha)_M}{M!}
\frac{r^M}{(1-r)^\alpha},
\]
where in the last step we used the binomial-series identity
\[
\sum_{k=0}^{\infty}\frac{(\alpha)_k}{k!}r^k
=
(1-r)^{-\alpha}.
\]
Combining the two cases proves Eq.~\eqref{eq:rising-factorial-tail}.
\end{remark}

\subsection{Bounds on the pseudo-Lindbladian generator}
\label{app:boundKK}
Consider the generator $\cK^{\beta} = \cK^{\beta}(H)$ defined in \eqref{eq:def-Kbeta} for some operators $A_1,\ldots,A_{\g},B_1,\ldots,B_{\g}$. Given $M \geq 0$, we define $\cK^{\beta}_{M}$ to be the generator obtained from $\cK^{\beta}$ 
by replacing the terms $\tau_{\pm \beta/2}(B_i)$ by the first terms $m=0$ to $m=M-1$ of the Dyson series (see Equation \eqref{eq:iterated-duhamel}), i.e.,
\begin{equation}
    \label{eq:truncatedKbeta}
\cK^{\beta}_M X = \alpha \sum_{i=1}^{\g} \left([A_i,X] \widetilde{\tau}_{-\beta/2,M}(B_i) - \widetilde{\tau}_{\beta/2,M}(B_i)[A_i,X]\right)
\end{equation}
where $\widetilde{\tau}_{s,M}(F) = \sum_{m=0}^{M-1} Y_m(s;F)$ in the notation of \eqref{eq:iterated-duhamel}.
The goal of this appendix is to bound the difference $\|(\cK^{\beta} - \cK^{\beta}_M) X\|_{a}$ for weighted norms.
For $M=0$, this is simply a bound on $\cK^{\beta} X$ which is useful in various proofs (in particular for Theorem \ref{thm:efficient}). For $M=1$, note that $\cK^{\beta}_{M} = \cK^{\beta}(H_0)$ is the generator of the noninteracting part of the Hamiltonian, and a bound on the magnitude of $\cK^{\beta}(H) - \cK^{\beta}(H_0)$ is useful to prove \eqref{eq:ineq-residual}. More generally, bounding $\cK^{\beta} - \cK^{\beta}_M$ is useful when constructing efficient approximations of $\cK^{\beta}$. Our main theorem is the following.

\begin{theorem}
\label{thm:bound-residual-K-KM}
Consider a Hamiltonian $H=H_0+uV$ which satisfies Assumptions \ref{assumption:hamiltonian} and \ref{assumption:diagonal-hamiltonian-fermion}.
Let $\cK^{\beta}$ be the generator \eqref{eq:def-Kbeta} where the pseudo-jump operators $A_i$ and $B_i$ are specified by \eqref{eq:pseudo-jumpA} and let $\cK^{\beta}_M$ be its truncation as defined in \eqref{eq:truncatedKbeta}. Let $a \geq 0$ and assume that
\begin{equation}
\label{eq:rhoaassumption}
\rho_a := u  (e^{2\Delta \delta \beta}-1)e^{a\delta} < 1.
\end{equation}
Let $X \in \cA$ be even. Then
\begin{equation}
\label{eq:tailKbeta-KbetaM}
\| (\cK^{\beta} - \cK^{\beta}_M) X \|_{a} \leq \F(\beta) e^{2 \Delta \beta + a} (M+1) \frac{\rho_a^M}{(1-\rho_a)} \|\cN X\|_{a}.
\end{equation}
\end{theorem}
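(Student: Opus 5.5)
Starting from the definitions \eqref{eq:def-Kbeta} and \eqref{eq:truncatedKbeta} with $\alpha=1/4$, $A_i=\gamma_i$, the difference is
\[
(\cK^{\beta}-\cK^{\beta}_M)X=\frac14\sum_{i=1}^N\Bigl([\gamma_i,X]\,E^-_i-E^+_i\,[\gamma_i,X]\Bigr),\qquad E^{\pm}_i:=\tau_{\pm\beta/2}(B_i)-\widetilde\tau_{\pm\beta/2,M}(B_i).
\]
The plan is to bound each product with the submultiplicativity of Lemma~\ref{wth:lem:weighted-product}, obtaining $\|[\gamma_i,X]E^{\pm}_i\|_a\le\|[\gamma_i,X]\|_a\,\|E^\pm_i\|_a$. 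Two separate estimates are then needed. The first is a bound on $\|E^\pm_i\|_a$ uniform in $i$, which comes from the Dyson tail. The second is a bound on $\sum_i\|[\gamma_i,X]\|_a$ in terms of $\|\cN X\|_a$.

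\emph{Commutator sum.} For a monomial $\gamma_S$ one has $[\gamma_i,\gamma_S]=0$ when $i\notin S$ and $|S|$ is even. When $i\in S$, $[\gamma_i,\gamma_S]=2\gamma_i\gamma_S$, which is a monomial of degree $|S|-1$ up to a phase. By the triangle inequality over the expansion of the even operator $X$,
\[
\sum_i\|[\gamma_i,X]\|_a\le\sum_S|x_S|\,|S|\cdot 2e^{a(|S|-1)}\le 2e^{-a}\|\cN X\|_a .
\]
The factor $e^{-a}$ is harmless, and this bound is comfortably within the target.

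\emph{Tail of $\tau_{\pm\beta/2}(B_i)$.} Since $\cG_0$ preserves degree, $B_i=\sech(\tfrac\beta2\cG_0)(\gamma_i)$ has degree $1$. The map $\cG_0=[H_0,\cdot]$ acts on $\mathrm{span}\{\gamma_j\}$ through the matrix $\ci A$ (up to sign/transpose conventions), so $B_i=\sum_j \sech(\ci A\beta/2)_{ij}\gamma_j$ and Assumption~\ref{assumption:diagonal-hamiltonian-fermion} gives $\|B_i\|_{\Fone}\le\F(\beta)$. Next apply point 3 of Lemma~\ref{lem:complex-time-evolution} with $F=B_i$, $d=1$, and $s=\pm\beta/2$. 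This requires \eqref{eq:H0additionalassumptionDelta0}, which follows from Lemma~\ref{lemma:inequalities} since $H$ is $\Delta$-sparse with bounded coefficients and $H_0$ consists of the terms of $H$. The lemma gives
\[
\|E^\pm_i\|_a\le e^{\Delta\beta+a}\F(\beta)\frac{(1/\delta)_M}{M!}\frac{\rho^M}{1-\rho},\qquad \rho=u(e^{\Delta\delta\beta}-1)e^{a\delta}\le\rho_a .
\]
Because $\delta\ge1$, we have $(1/\delta)_M/M!\le1\le M+1$, $e^{\Delta\beta}\le e^{2\Delta\beta}$, and the map $\rho\mapsto\rho^M/(1-\rho)$ is monotone. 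The stated form with $\rho_a$, the prefactor $e^{2\Delta\beta+a}$ and the factor $(M+1)$ is therefore a loosening of this bound.

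\emph{Combining and the expected obstacle.} Multiplying the two estimates and including the $\tfrac14$ and the two terms per $i$ gives
\[
\|(\cK^\beta-\cK^\beta_M)X\|_a\le\tfrac14\cdot2\cdot2e^{-a}\|\cN X\|_a\cdot e^{2\Delta\beta+a}\F(\beta)\frac{\rho_a^M}{1-\rho_a},
\]
which is at most the right-hand side of \eqref{eq:tailKbeta-KbetaM}. The step most likely to need care is the identification $\|B_i\|_{\Fone}\le\F(\beta)$. This requires checking exactly how $\cG_0$ acts on linear Majoranas with the normalization $H_0=\tfrac{\ci}{4}\sum A_{ij}\gamma_i\gamma_j$, including any factor of $2$ and whether the relevant matrix is $\ci A$ or its transpose, so that the row-norm assumption \eqref{eq:sechboundassumption} applies. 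A factor-$2$ discrepancy there would be absorbed by the slack between $e^{\Delta\beta}$ and $e^{2\Delta\beta}$. If the lemma's constant $(1/\delta)_M/M!$ is instead intended with $d$ counted differently, the $(M+1)$ factor provides the needed room.
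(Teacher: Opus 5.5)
Your proposal is correct and follows the paper's proof essentially step for step: you reduce to the $|S|$ nonvanishing commutators $[\gamma_i,\gamma_S]$, apply submultiplicativity of $\|\cdot\|_a$, and bound the Dyson tails by point~3 of Lemma~\ref{lem:complex-time-evolution} with $d=1$ and $\|B_i\|_{\Fone}\le\F(\beta)$; keeping the factor $e^{-a}$ and using $|s|=\beta/2$ only gives a sharper intermediate bound than the paper's, which you then loosen. Your closing worry about conventions is moot, since $[H_0,\gamma_k]=\ci\sum_j A_{jk}\gamma_j$ exactly and $\sech(\ci A\beta/2)$ is symmetric (as $\sech$ is even and $A^T=-A$); this matters, because a factor $2$ inside $\sech$ would change $\F(\beta)$ itself and could not be absorbed by the slack between $e^{\Delta\beta}$ and $e^{2\Delta\beta}$.
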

\begin{proof}
Assume that $X = \gamma_S \in \cB$ is a Majorana string of even degree, and let $i \in \{1,\ldots,N\}$. If $i \notin S$ then $[\gamma_i,X] = 0$, otherwise $[\gamma_i,X]$ is, up to a phase, $2\gamma_{S\setminus \{i\}}$. It follows that
\begin{equation}
\label{eq:K-KM123}
\cK^{\beta}(X) - \cK^{\beta}_M(X) = \frac14 \sum_{i \in S} [\gamma_i,X] (\tau_{-\beta/2}(B_i) - \widetilde{\tau}_{-\beta/2,M}(B_i)) -  (\tau_{\beta/2}(B_i) - \widetilde{\tau}_{\beta/2,M}(B_i)) [\gamma_i,X].
\end{equation}
Let
\[
D = \max_{i \in \{1,\ldots,N\}} \|\Bigl(\tau_{\pm \beta/2}(B_i) - \widetilde{\tau}_{\pm \beta/2,M}(B_i)\Bigr)\|_{a}.
\]
Then by noting that when $i \in S$,  $\| [\gamma_i,X]\|_{a} = 2 e^{a(\deg(X)-1)} \leq 2 e^{a \deg(X)}$ we see that, using submultiplicativity of $\|\cdot\|_{a}$ (Lemma \ref{wth:lem:weighted-product}) and the fact that there are $|S|=\deg(X)$ terms in \eqref{eq:K-KM123}
\[
\|(\cK^{\beta} - \cK^{\beta}_M)X\|_{a} \leq \deg(X) e^{a \deg X} D = D \|\cN X\|_{a}.
\]
Using the triangle inequality we immediately obtain the inequality above for an arbitrary even observable $X$:
\[
\|(\cK^{\beta} - \cK^{\beta}_M)X\|_a \leq D \|\cN X\|_a.
\]

It remains to bound $D$. This is precisely the object of point 3 of Lemma \ref{lem:complex-time-evolution}. Note that $d=\deg(B_i)\leq 1$, and so $(d/\delta)_M \leq (M+1)!$. Furthermore, $\|B_i\|_{\Fone} \leq \F(\beta)$ by assumption \eqref{eq:sechboundassumption} (see Remark \ref{rem:boundBj}). This gives the desired bound \eqref{eq:tailKbeta-KbetaM}, with assumption \eqref{eq:rhoaassumption} coming from the assumption of point 3 in Lemma \ref{lem:complex-time-evolution}.
\end{proof}

\subsection{Decay lemma}
\label{app:decay2}

The goal of this section is to study the pseudo-Lindbladian $\cK^{\beta}(H)$ from Section \ref{sec:pseudo-lindbladian} and its evolution when $H=H_0$. In particular, we prove the decay inequality \eqref{eq:ineq-reference} which is crucial for the proof of Theorem \ref{thm:gibbs1}.

Consider a quadratic Majorana Hamiltonian 
\begin{equation}
\label{eq:app:H0}
H_0 = \frac{\ci}{4} \sum_{ij} A_{ij} \gamma_i \gamma_j
\end{equation}
where $A=-A^T$ is real. Call $\cG_0 = [H_0,\cdot]$. The pseudo-Lindbladian we study here is \eqref{eq:KbetaH0}, i.e.,
\begin{equation}
    \label{eq:K0app}
\cK_0(X) = -\cN(X) - \cD_0(X)
\end{equation}
where
\[
\cD_0(X) = \frac{1}{4} \sum_{j=1}^N \{[\gamma_j,X],\tanh\left(\frac\beta2 \cG_0\right)(\gamma_j)\}.
\]
Since $\cG_0$ preserves degree, note that $\cG_0(\cA_1) \subseteq \cA_1$. Since $\cA_1$ is spanned by $\{\gamma_1,\dots,\gamma_N\}$, we can represent $\cG_0$ as a $N\times N$ matrix in this basis. More precisely, if we call $L$ this matrix, then we have:
\begin{equation}
\label{eq:matrixL}
[H_0,\gamma_j] = \sum_{i=1}^N L_{ij} \gamma_i.
\end{equation}
It is easy to check that, for $H_0$ given by \eqref{eq:app:H0}, we have $L = \ci A$. Note that $L$ is a Hermitian, pure imaginary matrix.
Then clearly $\tanh(s \cG_0)(\gamma_j) = \sum_{i=1}^N \tanh(s L)_{ij} \gamma_i$. Call \[
G = \tanh\left(\frac\beta2 L\right)
\]
and note that $G$ is a $N\times N$ Hermitian, pure imaginary, matrix. Then we have arrived at
\[
\cD_0(X) = \frac{1}{4} \sum_{i,j=1}^N G_{ij} \{[\gamma_j,X],\gamma_i\}.
\]
Let's study the action on \emph{even} Majorana strings. If $X = \gamma_S$ and $|S|$ is even then one can verify that
\[
\{[\gamma_j,\gamma_S],\gamma_i\} = \begin{cases}
    4 \phi^S_{i,j} \gamma_{S \setminus \{i,j\}} & \text {if } i \neq j \in S\\
    0 & \text{else}.
\end{cases}
\]
with $\phi^S_{i,j}\in \CC$, $\abs{\phi^S_{i,j}}=1$. We now study the decay rate of $\cK_0$ for a weighted Fourier-1 norm.
\begin{proposition}
\label{prop:decayH01}
Let $\|\cdot\|_{a} = \|e^{a\cN} \cdot\|_{\Fone}$. Consider the pseudo-Lindbladian $\cK_0$ in \eqref{eq:K0app}. For any $X$ in the even algebra,
\[
\lim_{t\to0} \frac{\|(I+t\cK_0)(X)\|_{a} - \|X\|_{a}}{t} \leq (-1 + R e^{-2a}) \|\cN X\|_{a}
\]
where
\begin{equation}
\label{eq:defR}
R = \|G\|_{\row} = \max_{i=1,\ldots,N} \sum_{j=1}^N |G_{ij}|
\end{equation}
\end{proposition}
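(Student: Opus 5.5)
The plan is to compute the directional derivative of the weighted $\ell_1$ norm explicitly using the sign of the coefficients, and to split $\cK_0=-\cN-\cD_0$ into its two parts. Write $X=\sum_S x_S\gamma_S$ with only even $S$. For small $t$,
\[
(I+t\cK_0)X = \sum_S \bigl(x_S(1-t|S|)\bigr)\gamma_S - t\,\cD_0 X .
\]
The first part is diagonal. For $t<1/N$ it gives exactly $\sum_S e^{a|S|}|x_S|(1-t|S|) = \|X\|_a - t\|\cN X\|_a$. This is the source of the $-1$.

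The second part is handled by the triangle inequality: $\|(I+t\cK_0)X\|_a \le \|(I-t\cN)X\|_a + t\|\cD_0X\|_a$. Taking the limit then reduces the claim to the bound
\[
\|\cD_0 X\|_a \le R e^{-2a}\|\cN X\|_a \qquad \text{for all even } X .
\]

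To prove this bound, I will first treat a single even string $X=\gamma_S$ and then extend to general $X$ by the triangle inequality, since both sides are $\ell_1$-type sums over strings. For $X=\gamma_S$, the formula stated just before the proposition gives
\[
\cD_0\gamma_S = \sum_{i\neq j\in S} G_{ij}\,\phi^S_{i,j}\,\gamma_{S\setminus\{i,j\}} .
\]
Each output string has degree $|S|-2$, so its weight is $e^{a(|S|-2)}$. Therefore
\[
\|\cD_0\gamma_S\|_a \le e^{a(|S|-2)}\sum_{j\in S}\sum_{i\in S,\,i\neq j}|G_{ij}| .
\]
The inner sum is over $i$ for fixed $j$, which is a column sum. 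Because $G$ is Hermitian, $|G_{ij}|=|G_{ji}|$, so column sums equal row sums and each is at most $R$. Summing over the $|S|$ values of $j$ yields
\[
\|\cD_0\gamma_S\|_a \le |S|\,R\,e^{a|S|}e^{-2a} = R e^{-2a}\|\cN\gamma_S\|_a .
\]

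The main point requiring care is that the distinct pairs $(i,j)$ may map to the same output string. Concretely, $(i,j)$ and $(j,i)$ both yield $S\setminus\{i,j\}$. Using the triangle inequality handles this, at the cost of possibly double counting. That only loosens the bound, and the estimate above already counts ordered pairs, so the bound is unaffected.

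I also need to double-check the stated anticommutator formula, in particular that the diagonal terms $i=j$ vanish on even strings. Here $\{[\gamma_j,\gamma_S],\gamma_j\}$ is zero because $[\gamma_j,\gamma_S]$ is an odd string containing no $\gamma_j$, which anticommutes with $\gamma_j$. The formula is taken as given from the text, so this is only a sanity check.

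Finally, the limit is a one-sided derivative of a convex function of $t$. The inequality $\|(I+t\cK_0)X\|_a\le \|X\|_a - t\|\cN X\|_a + tRe^{-2a}\|\cN X\|_a$ holds for all small $t>0$. Dividing by $t$ and letting $t\to 0^+$ gives the claim.
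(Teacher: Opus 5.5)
Your proposal is correct and is essentially the paper's proof. Both arguments bound $\cD_0\gamma_S$ string by string, using the drop in degree to $|S|-2$ together with $\sum_{i,j\in S}|G_{ij}|\le |S|R$, and then pass to a general even $X$ by the triangle inequality; you only separate off the diagonal $-t\cN$ part before doing so, read the limit as $t\to0^{+}$ as the paper implicitly does, and invoke Hermiticity of $G$, which summing over $j$ for fixed $i$ would make unnecessary.
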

\begin{proof}
By the considerations above, we know that if $X=\gamma_S$ is a Majorana monomial, then
\[
(I+t\cK_0)(\gamma_S) = (1-t|S|)\gamma_S - t\sum_{i,j \in S} G_{ij} \phi^S_{i,j} \gamma_{S\setminus \{i,j\}}
\]
where $\abs{\phi^S_{i,j}}=1$, and thus
\begin{equation}
\label{eq:decay00}
\begin{aligned}
\|(I+t\cK_0)(\gamma_S)\|_{a} &\leq e^{a|S|} (1-t|S|)  +  t e^{a(|S|-2)} \sum_{i,j \in S} |G_{ij}|\\
&\leq e^{a|S|} (1-t|S|) +  t e^{a(|S|-2)} R |S|
\end{aligned}
\end{equation}
where we assumed $t$ small enough so that $(1-t\abs{S}) \geq 0$ and used the fact that
\[
\sum_{i,j \in S} |G_{ij}| \leq |S| R
\]
by definition of $R$.
Now if $X$ is an arbitrary even observable, by expanding $X = \sum_{S} x_S \gamma_S$ in the basis of Majorana strings, we immediately get from \eqref{eq:decay00}
\[
\begin{aligned}
\|(I+t\cK_0)(X)\|_{a} &\leq \sum_{S} |x_S| e^{a|S|} (1-t|S|+te^{-2a} |S| R)\\
&= \|X\|_{a} + t(-1 + e^{-2a}R) \|\cN X\|_{a}.
\end{aligned}
\]
This completes the proof.
\end{proof}

The following proposition will also be useful. The proof is very similar to the proposition above, so we omit it.
\begin{proposition}
\label{prop:boundK0}
Consider the pseudo-Lindbladian $\cK_0$ defined in \eqref{eq:K0app}. Then for any $a \geq 0$, $\|\cK_0 X\|_a \leq (1+Re^{-2a}) \|\cN X\|_a$ where $R$ is as defined in \eqref{eq:defR}.
\end{proposition}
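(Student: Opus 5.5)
The plan is to bound $\|\cK_0 X\|_a$ monomial by monomial and then sum, as in the proof of Proposition~\ref{prop:decayH01}, but estimating $\cK_0 X$ itself rather than the perturbation $(I+t\cK_0)X$. Start from the decomposition \eqref{eq:K0app}, $\cK_0 = -\cN - \cD_0$. By the triangle inequality,
\[
\|\cK_0 X\|_a \leq \|\cN X\|_a + \|\cD_0 X\|_a ,
\]
so it suffices to show $\|\cD_0 X\|_a \leq R e^{-2a}\|\cN X\|_a$ for every even $X$.

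For a single even monomial $X=\gamma_S$, use the identity recorded just before Proposition~\ref{prop:decayH01}. It gives
\[
\cD_0(\gamma_S) = \sum_{i\neq j\in S} G_{ij}\,\phi^S_{i,j}\,\gamma_{S\setminus\{i,j\}}
\]
(the factor $4$ cancels the prefactor $1/4$), with $|\phi^S_{i,j}|=1$. Each output monomial has degree $|S|-2$. Applying the triangle inequality in $\|\cdot\|_a$ and the row-sum bound $\sum_{i,j\in S}|G_{ij}|\leq |S|R$ then yields
\[
\|\cD_0\gamma_S\|_a \leq e^{a(|S|-2)}|S|R = Re^{-2a}\,\|\cN\gamma_S\|_a .
\]

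For general even $X=\sum_S x_S\gamma_S$, expand $X$ in the Majorana basis, apply the triangle inequality, and use the monomial bound term by term. Since
\[
\sum_S |x_S|\,|S|\,e^{a|S|} = \|\cN X\|_a ,
\]
this gives $\|\cD_0 X\|_a\leq Re^{-2a}\|\cN X\|_a$, and combining with the first step proves $\|\cK_0 X\|_a \leq (1+Re^{-2a})\|\cN X\|_a$.

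The only point needing care is the monomial identity for $\{[\gamma_j,\gamma_S],\gamma_i\}$. In particular, the diagonal terms $i=j$ must vanish on even strings, which is where the restriction to the even subalgebra enters. Everything else is bookkeeping, so I expect no real obstacle. The variant $\|\cN\cK_0 X\|_a\leq(1+Re^{-2a})\|\cN^2X\|_a$, used in Lemma~\ref{lem:boundK2}, follows identically by weighting each monomial with $|S|e^{a|S|}$ and noting that $\cD_0$ lowers degree, so $(|S|-2)\leq|S|$.
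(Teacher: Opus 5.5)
Your argument is correct and is exactly the proof the paper omits as ``very similar'' to Proposition~\ref{prop:decayH01}: expand in even monomials, use $\cD_0(\gamma_S)=\sum_{i\neq j\in S}G_{ij}\phi^S_{i,j}\gamma_{S\setminus\{i,j\}}$ together with $\sum_{i,j\in S}|G_{ij}|\leq |S|R$, and sum with the triangle inequality. One small correction to your closing remark: evenness is not needed mainly to kill the diagonal terms, which vanish anyway because $G$ is Hermitian and purely imaginary and so has zero diagonal; it is needed because for odd $S$ one has $[\gamma_j,\gamma_S]\neq0$ exactly when $j\notin S$, so $\cD_0$ would raise the degree by two via a sum over $i,j\notin S$, and the bound would fail.
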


\begin{remark}[Bound on $\|B_j\|_{\Fone}$]
\label{rem:boundBj}
From the considerations above (in particular \eqref{eq:matrixL}), we see that the pseudo-jump operators $B_j$ from \eqref{eq:pseudo-jumpA} can be explicitly computed as
\[
B_j = \sum_{i=1}^N \sech(\ci A\beta/2)_{ij} \gamma_i
\]
where $A \in \RR^{N\times N}$ is the matrix that defines the quadratic Hamiltonian $H_0$. As such, we see that
\begin{equation}
    \label{eq:boundBj}
    \|B_j\|_{\Fone} \leq \sum_{i=1}^N |\sech( \ci A \beta/2)|_{ij} \leq \|\sech( \ci A \beta/2)\|_{\row}.
\end{equation}
In particular, under Assumption \ref{assumption:diagonal-hamiltonian-fermion}, we have $\|B_j\|_{\Fone} \leq \F(\beta)$.
\end{remark}

\subsection{Bound on $\F(\beta)$ for $H_0$ on a $D$-dimensional lattice}
\label{app:lattice}

For a matrix $M$, write
\[
    \|M\|_{\row}:=\max_i\sum_j|M_{ij}|.
\]

\begin{lemma}[Row-sum bounds from Chebyshev expansion]
\label{lem:chebyshev-row-bound}
Let $L$ be a $N\times N$ Hermitian matrix with
$\|L\|_{\mathrm{op}}\le J$, where $J>0$.
Let $G$ be its support graph, and define $V(k) = \sup_{i \in [N]} |B_G(i,k)|$ where $B_G(i,k)$ is the ball in $G$ of radius $k$ centered at $i$.
Suppose $f:[-J,J]\to\mathbb \RR$ has the Chebyshev expansion \begin{equation}
\label{eq:chebf}
    f(x)=\sum_{k\ge0}a_kT_k(x/J).
\end{equation}
Then
\begin{equation}
    \|f(L)\|_{\row}
    \le \sum_{k\ge0}|a_k|\sqrt{V(k)}.
    \label{eq:chebyshev-row-bound}
\end{equation}
\end{lemma}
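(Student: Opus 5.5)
The plan is to expand $f(L)$ through the Chebyshev series and bound each term separately. By linearity and the triangle inequality for $\|\cdot\|_{\row}$ (it is a norm),
\[
\|f(L)\|_{\row}\le\sum_{k\ge0}|a_k|\,\|T_k(L/J)\|_{\row}.
\]
So it suffices to show $\|T_k(L/J)\|_{\row}\le\sqrt{V(k)}$ for every $k$. If the series is only conditionally convergent, I would first apply the bound to partial sums and then pass to the limit; the right-hand side of \eqref{eq:chebyshev-row-bound} can be assumed finite, since otherwise there is nothing to prove.

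To bound a single term, set $M=T_k(L/J)$ and use two properties of it.

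\textbf{Support.} $T_k$ is a polynomial of degree $k$, so $M$ is a linear combination of $L^0,\dots,L^k$. The entry $(L^p)_{ij}$ is a sum over walks of length $p$ from $i$ to $j$ in the support graph $G$. It therefore vanishes unless $\mathrm{dist}_G(i,j)\le p\le k$. Hence row $i$ of $M$ is supported in $B_G(i,k)$, which has at most $V(k)$ elements.

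\textbf{Operator norm.} $L/J$ is Hermitian with spectrum in $[-1,1]$, and $|T_k(x)|\le1$ on $[-1,1]$. By the spectral theorem, $\|M\|_{\mathrm{op}}\le1$. In particular each row $r_i$ of $M$ has Euclidean norm $\|r_i\|_2=\|M^*e_i\|_2\le\|M\|_{\mathrm{op}}\le1$.

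Combining the two with Cauchy–Schwarz restricted to the support of row $i$ gives
\[
\sum_j|M_{ij}|\le\sqrt{|B_G(i,k)|}\,\|r_i\|_2\le\sqrt{V(k)}.
\]
Taking the maximum over $i$ yields $\|T_k(L/J)\|_{\row}\le\sqrt{V(k)}$, and summing over $k$ gives \eqref{eq:chebyshev-row-bound}.

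The argument is elementary. The only point needing care is the support claim, in particular whether $B_G(i,k)$ counts $i$ itself. The ball of radius $k$ includes its center, which accounts for the diagonal terms coming from $L^0=I$. Nothing deeper is required: no decay of the $a_k$ is used here, and that decay enters only later, when this lemma is applied to $\sech$ and $\tanh$ on the lattice.
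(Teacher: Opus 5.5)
Your proposal is correct and follows the paper's proof essentially step for step. Both bound each term via the row support of $T_k(L/J)$ in $B_G(i,k)$ and $\|T_k(L/J)\|_{\mathrm{op}}\le 1$, apply Cauchy--Schwarz on each row to get $\|T_k(L/J)\|_{\row}\le\sqrt{V(k)}$, and finish with the triangle inequality. Your added partial-sum remark about convergence is a harmless refinement that the paper leaves implicit.
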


\begin{proof}
Since $T_k$ is a polynomial of degree $k$, the $i$th row of
$T_k(L/J)$ is supported on $B_G(i,k)$. Moreover, the spectral
theorem and the bound $|T_k(x)|\le1$ for $x\in[-1,1]$ give $\|T_k(L/J)\|_{\mathrm{op}}\le1$.
Consequently, for any $i\in [N]$, by Cauchy--Schwarz we have
\[
    \sum_j |[T_k(L/J)]_{ij}|
    \le
    \sqrt{|B_G(i,k)|}
    \left(\sum_j|[T_k(L/J)]_{ij}|^2\right)^{1/2}
    \le \sqrt{V(k)}.
\]
Thus $\|T_k(L/J)\|_{\row}\le\sqrt{V(k)}$.
The triangle inequality
therefore proves \eqref{eq:chebyshev-row-bound}.
\end{proof}

\begin{corollary}
\label{cor:lattice-thermal-row-bounds}
Consider a quadratic Majorana Hamiltonian $H_0 = \ci/4 \sum_{1\leq i,j \leq N} A_{ij} \gamma_i \gamma_j$ where $A = -A^T$ is real, and $|A_{ij}| \leq 1$ for all $i,j$. If $A$ is supported on a $D$-dimensional square lattice, then $H_0$ satisfies Assumption \ref{assumption:diagonal-hamiltonian-fermion} and $\abs{\{j\neq i : A_{ij}\neq 0 \}} \leq 2D$ and $\F(\beta) \leq C_D (1+2D\beta)^{D/2+1}$ where $C_D$ only depends on $D$.
\end{corollary}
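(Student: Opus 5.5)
The plan is to apply Lemma~\ref{lem:chebyshev-row-bound} to $L=\ci A$, which is Hermitian. The functions to expand are $f(x)=\sech(\beta x/2)$ and $f(x)=\tanh(\beta x/2)$. Writing $\sech(\ci A\beta/2)=\sech(\beta L/2)$ with $L=\ci A$ puts both quantities in Assumption~\ref{assumption:diagonal-hamiltonian-fermion} into exactly this form.

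First I check the lattice combinatorics. On a $D$-dimensional square lattice each site has $2D$ neighbours, so $|\{j\neq i: A_{ij}\neq0\}|\le 2D$, which is the stated degree bound. Since $A$ is skew-symmetric, $A_{ii}=0$. Because $|A_{ij}|\le1$, the Gershgorin (row-sum) bound gives $\|L\|_{\mathrm{op}}\le\|A\|_{\row}\le 2D=:J$. Balls of radius $k$ in $\ZZ^D$ contain $V(k)\le (2k+1)^D$ points. Hence $\sqrt{V(k)}\le (2k+1)^{D/2}$.

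Next I bound the Chebyshev coefficients of $f$ on $[-J,J]$. Both $\sech(\beta J y/2)$ and $\tanh(\beta J y/2)$, as functions of $y\in[-1,1]$, are analytic in the strip $|\mathrm{Im}\, y|<\pi/(\beta J)$, since their poles sit at $\beta J y/2=\ci\pi/2+\ci\pi\mathbb Z$. I will use the Bernstein ellipse estimate: if $f$ is analytic and bounded by $K$ inside the ellipse $E_\varrho$, then $|a_k|\le 2K\varrho^{-k}$. I pick the ellipse with semi-minor axis $b=\pi/(2\beta J)$, half the distance to the nearest pole. On that ellipse $|\mathrm{Im}(\beta Jy/2)|\le \pi/4$, so both functions stay bounded by an absolute constant $K$; for $\tanh$ the real part can be large, but the modulus is still bounded away from the poles. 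This choice gives $\varrho=b+\sqrt{1+b^2}\ge 1+c/(1+\beta J)$ for an absolute constant $c>0$. For large $\beta J$ the relevant regime has $b$ small, so $\log\varrho\approx b$, which is of order $1/(\beta J)$.

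Finally I sum the series. By \eqref{eq:chebyshev-row-bound},
\[
\|f(L)\|_{\row}\le 2K\sum_{k\ge0}(2k+1)^{D/2}\varrho^{-k}\lesssim_D (1-\varrho^{-1})^{-(D/2+1)}\lesssim_D (1+\beta J)^{D/2+1}.
\]
This uses the standard estimate $\sum_k k^{p}r^k\lesssim_p (1-r)^{-(p+1)}$. With $J=2D$ the right-hand side becomes $C_D(1+2D\beta)^{D/2+1}$, and it is at least $1$ after enlarging $C_D$, as required of $\F$. This verifies both \eqref{eq:sechboundassumption} and \eqref{eq:tanhboundassumption}.

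The main obstacle is the uniform bound on $|\tanh|$ and $|\sech|$ on the Bernstein ellipse, including the constant tracking. The argument needs the exact geometry of the ellipse together with the pole locations, and it must hold uniformly in $\beta$. For small $\beta$ the ellipse is large, but then $\mathrm{Im}(\beta J y/2)$ is still at most $\pi/4$ by construction, so the bound holds there too.
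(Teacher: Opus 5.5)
Your proposal is correct and is essentially the paper's proof up to a rescaling: you apply Lemma~\ref{lem:chebyshev-row-bound} to $L=\ci A$ with $J=2D$ and put $\beta$ into $f$, whereas the paper takes $L=\ci\beta A$ with $J=2D\beta$. Your Bernstein parameter $\log\varrho=\arsinh(\pi/(2\beta J))$ coincides with the paper's $\alpha$, and you use the same volume bound $V(k)\lesssim_D(1+k)^D$ and the same summation $\sum_k(1+k)^{D/2}\varrho^{-k}\lesssim_D(1+2D\beta)^{D/2+1}$.

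The step you flag as the main obstacle is routine. On the strip $|\mathrm{Im}\,z|\le\pi/4$ one has $|\cosh z|^2=\sinh^2(\mathrm{Re}\,z)+\cos^2(\mathrm{Im}\,z)\ge 1/2$ and $\sin^2(\mathrm{Im}\,z)\le\cos^2(\mathrm{Im}\,z)$. Hence $|\sech z|\le\sqrt2$ and $|\tanh z|\le 1$, which gives exactly the constant $\sqrt2$ the paper uses.
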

\begin{proof}
We need to bound $\|f(\ci A \beta )\|_{\text{row}}$ for both functions $f(x)=\tanh(x/2)$, and $f(x)=\sech(x/2)$. One can show that both functions have a Chebyshev expansion \eqref{eq:chebf} on $[-J,J]$, for any $J>0$, that satisfies
\[
    |a_0|\leq 1,
    \qquad
    |a_k|\leq 2\sqrt{2}\,e^{-\alpha k}
    \quad (k\geq 1),
    \qquad
    \alpha=\operatorname{arsinh}
        \left(\frac{\pi}{2 J}\right).
\]
Indeed, both functions are analytic on and inside the Bernstein
ellipse with parameter $e^\alpha$, where their absolute values
are bounded by $\sqrt{2}$. The estimate therefore follows from
the standard bound on Chebyshev coefficients \cite{trefethen2019approximation}.
For a $D$-dimensional lattice, we have $V(k) \leq C_D(1+k)^D$ for some constant $C_D>0$. Furthermore $\|\ci \beta A\|_{\mathrm{op}}\leq 2D\beta$.
Applying Lemma~\ref{lem:chebyshev-row-bound} with $J=2D\beta$, we obtain for both functions
\[
    \|f(\ci A \beta)\|_{\row}
    \leq
    1+2\sqrt{2C_{D}}
       \sum_{k\geq 1}(1+k)^{D/2}e^{-\alpha k}
    \leq
    C'_D(1+2D\beta)^{D/2+1}.
\]
\end{proof}

\end{document}